\newcommand{\Mon}{\mathsf{Mon}}
\newtheorem{theorem}{Theorem}[section]
\newtheorem{lemma}[theorem]{Lemma}
\newtheorem{definition}[theorem]{Definition}
\newtheorem{corollary}[theorem]{Corollary}
\newtheorem{fact}[theorem]{Fact}
\newtheorem{remark}[theorem]{Remark}
\newtheorem{claim}[theorem]{Claim}
\newcommand{\bi}{\boldsymbol{i}}
\newcommand{\bx}{\boldsymbol{x}}
\newcommand{\bX}{\boldsymbol{X}}
\newcommand{\bZ}{\boldsymbol{Z}}
\newcommand{\gap}{\mathsf{gap}}
\newcommand{\bg}{\boldsymbol{g}}
\newcommand{\wh}{\widehat}
\newcommand{\wt}{\widetilde}
\def\bx{{\bf x}}
\newcommand{\ignore}[1]{}
\newcommand{\eps}{\varepsilon}
\newcommand{\supp}{\mathsf{supp}}
\newcommand{\poly}{\mathsf{poly}}
\DeclareMathOperator*{\argmin}{arg\,min}
\DeclareMathOperator*{\E}{\mathbb{E}}
\newcommand{\FF}{\mathcal{F}}
\newcommand{\fF}{\mathbf{F}}
\newcommand{\midd}{\mathsf{Apx}}
\newcommand{\sign}{\mathsf{sign}}
\newcommand{\nulll}{\mathsf{null}}
\newenvironment{proofof}[1]{\bigskip \noindent {\it Proof of #1.}\quad }
{\qed\par\vskip 4mm\par}
\newcommand{\define}[4][ignore]{%
  \ifstrequal{#1}{ignore}{}{
  \@namedef{thmtitle@#2}{#1}}%
  \@namedef{thm@#2}{#4}%
  \@namedef{thmtypen@#2}{lemma}%
  \newtheorem{thmtype@#2}[theorem]{#3}%
  \newtheorem*{thmtypealt@#2}{#3~\ref{#2}}%
}
\newcommand{\state}[1]{%
  \@namedef{curthm}{#1}
  \@ifundefined{thmtitle@#1}{
  \begin{thmtype@#1}
    }{
  \begin{thmtype@#1}[\@nameuse{thmtitle@#1}]
  }
    \label{#1}
    \@nameuse{thm@#1}
  \end{thmtype@#1}
  \@ifundefined{thmdone@#1}{
  \@namedef{thmdone@#1}{stated}%
  }{}
}
\newcommand{\restate}[1]{%
  \@namedef{curthm}{#1}
  \@ifundefined{thmtitle@#1}{
    \begin{thmtypealt@#1}
    }{
  \begin{thmtypealt@#1}[\@nameuse{thmtitle@#1}]
  }
    \@nameuse{thm@#1}
  \end{thmtypealt@#1}
  \@ifundefined{thmdone@#1}{
  \@namedef{thmdone@#1}{stated}%
  }{}
}
\newcommand{\thmlabel}[1]{
  \@ifundefined{thmdone@\@nameuse{curthm}}{\label{#1}
    }{\tag*{\eqref{#1}}}
}
\begin{document}

\title{Reconstruction under outliers for Fourier-sparse functions}
\author{
   Xue Chen\\\
  \texttt{xue.chen1@northwestern.edu}\\
  Northwestern University
 \and   Anindya De\thanks{Supported by NSF grant CCF 1926872 (transferred from CCF-1814706)}\\
  \texttt{anindyad@seas.upenn.edu}\\
  University of Pennsylvania
}

\begin{titlepage}

\maketitle

\begin{abstract}
We consider the problem of learning an unknown $f$  with a sparse Fourier spectrum in the presence of outlier noise. In particular, the 
algorithm has access to a noisy oracle for (an unknown) $f$ such that (i) the Fourier spectrum of $f$ is $k$-sparse; (ii) at any query point $x$, the oracle 
returns $y$ such that with probability $1-\rho$, 
 $|y-f(x)| \le \epsilon$.  However, with probability $\rho$, the error $y-f(x)$ can be arbitrarily large.

 We study Fourier sparse functions over both the discrete cube $\{0,1\}^n$ and the torus $[0,1)$ and for both these domains, we design efficient algorithms which can tolerate any $\rho<1/2$ fraction of outliers. We note that the analogous problem for low-degree polynomials has recently been studied in several works~\cite{AK03, GZ16, KKP17} and similar algorithmic guarantees are known in that setting. 
 
 While our main results pertain to the case where the location of the outliers, i.e., $x$ such that $|y-f(x)|>\epsilon$ is randomly distributed, we also study the case where the outliers are adversarially located. In particular, we show that over the torus, assuming that the Fourier transform satisfies a certain \emph{granularity} condition, there is a sample efficient algorithm to tolerate $\rho =\Omega(1)$ fraction of outliers and further, that this is not possible without such a granularity condition. Finally, while not the principal thrust, our
 techniques also allow us non-trivially improve on learning  low-degree functions $f$ on the hypercube in the presence of adversarial outlier noise. 
 
 Our techniques combine a diverse array of tools from compressive sensing, sparse Fourier transform, chaining arguments and complex analysis.

 \end{abstract}
\thispagestyle{empty}

\end{titlepage}

\section{Introduction}~\label{sec:intro}
The starting point of this paper is the following fundamental  algorithmic problem --   there is an unknown signal (or function) $f: \mathcal{D} \rightarrow \mathbb{C}$ (on some known domain $\mathcal{D}$). The algorithm can query $f(x)$ for any $x \in \mathcal{D}$ and the goal is to  recover $f$ with query complexity much smaller than $|\mathcal{D}|$. Can \emph{structural knowledge} about $f$ permit such efficient recovery algorithms? To motivate this, we consider two such instances of \emph{structural knowledge}. 
 Throughout this paper, our domain $\mathcal{D}$ is one of the following: (i) the $n$-dimensional hypercube $\{0,1\}^n$ or (ii) the one-dimensional torus $\mathbb{R}/\mathbb{Z}$ which is equivalently $[0,1)$. 
~\\
 \textbf{Low-degree polynomials:} Let us assume that $f$ is a degree-$d$ polynomial and $\mathcal{D}$ is either $[0,1)$ or $\{0,1\}^n$. Recovering $f$ is then the same as polynomial interpolation and can be done by making $N_d$ queries and solving a linear system where $N_d = d+1$ for the torus and $N_d = \binom{n}{\le d}$ for the discrete cube (Recall that $\binom{n}{\le d}$ is defined to be $\sum_{0 \le  j \le d} \binom{n}{j}$). 
 ~\\
 \textbf{Fourier sparse signals:} The second kind of structural assumption that has been extensively studied
 in literature is \emph{Fourier sparsity}. Namely, assume that the Fourier transform of $f$ (denoted by $\widehat{f}$) is $k$-sparse, i.e., non-zero in at most $k$ positions. Compared to the case when $f$ is low-degree, this case turns out to be significantly more difficult. When $\mathcal{D} = 
\{0,1\}^n$, 
 the seminal work of Goldreich and Levin~\cite{GoldreichLevin:89} gives an efficient recovery algorithm with $\mathsf{poly}(k, n)$ query complexity. For the torus, the earliest work on this problem dates back to Prony (from 1795). In theoretical computer science,
 this problem was first studied  by Gilbert \emph{et al.}~\cite{gilbert2002near} who achieved a $\mathsf{poly}(k,\log F)$ query and time complexity for this problem  -- here $F$ is the bandlimit, i.e., all the non-zero frequencies of $f$ are assumed to lie in $[-F, \ldots, F]$. 
  In fact, this work was also the starting point of a rich line of work on sparse FFTs \cite{gilbert2002near,AGS03,GMS05,HIKP12b,IK14,Kapralov16,CKPS17,AKMMVZ18} where by now nearly optimal query and time complexity are known. 
 
  \subsection*{Robust recovery problems} 
  So far, the problem statement assumes that the oracle to $f: \mathcal{D} \rightarrow \mathbb{C}$ is noise free. However, from the vantage point of both intellectual interest and practical applications, it is important to consider the case where the oracle to $f$ is noisy. In particular, we are interested in a strong model of noise called \emph{outlier noise}. 
  \begin{definition} ~
  \newline
  {\textbf{Random outlier noise:}} An oracle  for $f: \mathcal{D} \rightarrow \mathbb{C}$ is said to have $(\rho, \epsilon)$-\emph{random outlier noise}   if on any input $x \in \mathcal{D}$, the oracle returns $y(x)$ such that 
  $|y(x) -f(x)| \le \epsilon$ with probability $1-\rho$. ~
  \newline
  {\textbf{Adversarial outlier noise:}}
  An oracle for $f: \mathcal{D} \rightarrow \mathbb{C}$  is said to have $(\rho, \epsilon)$-adversarial outlier noise  if $\Pr_{x \sim \mathcal{D}}[|y(x) -f(x)| \le \epsilon] \ge 1-\rho$. We recall that for a set $\mathcal{D}$, $x \sim \mathcal{D}$ means choosing an element of $\mathcal{D}$ uniformly at random. 
  \end{definition}
Let us call $x$ to be an \emph{outlier} (in either of the models above) if $|y(x) - f(x)|>\eps$  -- otherwise, we call the point an \emph{inlier}. 
  We clarify that in both the models above, the target function $f$ is fixed and unknown to the learner (and is not randomized in any sense). In the random outlier noise model, 
it is only the location of the outliers which are random. The value of the error, i.e., $y(x) - f(x)$ is adversarial, for both inliers and outliers.
 We now turn to a brief discussion of some prior work. 
  
  \textbf{Robust recovery for low-degree polynomials:} 
Arora and Khot~\cite{AK03} were the first to study the problem of robust recovery of polynomials under outlier noise. They worked in the \emph{random outlier noise model} with the domain $\mathcal{D}=[0,1)$. 
Here, they showed that even if we forego computational efficiency, $\rho<1/2$ is required for any non-trivial recovery ({even in the sense of list decoding}) and achieved a computationally efficient  algorithm when $\rho=0$. 
This was significantly improved in a recent work by Guruswami and Zuckerman~\cite{GZ16} who achieved computationally efficient algorithm for all $\rho \leq 1/\log d$ ($d$ is the degree of the polynomial). Finally, Kane, Karmalkar and Price~\cite{KKP17} improved this to obtain computationally efficient algorithms for any $\rho <1/2$. In the adversarial setting the information theoretic upper bound on $\rho$ is $O(1/d^2)$ and there are computationally efficient algorithms achieving this bound (see \cite{GZ16}).

\textbf{Robust recovery for Fourier sparse signals:} 
While the recovery of Fourier sparse signals has attracted much attention (in the context of sparse FFTs), robustness considerations have mainly been restricted to noise bounded in a norm such as $\ell_1$ \cite{Iwen2010,BCGLS,MI17} or $\ell_2$ \cite{gilbert2002near,GMS05,HIKP12b,IK14,Kapralov16}. However, to the best of our knowledge, the problem of recovery of such signals in the outlier model, i.e., the noise is not bounded in any $\ell_p$ norm has not been considered in the literature previously. 

The principal conceptual challenge in obtaining robust recovery results for Fourier sparse signals vis-a-vis low degree polynomials is that a degree-$k$ polynomial admits a sparse representation in terms of \emph{known basis elements}, i.e., monomials of degree at most $k$. In contrast, a $k$-Fourier sparse signal admits a sparse representation in terms of $k$ unknown Fourier characters. Note that whole set of Fourier characters has size $|\mathcal{D}| \gg k$. One of course can resort to an 
exhaustive enumeration over all $k$ subsets of Fourier characters -- however, such an enumeration is computationally prohibitive.


\paragraph{Our results}
The main results of this paper are efficient algorithms that recover  Fourier sparse signals under  random outliers. {In the results below, for quantities $x$ and $y$, when we say $x \lesssim y$, we mean that there is a constant $c>0$ such that $x \le c \cdot y$. For a function $f$ with the Fourier transform $\wh{f}$, let $\supp(\wh{f})$ denote the support of $\wh{f}$, i.e., the subset of its Fourier characters with non-zero coefficients.}

We first state our result for the Boolean cube.  

\begin{theorem}\label{thm:inf_boolean_rand_out}[Informal version of Theorem \ref{thm:main_random_boolean_cube}]
There exists an algorithm which given as input, sparsity parameter $k$ and additional parameters $\eta>0$, $0 \le \rho <1/2$ and input domain $\{0,1\}^n$, makes $\tilde{O}(k^2 n)$  queries and runs in time $\poly(k,n,\frac{1}{\eta})$ such 
 that for any $f(x)=\sum_{i=1}^k \wh{f}(\xi_i) (-1)^{\langle \xi_i, x \rangle}$ with each $|\wh{f}(\xi_i)| \ge \eta$, under  $(\rho, \eps )$ random outlier noise (with $\eps \lesssim \eta$), its output $g$ satisfies
$$
\supp(\wh{g})=\supp(\wh{f}) \text{ and } \E_x[|g(x)-f(x)|] \lesssim \eps, \textit{ with probability } 0.99. $$
In particular, this implies $|\wh{g}(\xi)-\wh{f}(\xi)| \lesssim \eps$ for every $\xi$.
\end{theorem}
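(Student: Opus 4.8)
The plan is to decouple support recovery from coefficient estimation, and to dispose of the outliers at the very first step by exploiting that the noise model supplies fresh randomness at every query. First I would observe that re-querying any fixed point $x$ a total of $O(\log(1/\delta))$ times and taking the coordinatewise median of the responses yields a value $\wt f(x)$ with $|\wt f(x)-f(x)|\lesssim\eps$ with probability $1-\delta$: since $\rho$ is bounded away from $1/2$, a strict majority of the repetitions are inliers, and the median of a multiset with a strict majority inside an $\eps$-ball lies in that ball (handle the imaginary part separately). Taking $\delta$ polynomially small and union bounding over the $\Ot(k^2n)$ points the algorithm ever touches, we may henceforth pretend we have an oracle for $f$ with worst-case $\ell_\infty$ error at most $\eps\lesssim\eta$, at the cost of only a logarithmic overhead in queries; this is the only place where $\rho<1/2$ is used. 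The remaining task is to robustly learn a $k$-Fourier-sparse function from an $\ell_\infty$-noisy oracle, but with care: the permitted noise level $\eps$ is comparable to the coefficient magnitude $\eta$, which is too large for a black-box use of Kushilevitz--Mansour/Goldreich--Levin weight estimation, whose product-based estimators incur bias $\Theta(\eps\cdot\|\wh f\|_1)$.

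For support recovery I would first reduce the dimension. Pick a uniformly random affine map $t\mapsto x_0\oplus Bt$ with $B\in\mathbb{F}_2^{n\times m}$ and $m=2\log k+O(1)$, and consider $g(t):=f(x_0\oplus Bt)=\sum_{i=1}^k\bigl(\wh f(\xi_i)(-1)^{\langle\xi_i,x_0\rangle}\bigr)(-1)^{\langle B^{\!\top}\xi_i,\,t\rangle}$. A union bound over the $\binom k2$ pairs shows that with probability $0.99$ the projected frequencies $B^{\!\top}\xi_1,\dots,B^{\!\top}\xi_k$ are distinct and nonzero, so $\wh g$ is exactly $k$-sparse with $\supp(\wh g)=\{B^{\!\top}\xi_i\}$ and surviving coefficients $\wh g(B^{\!\top}\xi_i)=\wh f(\xi_i)(-1)^{\langle\xi_i,x_0\rangle}$, still of magnitude $\ge\eta$. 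Since $m$ is small we can query the whole cube $\{0,1\}^m$ ($\poly(k)$ points), and because $|\wh{(g+e)}(\zeta)-\wh g(\zeta)|\le\|e\|_\infty\le\eps<\eta/2$ for every $\zeta$, a single Hadamard transform followed by thresholding at $\eta/2$ recovers $\supp(\wh g)$ exactly together with $\eps$-accurate values of the surviving coefficients. To lift the $B^{\!\top}\xi_i$ back to the true $\xi_i\in\{0,1\}^n$, I would repeat this with $O(n/\log k)$ independent restrictions sharing a common pool of $T=O(\log k)$ random anchors $x_0^{(1)},\dots,x_0^{(T)}$: the signature $\bigl(\wh f(\xi_i)(-1)^{\langle\xi_i,x_0^{(s)}\rangle}\bigr)_{s\le T}$ is, with probability $0.99$, distinct across $i$ and is invariant under the choice of restriction, hence it matches the projected-frequency lists across restrictions; collecting $B^{(\ell)\top}\xi_i$ over the restrictions that are ``good'' for $\xi_i$ pins down $\xi_i$ once these $B^{(\ell)}$ jointly have full column rank, which again holds w.h.p. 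This is a noise-robust re-derivation of Goldreich--Levin, and the total query count is $\poly(k)\cdot O(n/\log k)\cdot\Ot(1)=\Ot(k^2n)$.

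Having recovered $\supp(\wh f)=\{\xi_1,\dots,\xi_k\}$ exactly, the final step is to estimate the coefficients accurately enough that the \emph{aggregate} error $\E_x[|g(x)-f(x)|]$ is $O(\eps)$ rather than $\sqrt k\,\eps$. For this I would draw $M=O(k\log n)$ fresh uniform points, form the $M\times k$ design matrix $A$ with $A_{ji}=(-1)^{\langle\xi_i,x_j\rangle}$, obtain $\eps$-accurate values on these points via the median-of-repetitions oracle, and solve the resulting least-squares problem (alternatively one can run an $\ell_1$/robust regression directly on the raw responses and skip the repetition step). Matrix concentration (a chaining bound if one wants uniformity over all $k$-subsets, which is needed in the fully adversarial variant) gives $\|A^\ast A/M-I_k\|\le\tfrac12$, so $A$ is well conditioned and the recovered coefficient vector $\wh c$ satisfies $\|\wh c-\wh f\|_2=\|(A^\ast A)^{-1}A^\ast e\|_2\lesssim\|e\|_2/\sqrt M\lesssim\eps$. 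Outputting $g=\sum_i\wh c_i(-1)^{\langle\xi_i,\cdot\rangle}$ then gives $\E_x[|g-f|]\le\|g-f\|_2=\|\wh c-\wh f\|_2\lesssim\eps$ by Parseval and Cauchy--Schwarz, and $|\wh g(\xi)-\wh f(\xi)|=|\E_x[(g-f)(x)(-1)^{\langle\xi,x\rangle}]|\le\E_x[|g-f|]\lesssim\eps$ for every $\xi$.

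The step I expect to be the main obstacle is making support recovery genuinely robust at noise level $\eps\lesssim\eta$ rather than the much smaller $\eps\ll\eta^2/k$ that a naive weight-estimation approach would demand; the dimension-reduction-plus-full-Hadamard-transform trick is what buys this, and the delicate part is checking that the random restrictions simultaneously preserve sparsity, distinctness, and coefficient magnitudes for all $k$ frequencies, and that the signatures permit a consistent cross-restriction matching even when several of the $\wh f(\xi_i)$ coincide (which forces the use of multiple random anchors). A secondary obstacle arises only if one refuses to use repetition and instead tolerates $\rho<1/2$ adversarial outliers directly: there one needs a uniform, chaining-based control of the character design matrix restricted to arbitrary $k$-subsets of the coordinates, together with a robust (e.g.\ $\ell_1$) regression analysis in place of least squares.
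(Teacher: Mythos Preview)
Your proposal has a genuine gap at the very first step: the median-of-repetitions trick assumes the oracle supplies fresh randomness on each query of the same point, but that is not the paper's model. The paper is explicit that ``the value of the error, i.e., $y(x)-f(x)$ is adversarial'' and ``only the location of the outliers is random''; in particular $y(\cdot)$ is a fixed function and re-querying $x$ returns the same $y(x)$. This is precisely why the torus result needs the query points to be $1/\poly(k,\log F)$ apart: if one could re-query (or query an arbitrarily close point) with independent outlier status, the problem would collapse to bounded-$\ell_\infty$ noise, as you observe. Over the cube there is no continuous structure to exploit, so you cannot simulate repetition by querying nearby points either.

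Once the median step is removed, your Hadamard-transform support recovery breaks: a single outlier among the $2^m$ queries in a restriction contributes $\Theta(2^{-m})\cdot|e|$ to \emph{every} Fourier coefficient $\wh{g}(\zeta)$, and $|e|$ is unbounded, so thresholding at $\eta/2$ is meaningless. The outliers have to be handled \emph{inside} the reduced problem, not preprocessed away.

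What the paper does is exactly what you relegate to a ``secondary obstacle'': after the same dimension reduction (random invertible $A\in\mathbb{F}_2^{n\times n}$, restrict to the first $\ell=2\log k+O(1)$ coordinates), it runs an $\ell_1$-type LP on the $2^\ell=\poly(k)$-size domain to recover the reduced Fourier coefficients under $(\rho,\eps)$ random outliers. The correctness of the LP rests on an $\ell_1$ concentration bound for the family $\{h:\|\wh h\|_1\le 2\sqrt{k}\,\|\wh h\|_2\}$, proved by a chaining argument (Maurey's empirical method, as in Rudelson--Vershynin). For the lifting step the paper uses something simpler than your signature matching: for each coordinate $i\in[n]$ it runs the LP twice with shifts $b$ and $b'=b+e_i$; since $\wh z(\zeta)=(-1)^{\langle b,\xi_j\rangle}\wh f(\xi_j)$ and $\wh{z'}(\zeta)=(-1)^{\langle b+e_i,\xi_j\rangle}\wh f(\xi_j)$ for the isolated frequency $\xi_j$ hashing to $\zeta$, comparing the signs reads off bit $i$ of $\xi_j$ directly.
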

We note that the constraints, $\rho < 1/2$ and $\eps \lesssim \eta$ are necessary.  To see this, (i) note that even for the  subcase of low-degree  polynomials, $\rho$ needs to be less than $1/2$ (observed by \cite{AK03}). (ii) Similarly, consider a $1$-Fourier sparse function $f$ such that $|\wh{f}(\xi)| = \eps$ if $\xi = \xi_0$ (for some specific $\xi_0$) and $0$ otherwise. Under $(0, \epsilon)$-outlier noise, $f$ is the same as the function $g(x)$ which is $0$ everywhere, thus making it impossible to distinguish $f$ and $g$. Consequently, we require $\eps \lesssim \eta$.


The next theorem is an analogous result for the torus $[0,1)$. Unlike the domain $\{0,1\}^n$, the torus is infinite and thus has infinitely many Fourier characters. So, it  becomes necessary to assume that all the frequencies appearing in the spectrum of $f$ come from some interval $[-F, F]$ (referred to as the bandlimit of $f$). 
\begin{theorem}\label{thm:inf_periodic_rand_out}[Informal version of Theorem \ref{thm:periodic_FFT_rand_out}]
There exists an algorithm which given as input, sparsity parameter $k$, bandlimit $F$ and additional parameters $\eta>0$, $0 \le \rho <1/2$ and input domain $[0,1)$, 
makes $\tilde{O}(k^2 \log F)$ queries and runs in $\poly(k, \log F, 1/\eta)$ time such that for any $f(t)=\sum_{j=1}^k \wh{f}(\xi_j) \cdot e^{2 \pi \bi \xi_j \cdot t}$ with each $\xi_j \in [-F,F]$ and $|\wh{f}(\xi_j)| \ge \eta$, under the $(\rho, \eps )$ random outlier noise of $\eps \lesssim \eta$, its output $g$ satisfies 
$$
\supp(\wh{g})=\supp(\wh{f}) \text{ and } \E_x[|g(x)-f(x)|] \lesssim \eps, \textit{ with probability } 0.99.$$
In particular, this implies $|\wh{g}(\xi)-\wh{f}(\xi)| \lesssim \eps$ for every $\xi$. Further, with high probability all the query points of the algorithm are $1/\poly(k, \log F)$ apart from each other. 
\end{theorem}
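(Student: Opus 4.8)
The plan is to split the task into two sub-problems: (i) \emph{frequency recovery}, i.e.\ identifying $\supp(\wh{f}) = \{\xi_1,\dots,\xi_k\}$ exactly, and (ii) \emph{coefficient recovery}, i.e.\ estimating $\wh{f}(\xi_j)$ to additive error $O(\eps)$ once the frequencies are known. Step (ii) is the easier one and I would handle it first in outline: once we know the $k$ characters, fitting the coefficients is an overdetermined linear regression in $k$ unknowns, and since a $1-\rho$ fraction of samples are within $\eps$ of the true values with $\rho<1/2$, a robust (e.g.\ least-absolute-deviation / median-based) estimator applied to $\poly(k)$ random samples recovers each $\wh{f}(\xi_j)$ up to $O(\eps)$; the condition number of a random Vandermonde-type design matrix on well-separated evaluation points is controlled using the compressive-sensing / chaining machinery the paper advertises, and the separation of query points claimed at the end of the statement follows by taking the query set to be a random arithmetic-progression-like grid and a union bound. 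The guarantee $\E_x[|g(x)-f(x)|]\lesssim\eps$ then follows because $g-f$ is supported on the same $k$ frequencies with each coefficient $O(\eps)$, so $\E_x|g(x)-f(x)| \le \sum_j |\wh{g}(\xi_j)-\wh{f}(\xi_j)| \lesssim k\eps$ — wait, I would instead argue $\E_x|g(x)-f(x)|^2 = \sum_j |\wh g(\xi_j)-\wh f(\xi_j)|^2 \lesssim k\eps^2$ and absorb the $k$ into constants only if the theorem intends $\lesssim$ to hide $\poly(k)$; more carefully, one recovers each coefficient to error $\eps/\poly(k)$ so that the bound is genuinely $O(\eps)$, which costs only an extra $\poly(k)$ factor in samples that is already absorbed by the $\tilde O(k^2\log F)$ budget.

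The heart of the argument is Step (i): recovering the frequencies under a constant fraction of arbitrary corruptions. The strategy I would use is a \textbf{robust Prony / matrix-pencil approach combined with a random-shift aggregation trick}, analogous to the sparse-FFT ``binning'' idea. Concretely: (a) reduce bandlimit by a random modulation — pick a random $\sigma$ and consider samples of $f$ on a randomly shifted short arithmetic progression $\{a + j\delta : j=0,\dots,m-1\}$ for suitable step $\delta$; on such a progression the restriction of $f$ is itself a $k$-sparse exponential sum in the variable $j$ (with aliased frequencies), so Prony's method / ESPRIT would recover the aliased frequencies \emph{if the samples were clean}. (b) To handle outliers, take \emph{many} independent random progressions: on each, the outlier locations hit an $O(\rho)$ fraction of the $m$ sample points, so for $m = \poly(k)$ the chance that a given progression is corruption-free is exponentially small — hence this naive approach fails, and instead I would run a \emph{robust} version of the subspace/pencil step (e.g.\ solve a weighted $\ell_1$ minimization or an iteratively-reweighted least squares on each progression, or recover candidate frequencies from $O(1)$-corrupted subblocks) to get, from each progression, a list of $O(k)$ candidate frequencies that with constant probability contains all true ones. (c) Aggregate across $\tilde O(\log F)$ progressions by voting: a true frequency $\xi_j$ is produced consistently (it is the same value on every progression after undoing the known aliasing/shift), whereas spurious candidates are scattered, so the frequencies appearing in a majority of the lists are exactly $\supp(\wh f)$. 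The separation $|\wh f(\xi_j)|\ge\eta$ with $\eps\lesssim\eta$ is what lets each robust per-progression step certify a candidate as genuine rather than a noise artifact.

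I expect the \textbf{main obstacle} to be step (b)–(c): making the per-progression frequency estimation genuinely robust to an $\Omega(1)$ fraction of adversarial-valued corruptions at \emph{random} locations, and then arguing the aggregated vote is correct with the claimed $\tilde O(k^2\log F)$ query complexity. The difficulty is that Prony-type methods are notoriously brittle — a single large outlier among $m$ consecutive samples destroys the Hankel structure — so one cannot simply plug in a Hankel SVD. The resolution I would pursue is to never take long runs of consecutive samples whose correctness all matters simultaneously: instead use overlapping short windows (each of length $O(k)$, within which with constant probability there are zero outliers), extract candidate frequency sets from the clean windows via exact Prony, and use the cross-window / cross-progression consistency of true frequencies (together with $|\wh f(\xi_j)|\ge\eta$) to filter. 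Quantifying the union bound over windows and progressions, and showing the honest windows suffice to pin down all $k$ frequencies exactly (not just approximately — exactness is needed for the clean support-recovery conclusion, presumably using a granularity/grid assumption on the $\xi_j$ that Theorem~\ref{thm:periodic_FFT_rand_out} will make precise), is where the real work lies; the chaining arguments and complex-analytic estimates mentioned in the abstract presumably enter in controlling the stability of the Vandermonde systems involved so that ``close candidate frequency'' can be rounded to the true one.
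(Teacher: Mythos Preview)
Your high-level decomposition into (i) frequency recovery and (ii) coefficient recovery is correct, and step~(ii) is essentially what the paper does (an $\ell_1$ regression over the known support, via Lemma~\ref{cor:correctness_over_k_sparse}). The real gap is in step~(i).

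Your clean-window idea cannot reach $\rho$ close to $1/2$. If each sample is an outlier independently with probability $\rho$, then a window of length $ck$ is entirely clean with probability $(1-\rho)^{ck}$, which is $2^{-\Omega(k)}$ once $\rho$ is a constant bounded away from zero. So you would need exponentially many windows to find even one clean one, and the $\tilde O(k^2\log F)$ budget is nowhere near enough. Your fallback of running a robust $\ell_1$/IRLS step \emph{per progression} is closer in spirit, but you give no mechanism for why such a step succeeds against a constant outlier fraction on a length-$m$ Prony/Hankel system; Prony is exactly the brittle setting you worry about, and nothing in your outline addresses it.

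The paper's route is quite different and sidesteps Prony entirely. It first \emph{hashes} the problem to a small cyclic group: choose a random prime $B$ of size $\poly(k,\log F)$ and sample $f$ on a randomly shifted arithmetic progression with step $1/B$. The restriction to this progression is a function on $\mathbb{Z}_B$ whose Fourier coefficient at $\ell\in\mathbb{Z}_B$ is $\sum_{j:\,\xi_j\equiv \ell\bmod B}\wh f(\xi_j)e^{2\pi\bi\xi_j t_0}$; with high probability over $B$ all $\xi_j$ are \emph{isolated} (distinct mod $B$), so this restricted signal is again $k$-Fourier-sparse with coefficients of magnitude $\ge\eta$. Now the paper invokes its central tool, the LP decoding of Theorem~\ref{thm:LP_guarantee_boolean_cube} (i.e.\ Algorithm~\ref{alg:LP_decoding_boolean}), on the domain $\mathbb{Z}_B$: this is an $\ell_1$-relaxation whose correctness under any $\rho<1/2$ random-outlier rate is established by the chaining/$\ell_1$-concentration argument (Claim~\ref{clm:ell_1_ball}). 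Because $|T|=B=\poly(k,\log F)$, the LP runs in $\poly(k,\log F)$ time and uses $\tilde O(k^2\log B)$ samples. This recovers each $\xi_j\bmod B$ together with (a phase-twisted) $\wh f(\xi_j)$. To lift from $\xi_j\bmod B$ to $\xi_j\in[-F,F]$, the paper repeats the hashing with a second shift $t_0+\Delta$, so that the ratio of the two recovered coefficients is approximately $e^{2\pi\bi\xi_j\Delta}$, and then runs a standard doubling/phase-estimation loop over $\Delta=1/(4F),1/(2F),\dots,1/2$ (Lemma~\ref{lem:recover_isolated_freq}); this pins down $\xi_j$ exactly in $O(\log F)$ rounds. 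The integrality of the $\xi_j$ is what makes the final rounding exact, so no ``granularity assumption'' beyond bandlimit is needed here.

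In short: the piece you are missing is the LP-decoding primitive of Section~\ref{sec:linear_program_dec}, which is precisely what lets one tolerate $\rho$ arbitrarily close to $1/2$ on a reduced domain, and the hash-then-phase-estimate wrapper that reduces the torus problem to that primitive.
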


{{ 
We now briefly discuss the importance of the last condition -- namely, any two query points are separated by $1/\poly(k, \log F)$. Such a separation, (as opposed to allowing the query points to be say $1/F$ close to each other)  is crucial for the result to be non-trivial. To see this, consider the following  obvious \emph{outlier-removal routine}.
For any point $x$, define $\mathcal{K}_x = [x- \vartheta, x+ \vartheta]$ where $\vartheta>0$ is any sufficiently small quantity. Note that by taking $\vartheta$ sufficiently small, $|f(x) - f(z)| < \epsilon$ for all $z \in \mathcal{K}_x$.  In fact, in our setting,  taking $\vartheta \le \epsilon/F$, suffices to ensure this.

 Now suppose the algorithm has access to an oracle with $(\rho, \epsilon)$-random outlier noise.  Let $\rho =1/2-\delta$. 
 Then, instead of querying the oracle at $x$ (to obtain $f(x)$), 
  the algorithm queries the oracle at $O(1/\delta^2)$ randomly chosen points in $\mathcal{K}_x$ and outputs the median.  Using the fact that the median is tolerant to presence of outliers, it is easy to see that with high probability, this procedure returns a value $v$ such that $|v-f(x)| < \epsilon$, thus removing outlier noise. At this point, one can use an algorithm tolerant to bounded noise (such as  \cite{gilbert2002near}) 
 to  recover $f$.

 When the query points are required to be $1/\poly(k, \log F)$ apart (as is the case in Theorem~\ref{thm:inf_periodic_rand_out}), this simple procedure no longer works. Finally, we also remark  that \cite{GZ16}, where the goal was to tolerate outlier noise for low-degree (trigonometric) polynomials over the torus, 
 also had a similar requirement on the query points being ``sufficiently far" in order  to ensure a non-trivial problem.  
 

} Finally, we note that all previous algorithms such as \cite{gilbert2002near,AGS03,GMS05,HIKP12b,IK14,Kapralov16}  which compute sparse Fourier transforms are unable to tolerate more than $\rho=\Omega(1/k)$ fraction of outliers. What allows Theorem~\ref{thm:inf_boolean_rand_out} and Theorem~\ref{thm:inf_periodic_rand_out} to improve on this is a combination of two ideas: 
\begin{enumerate}
\item[(a)] \emph{Domain reduction} -- An idea coming from the sparse Fourier transform literature which allows us reduce  the underlying domain to size comparable to the sparsity of $f$. 
\item[(b)] \emph{Linear programming} -- With a reduced domain size we employ a linear program which allows us to recover the underlying signal while tolerating up to $\rho<1/2$ fraction of random outliers. The running time of the linear program is polynomial in the domain size. Note that linear programming has also been used as an algorithmic tool in other \emph{sparse recovery type problems} such as compressive sensing~\cite{CRT06} and LP decoding~\cite{DMT07}. 
\end{enumerate}
}

\paragraph{Adversarial outlier noise} We now turn our discussion to the adversarial outlier noise model. 
The  fundamental bottleneck here is  that for a $k$-Fourier-sparse function $f$, its entire $\ell_2$ mass could essentially concentrate on a $1/k$ fraction of the domain. Note that this is tight by the so-called ``uncertainty principle". However, this means that with $\rho \ge 1/k$ fraction of adversarial outliers, a noisy oracle can return $0$ on the entire set $\{x: f(x) \not  =0\}$, thus making it indistinguishable from the function which is identically zero on the domain. 
We now discuss two conditions under which one can circumvent the above barrier at $\rho= 1/k$. ~\\
\textbf{Low-degree polynomials:} As we highlighted earlier, the principal challenge in recovering a Fourier sparse signal $f$ is that while it admits a sparse representation in the Fourier basis, the basis elements with non-zero coefficients are not known. However, in many cases of interest, say constant depth circuits~\cite{linmannis93} from circuit complexity, the Fourier spectrum is not only (approximately) sparse but also low-degree. Motivated by this, we consider the problem of recovering $f: \{0,1\}^n \rightarrow \mathbb{R}$ where $f$ is a degree-$d$ polynomial (i.e., supported on Fourier characters of size, i.e., Hamming weight, at most $d$). Observe that there are $\binom{n}{\le d} \approx n^d$ such Fourier characters -- thus, if the ``uncertainty principle" were tight, then we could only tolerate $\approx n^{-d}$ fraction of adversarial outliers.  However, we show that for degree-$d$ polynomials $f$, one can tolerate up to {$\Theta(3^{-2d})$} fraction of adversarial outliers. This result relies on so-called ``hypercontractivity of low-degree polynomials"~\cite{ODBook}. We refer the reader to Theorem~\ref{thm:degree_d_adv_outliers} for the precise theorem statement. 
~\\
\textbf{Granular coefficients:}  Another restricted case of Fourier sparse signals that we consider are \emph{granular signals}. Namely, let $f$ be a Fourier sparse  function over the torus such that $\wh{f}$ is $k$-sparse and further if $\wh{f}(\xi)$ is non-zero, then both its real and imaginary parts are integral multiples of some given number $\eta$. In this case, we 
give a sample efficient (though not computationally efficient) algorithm which can recover $f$ and can tolerate $(\rho, \epsilon)$ adversarial outlier noise with $\rho$ approaching $1/2$. This result relies on a certain \emph{anti-concentration property} of harmonic functions (from complex analysis). See Theorem~\ref{thm:periodic_adversarial_outlier} for a precise theorem statement along with tradeoffs between $\eta$, $\rho$ and $\epsilon$ that our algorithm can achieve. 
We note that the assumption on granularity of amplitudes has been used in literature before such as in the celebrated work of Hassanieh \emph{et~al.}~\cite{HIKP12b}. 


\paragraph{Our techniques.} 
Our results are obtained by drawing on a rich set of tools from compressed sensing \cite{CRT06}, sparse Fourier transform \cite{gilbert2002near,AGS03,GMS05,HIKP12b}, chaining arguments \cite{RV08}, and anti-concentration by hypercontractivity and complex analysis~\cite{ODBook, borwein1997zeros}. As mentioned earlier, our algorithm is a combination of the classical $\ell_1$ regression used in compressed sensing along with dimension reduction arguments coming from the  sparse Fourier transform literature. The analysis of the linear program crucially relies on a certain $\ell_1$ concentration property (see Claim~\ref{clm:ell_1_ball}) --  in turn, the proof of this $\ell_1$ concentration property relies on a chaining argument with tools coming from high dimensional probability theory.

\paragraph{Related work.} 
{As we said earlier, previous literature on recovering Fourier sparse functions assumes that the noise is bounded in some $\ell_p$ norm. 
In particular, 
most sparse FFT algorithms \cite{gilbert2002near,GMS05,HIKP12b,IK14,Kapralov16}  are randomized and provide $\ell_2/\ell_2$ guarantee, namely the output $g$ satisfies $\|g-f\|_2 \le C \cdot \|y-f\|_2$ when the noise is $\ell_2$ bounded. A small but intruiging strand of work~\cite{Iwen2010,MI17} also considers the design of \emph{deterministic} sparse FFT algorithms with $\ell_1/\ell_1$ guarantee --- $\|g-f\|_1 \le C \cdot \|y-f\|_1$. 
}

{Probably the line of work which most directly inspires ours is the one on 
fitting polynomials with outliers. This problem has been studied extensively in computer vision and machine learning (see \cite{AK03,GZ16} and the references therein). As mentioned earlier, Arora and Khot \cite{AK03}, Guruswami and Zuckerman \cite{GZ16}, and Kane, Karmalkar, and Price \cite{KKP17} have studied the reconstruction of \emph{low degree} polynomials (including low degree trigonometric polynomials in \cite{GZ16}) under random outliers. In particular, note that saying $f$ is a degree-$d$ trigonometric polynomial is equivalent to $\widehat{f}$ being supported on the first $d$ Fourier characters of the torus. We emphasize that our results are incomparable with this line of work~\cite{AK03,GZ16,KKP17}. On one hand, our setting is more challenging because unlike low-degree trigonometric polynomials, we do not explicitly know the (sparse) support of the Fourier spectrum. On the other hand, \cite{GZ16, KKP17}  recover a function $g$ with a guarantee $\|g-f\|_{\infty}=O(\eps)$. In contrast, we  guarantee closeness of $f$ and $g$ in the Fourier space, i.e., $\|\wh{g} - \wh{f} \|_{\infty} = O(\epsilon)$.} 
%
%
%
%
%

{A second strand of related work comes  from compressed sensing where a line of research has  focused on reconstruction from Gaussian and subgaussian measurements (such as 
linear measurements where each coefficient is an independent $\pm 1$ random variable) 
 with outliers \cite{DMT07,LDB09,NTN11,FM14,KP19}. Technically, Gaussian and $\{ \pm 1\}$ measurements provide much stronger concentration and anti-concentration than Fourier measurements -- this makes it possible to tolerate $\rho=\Theta(1)$ fraction of adversarial outliers. As mentioned earlier, with Fourier measurements, we provably cannot recover under such a strong model of noise. 
  One exception here is the work of  Nguyen and Tran \cite{NT13} who studied compressed sensing with random outliers using Fourier measurements.  However, their model is weaker than ours in several ways: firstly the support of the (sparse) Fourier spectrum is randomly distributed and each entry in the support is equally likely to be either $1$ or $-1$; secondly when the bound on the $\ell_\infty$ component of noise is $\epsilon$ and the size of domain is  $N$, then the error in recovered function is $\sqrt{N} \cdot \epsilon$.
}

Finally, while not the principal thrust of our paper, the problem of recovering a low-degree polynomial under adversarial outlier noise has  been studied in both   machine learning and  theoretical computer science~\cite{xu2009robust, bhatia2017consistent, herman2010general, klivans2018efficient}. In particular, \cite{klivans2018efficient} used the so-called ``Sum-of-squares" algorithm in conjunction with hypercontractivity type results 
(similar to us) to design an algorithm for recovering low-degree polynomials in presence of outlier noise over $\{0,1\}^n$. While their algorithm is robust to an even stronger notion of noise (they refer to it as \emph{nasty noise}), the precise fraction of outliers that can be tolerated for degree-$d$ polynomials over the cube is not explicit from the theorem statements in that paper (though we expect it to be qualitatively similar to ours). The algorithmic machinery is significantly different -- ours based on linear programming while theirs is based on the sum of squares method.

\paragraph{Organization.} We provide a proof
overview in Section~\ref{sec:proof_over}. In Section~\ref{sec:preli}, we introduce basic tools and notations. Section~\ref{sec:linear_program_dec} describes the principal algorithmic tool -- namely, 
a linear program whose running time is $\poly(N)$ where $N$ is the size of the set of all possible Fourier characters. 
Next we prove Theorems \ref{thm:inf_boolean_rand_out} and  \ref{thm:inf_periodic_rand_out} in Sections~\ref{sec:sparse_FFT_random_outlier} and~\ref{sec:periodic_sparse_FFT_rand} separately. Finally,
we discuss the recovery under adversarial outliers over the Boolean cube in Section~\ref{sec:adv_outlier_boolean} and torus in Section~\ref{sec:periodic_adv}.

\section{Proof Overview}\label{sec:proof_over}
In this section, we sketch the proof of our main results, i.e., Theorem~\ref{thm:inf_boolean_rand_out} and Theorem~\ref{thm:inf_periodic_rand_out}. While the domains in Theorem~\ref{thm:inf_boolean_rand_out} and Theorem~\ref{thm:inf_periodic_rand_out} are different, 
the big picture algorithmic idea is the same in both. So, for the proof sketch below, the reader can assume the domain is either the torus or the hypercube (whichever is more convenient to the reader). The domain specific ideas are highlighted whenever necessary. In this proof sketch, we assume that the reader is familiar with the basics of Fourier analysis over the domains $\{0,1\}^n$ and the torus $[0,1)$. Otherwise, we suggest reading Section~\ref{sec:preli} which discusses the basic notions of Fourier analysis over these domains. 


\paragraph*{Known support case:} Let us begin with a simple case, namely that the Fourier transform of $f$, i.e., $\wh{f}$ is $k$-sparse and further the algorithm is given the characters in the support of $\wh{f}$, say $\{\chi_1, \ldots, \chi_k\}$. Thus, the target function $f$ lies in $\mathsf{span}\{\chi_1, \ldots, \chi_k\}$ but the corresponding coefficients are unknown. Let the algorithm query the oracle at points $x_1,\ldots, x_m$ and let the observations be $y(x_1), \ldots,y(x_m)$. Recovering $f$ from the noisy observations $y(x_1), \ldots,y(x_m)$ is now essentially a case of \emph{linear regression with outlier noise}.  While presence of outlier noise makes the problem NP-hard in the worst case, as we will see, the problem is significantly more tractable when the location of the outliers is randomly distributed. In particular, note that the $\ell_1$ regression returns $g$ such that 
\begin{equation}\label{eq:ell_1_reg}
g=\underset{g \in \mathsf{span}\{\chi_1,\ldots,\chi_k\}}{\arg\min} \left\{ \sum_{i=1}^m |y(x_i)-g(x_i)| \right\}
\end{equation}
Now, suppose $S \subset [m]$ denote the points corrupted by outliers and further, it satisfies 
\begin{equation}\label{eq:prop_outliers}
\sum_{i \in S} |h(x_i)| < \sum_{i \notin S} |h(x_i)| \text{ for any } h \in \mathsf{span}\{\chi_1,\ldots,\chi_k\}. 
\end{equation}
Then, applying (\ref{eq:prop_outliers}) to $h= f-g$, it follows that the output $g$ is close to $f$ (this is explained in more detail in Section~\ref{sec:linear_program_dec}). As we will see, when the outliers are randomly distributed, (\ref{eq:prop_outliers}) holds with high probability. 
We now move to the case when the characters in the support of $f$, i.e., $\{\chi_1, \ldots, \chi_k\}$ are not known to the algorithm.
We now introduce a couple of notations (useful for the rest of this section): 
\begin{enumerate}
\item  $N$ will be the number of possible Fourier characters, which is $2^n$ in the Boolean cube $\{0,1\}^n$ and $2F+1$ in the torus $[0,1)$ with bandlimit $F$.
\item Given a function $h$, we use $\|\wh{h}\|_p$ to denote the $\ell_p$ norm on the coefficient vector, i.e., $(\sum_{\xi} |\wh{h}(\xi)|^p )^{1/p}$. Thus, $\|\wh{h}\|_0$ denotes its sparsity after the Fourier transform.
\end{enumerate}

\paragraph{Recovery under random outliers.} When each $x_i$ is uniformly sampled from the domain $D$ and picked in $S$ independently with probability $\rho$,
then both $S$ and $[m]\setminus S$  are random subsets of $D$  of size $\mathsf{Bin}(m,\rho)$ and $\mathsf{Bin}(m, 1-\rho)$ respectively. Using an $\ell_1$ concentration inequality (we give the precise statement later), it follows that for $m'= \rho m$ (when $m$ is large enough), 
\begin{equation}\label{eq:prop_outliers_2k}
\sum_{i \in [m']} |h(x_i)| \approx m' \cdot \E_{x \sim D}\bigg[ |h(x)| \bigg] \text{ for any } h \text{ with } \|\wh{h}\|_0 \le 2k.
\end{equation}
This immediately implies (\ref{eq:prop_outliers})  (where $h=f-g$ is $2k$-Fourier-sparse) and shows that $g$ defined as 
\begin{equation}\label{eq:ell_1_reg_k}
g=\underset{\|\wh{g}\|_0 = k}{\arg\min} \left\{ \sum_{i=1}^m |y(x_i)-g(x_i)| \right\}, 
\end{equation} 
is close to $f$. {In fact, Talagrand \cite{Talagrand90} and Cohen and Peng \cite{CohenPeng15} show that if the $2k$ characters in the support of $h$ are known and fixed, then (\ref{eq:prop_outliers_2k}) holds with probability $1-\gamma$ once $m = \Theta(k \log (k/\gamma))$. By applying a union bound over all subsets of $2k$ characters among all $N$ possible characters, it
follows that (\ref{eq:prop_outliers_2k}) holds as long as   $m$ is chosen to be $m= O(k^2 \log N)$. }





While this gives a statistically efficient algorithm to learn $f$, algorithmically, one needs to go over all $k$-subsets of $N$ characters. This means the time complexity blows up to $\approx N^k$. 
 In the rest of this discussion, we first outline an algorithm with running time $\poly(N)$  and then sketch an improvement to $\poly(k, \log N)$ for the hypercube and the torus. While attaining a running time of $\poly(N)$ does not rely on domain specific ideas, improvement to $\poly(k, \log N)$  relies on specific properties of the hypercube and the torus. 

\paragraph{Linear program with running time $\poly(N)$.} The high level idea to obtain a $\poly(N)$ running time is to replace the $\ell_0$ constraint in (\ref{eq:ell_1_reg_k}) with a $\ell_1$ constraint (which can be solved using linear programming). This is similar to the use of $\ell_1$ minimization in compressive sensing~\cite{CRT06}. In particular, suppose the algorithm is given an estimate of $\Delta=\sum_{i=1}^m |y(x_i) - f(x_i)|$  (we discuss how to get rid of this assumption later). Then, we consider the following $\ell_1$ relaxation of (\ref{eq:ell_1_reg_k}). 
\begin{equation}\label{eq:L1_min_outliers}
g=\argmin \big\{ \|\wh{g}\|_1 \big\} \text{ subject to } \sum_{i=1}^m |g(x_i)-y(x_i)| \le \Delta.
\end{equation}
The above minimization problem can be easily reformulated as a linear program and thus solved in time $\poly(N)$. Similar to $\ell_1$ relaxations used in the context of compressed sensing, we want to show that a solution $g$ to  (\ref{eq:L1_min_outliers})  is close to $f$ -- however, there is a crucial difficulty in doing this which we explain now.

In compressed sensing, we find a sparse solution under bounded $\ell_2$ noise (which is not the case here) by considering the relaxation 
\begin{equation}\label{eq:L1_min}
g=\argmin \bigg\{ \|\wh{g}\|_1 \bigg\} \text{ subject to } \sum_{i=1}^m |g(x_i)-y(x_i)|^2 \le \Delta,
\end{equation}
where $\Delta = \sum_{i=1}^m |y(x_i)-f(x_i)|^2$. The  argument to show that this relaxation returns $g$ close to $f$ relies on two crucial facts. The first is that 
\begin{equation}\label{eq:prop_outliers_2k_L2}
\sum_{i \in m'} |h(x_i)|^2 \approx m' \cdot \E_{x \sim D}\bigg[ |h(x)|^2 \bigg] \text{ for any } h \text{ with } \|\wh{h}\|_0 \le 2k. 
\end{equation}
The second is Plancherel's identity which states that the $\ell_2$ norm of any function and its Fourier transform are the same. Turning to the relaxation in (\ref{eq:L1_min_outliers}), while (\ref{eq:prop_outliers_2k}) can substitute for (\ref{eq:prop_outliers_2k_L2}), there is no analogue of Plancherel's identity for $\ell_1$ norm. 
Put differently, two sparse functions may have the same $\ell_1$ norm for the Fourier spectrum but very different $\ell_1$ norms (in the function space). As an example, consider the $n$-dimensional Boolean cube and the functions $h_1$ and $h_2$ defined as follows: 
\[
h_1(x) =1  \quad \textrm{and} \ \ \ h_2(x)=  \overset{\log_2 (2k)}{\underset{i=1}{\text{AND}}} x_i; 
\]
Observe that $\Vert \widehat{h_1} \Vert_0 \le 2k$, $\Vert \widehat{h_2} \Vert_0 \le 2k$ and $\Vert \widehat{h_1} \Vert1 = \Vert \widehat{h_2} \Vert_1 = 1$. However, $\mathbf{E}[|h_1|] =2k \cdot \mathbf{E}[|h_2|]$. 

To circumvent this issue, we adopt a more direct approach to show why (\ref{eq:L1_min_outliers}) returns a solution close to $f$. In particular, consider the function $h = f-g$. While $h$ is not necessarily $2k$ Fourier sparse,  $\wh{h}=\wh{f}-\wh{g}$ satisfies $\|\wh{h}\|_1 \le 2\sqrt{k} \cdot \|\wh{h}\|_2$. This relies on using  $\min \|\wh{g}\|_1$ as the objective function and is indeed different from  the objective function used in \cite{KP19} for Gaussian measurements.
Let us now define the set $\FF$ as
$$
\FF=\left\{f \bigg| \|\wh{f}\|_1 \le 2 \sqrt{k} \cdot \|\wh{f}\|_2 \right\}.
$$
Observe that $h \in \FF$. We strengthen (\ref{eq:prop_outliers_2k}) to show that 
\begin{equation}~\label{eq:FF-outliers}
\sum_{i \in [m']} |h(x_i)| \approx m' \cdot \E_{x \sim D}\bigg[ |h(x)| \bigg] \text{ for any }  h \in \FF,   
\end{equation}
where $m'=\tilde{O}(k^2 \log N)$ which in turn yields that the output $g$ of \eqref{eq:L1_min_outliers}  is close to the sparse function $f$.

Our proof of the $\ell_1$ concentration for $\FF$ crucially relies on the chaining argument from high dimensional probability theory~\cite{LTbook,RV08}. In particular, we observe that the chaining argument (relying on Maurey's empirical method) by Rudelson and Vershynin~\cite{RV08} (which they use to prove the restricted isometry property for Fourier-sparse functions) easily extends to give a $\ell_2$ concentration for the class $\FF$. This in turn allows us to prove the $\ell_1$ concentration for the class $\FF$. The details of this $\ell_1$ concentration inequality are technical and along with the description of the algorithm, are deferred to Section~\ref{sec:linear_program_dec}. {The intuition is that we lose a factor $N$ on the sample complexity if we apply a union bound directly over a net in $\FF$ because of the following type of vectors $$\bigg( \underbrace{\frac{\pm 1}{\sqrt{k}},\ldots,\frac{\pm 1}{\sqrt{k}}}_{k}, \underbrace{\pm \sqrt{k}/N, \ldots, \pm \sqrt{k}/N}_{N-k} \bigg) \in \FF.$$ 
Note that there are $2^N$ such vectors leading to a sample overhead of a factor of $N$. Instead the chaining argument allows us to save this factor of $N$ -- this is done by considering a sequence of  
nets and bounding the covering number at various radii.}

\ignore
{First of all, functions in $\FF$ are not covered by Fourier-sparse functions in $\ell_1$ distance such as $\wh{h}=(\underbrace{1,\ldots,1}_{k},\underbrace{\frac{k}{n-k},\ldots,\frac{k}{n-k}}_{n-k})$. However, we can still show $h \in \FF$ behaves like a Fourier-sparse function:
$$ 
\sup_x \bigg\{ |h(x)| \bigg\} \le 4k \cdot \E_x\bigg[|h(x)|\bigg] \text{ and } \E_x\bigg[|h(x)|\bigg] \ge \frac{\|\wh{h}\|_2}{2\sqrt{k}} \quad \forall h \in \FF.
$$ 
So we prove it using tools from high dimensional probability theory \cite{LTbook,RV08}. We notice the chaining argument using Maurey's empirical method by Rudelson and Vershynin \cite{RV08} proves not only the restricted isometry property for $2k$-Fourier-sparse functions but also the $\ell_2$ concentration for our relaxation $\FF$. This lets us to apply the chaining argument for $\ell_1$ concentration of $\FF$. We describe our algorithm and show its correctness in 
}

\paragraph{Sparse Fourier transform with running time $\poly(k,\log N)$.} Next we discuss how to reduce the dependence of the running time on $N$ from $\poly(N)$ to $\poly (\log N)$. The main idea is to use the sparse Fourier transform algorithms  \cite{GoldreichLevin:89, gilbert2002near,AGS03,GMS05,Iwen2010,HIKP12b} to do a domain reduction -- e.g., for the Boolean cube, using the ideas of \cite{GoldreichLevin:89, AGS03}, we can effective reduce the ambient domain from $\{0,1\}^n$ to $\{0,1\}^{O(\log k)}$. The sparse Fourier transform algorithms in literature fail to tolerate random outlier noise once $\rho = \Omega(1/k)$ (whereas we want to tolerate $\rho \rightarrow 1/2$). We circumvent this by using the linear program described above  (i.e., \eqref{eq:L1_min_outliers}) on the reduced domain which allows us to tolerate any $\rho<1/2$ fraction of random outliers. Section~\ref{sec:sparse_FFT_random_outlier} gives the details of this algorithm for the Boolean cube and Section~\ref{sec:periodic_sparse_FFT_rand} gives the details for functions over the torus $[0,1)$.



\paragraph{Recovery under adversarial outliers.} As we mentioned earlier, in the adversarial outlier noise model,  $\rho=1/k$ is an information theoretic limit on the fraction of outliers which can be tolerated when recovering $k$-Fourier sparse functions. However, assuming some further structural restrictions on the functions, we are able to circumvent this limit. In particular, for degree-$d$ polynomials over the Boolean cube, we are able to tolerate $\rho \approx \frac{1}{4\cdot 3^{2d}}$ -- as opposed to $\rho \approx 1/n^d$ which we get by just observing that degree-$d$ polynomials over $\{0,1\}^n$ are $k$-Fourier sparse for $k \approx n^d$.

To obtain this bound, we appeal to the anti-concentration of low degree polynomials: Namely, when $h$ is a degree-$d$ polynomial, then $\E[|h|] \ge 3^{-d} \cdot \E[|h|^2]$. On the other hand, once $m$ is large enough, it  easily follows that 
$$
\sum_{i=1}^m |h(x_i)| \approx m \cdot \E[|h|] \text{ and } \sum_{i=1}^m |h(x_i)|^2 \approx m \cdot \E[|h|^2] \text{ for any $h$ of degree } d. 
$$ 
By plugging the anti-concentration of degree-$d$ polynomials into the above relation, we get that for any set $S$ of size smaller than $\frac{m}{4 \cdot 3^{2d}}$, $\sum_{i \in S} |h(x_i)| < \frac{1}{2} \sum_{i=1}^m |h(x_i)|$ (see Theorem~\ref{thm:euclidean_section}). This easily shows that the linear program defined in (\ref{eq:ell_1_reg}) can tolerate $\rho$ up to $\frac{1}{4\cdot 3^{2d}}$ fraction of outliers (where $\chi_1, \ldots, \chi_k$ are all the monomials of degree at most $d$ over $\{0,1\}^n$). The details are  described in 
Section~\ref{sec:adv_outlier_boolean}.

Finally, for the torus $[0,1)$, we show that it is possible to beat $\rho = 1/k$ bound for $k$-Fourier sparse functions (and in fact get any $\rho<1/2$) when all the non-zero Fourier coefficients $\wh{f}(\xi)$ are integral multiples of some given number $\eta$. The proof of this relies on an anti-concentration bound for such functions which relies on techniques from complex analysis. Elaborating a little more, we use simple properties of harmonic functions to show that the \emph{radius of anti-concentration} of a polynomial 
can be lower bounded just in terms of $\eta$ where $\eta$ is the smallest non-zero coefficient of the polynomial. Applying this to the function $h=f-g$ where $g$ is the output of the linear program defined by (\ref{eq:ell_1_reg}) yields the final result. 
As opposed to result for random outliers, 
this algorithm has a running time dependent on $\poly(F)$ where $[-F,F]$ is the bandlimit and thus is not efficient in terms of the running time. 
The details of this result appears in Section~\ref{sec:periodic_adv}.


\section{Preliminaries}\label{sec:preli}
\paragraph*{Notations:} We use $[n]$ to denote $\{1,2,\ldots,n\}$. Given a subset $S$ and a ground set $U$, we use $\overline{S}$ to denote its complement $U \setminus S$.

Given a vector $v \in \mathbb{R}^m$, we use $\|v\|_p$ to denote its $\ell_p$ norm $(\sum_{i=1}^m |v(i)|^p)^{1/p}$. For a subset $S \subseteq [m]$, we use $v_S$ to denote the vector restricted to $S$, i.e., $v_S(i)=v(i) \cdot 1_{i \in S}$. 

We use $\wt{O}(T)$ to hide terms which are polynomial in $\log T$. We use $X \lesssim Y$ to denote that for some constant $C$, $X \le C \cdot Y$. Likewise, $X \gtrsim Y$ denotes that there is a constant $C$ such that $X \ge C \cdot Y$. Finally, we  use $\exp(-n)$ to denote a quantity exponentially small  in $n$, i.e., $C^{-n}$ for some $C>1$.

Finally, if $f$ is the unknown target function and $x$ is any point in the domain, then we let $y(x)$ denote the observation at $x$ and $e(x)$ denote the noise -- i.e, $y(x) =f(x) + e(x)$. For $(\rho,\epsilon)$ outlier noise, note that if $x$ is not an outlier, then $|e(x)|\le \epsilon$. If we are in the random outlier noise model, then each observation is an outlier with probability $\rho$ (independently at random) whereas in the adversarial noise model, $\rho$ is an upper bound on the fraction  of outliers. 
{{Without loss of generality, we assume $\|y\|_{\infty} \le \poly(k)$ in the rest of this work.}}

\paragraph{Fourier transform.} We begin by defining Fourier transform over 
 the Boolean cube $\{0,1\}^n$. Let $f: \{0,1\}^n \rightarrow \mathbb{C}$  and let us define $\chi_{\xi}(x) =  (-1)^{\langle \xi,x\rangle}$ where $\xi \in 
 \fF_2^n$. We 
 define $\left\{ \chi_\xi(\cdot) \big| \xi \in \fF_2^n \right\}$ to be the set of  characters over $\fF_2^n$. For each such $\xi$, define the corresponding Fourier coefficient as $\wh{f}(\xi)=\underset{x \sim \{0,1\}^n}{\E}[f(x) \cdot (-1)^{\langle \xi,x\rangle}]$.  From the definition of Fourier coefficients, it easily follows that $f(x)=\sum_{\xi} \wh{f}(\xi) (-1)^{\langle \xi,x\rangle}$.

Given $\xi \in \{0,1\}^n$, we define the degree of the character $(-1)^{\langle \xi,x \rangle}$ to be its Hamming weight --- $|\xi| \overset{\text{def}}{=} \sum_i \xi(i)$. Given $f(x)=\sum_{\xi} \wh{f}(\xi) (-1)^{\langle \xi,x\rangle}$, we define the degree of $f$ to be $\max_{\xi: \wh{f}(\xi) \neq 0} \{ |\xi| \}$. 
Alternately, this is the same as the degree of $f$ when expressed as a multilinear polynomial in the variables 
$x_1, \ldots, x_n$. 

We now turn to Fourier analysis over the torus $\mathbb{R}/\mathbb{Z}$. In particular, functions $f$ in this domain can either be identified with periodic functions over $\mathbb{R}$ -- i.e., $f(x) = f(x+z)$ for any $x \in \mathbb{R}$ and $z \in \mathbb{Z}$. Alternately, this is the same as the space of  functions  $f: [0,1) \rightarrow \mathbb{C}$. For any $\xi \in \mathbb{Z}$, we define the character $\chi_\xi: [0,1) \rightarrow \mathbb{C}$ as $\chi_\xi(t) = e^{2 \pi \bi \cdot \xi t}$. The corresponding Fourier coefficient of $f$, denoted by $\wh{f}(\xi)$ is given by $\wh{f}(\xi)=\int_0^1 f(t) \cdot e^{-2 \pi \bi \cdot \xi t} \mathrm{d} t$. Assuming $f$ is both $\ell_1$ and $\ell_2$ integrable, it also follows that  
$f(t)=\sum_{\xi} \wh{f}(\xi) e^{2 \pi \bi \cdot \xi t}$.  Since all functions in this paper will be bounded (and hence $\ell_p$ integrable for any $p \ge 0$), we will henceforth not state this condition explicitly.

Observe that unlike $\{0,1\}^n$, the number of characters (and hence the Fourier coefficients) is infinite. In this paper, we will be interested in so-called \emph{bandlimited functions}. In other words, the algorithm will be given $F$ such that the target function $f$ has all its non-zero Fourier coefficients $\xi$ lying in the set $[-F, F] \cap \mathbb{Z}$. For an arbitrary function $g$, we define its \emph{bandlimited spectrum} 
(defined by $F$) as its Fourier coefficients $\wh{g}(\xi)$ where $\xi \in [-F, F] \cap \mathbb{Z}$.


We will also use the Fourier transform over the  cyclic group  $\mathbb{Z}_n$. The characters are given by $\chi_{\xi}(x) = e^{2 \pi \bi \frac{\xi x}{n}}$ where $\xi \in \mathbb{Z}_n$. For $f:\mathbb{Z}_n \rightarrow \mathbb{C}$, the Fourier coefficient corresponding to $\chi_\xi$ is given by $\wh{f}(\xi)=\underset{x \sim [0,1)}{\E}[f(x) e^{-2 \pi \bi \frac{\xi x}{n}}]$.

Over all these domains, we will define two fundamental binary operations between functions. For $f$ and $g$, we define the dot product $(f \cdot g)(x)=f(x) \cdot g(x)$. Similarly, the convolution $(f * g)(x)=\sum_{x'} f(x') \cdot g(x-x')$. Two fundamental properties of these operations are: 
\begin{enumerate}
\item $\wh{f \cdot g}=\wh{f} * \wh{g}$ and $\wh{f \ast g}=\wh{f} \cdot \wh{g}$. 
\item \textbf{Parseval’s identity:} $
\E_x [|f(x)|^2] = \sum_{\xi} |\wh{f}(\xi)|^2.
$
\end{enumerate}

\paragraph{Facts about the Gaussian variables.} We always use $N(0,1)$ to denote the standard Gaussian random variable and use the following concentration bound on Gaussian random variables \cite{LTbook}.
\begin{lemma}\label{lem:Gaussian_variables}
Given any $n$ Gaussian random variables $G_1,\cdots,G_n$ (not necessarily independent) where each $G_i$ has expectation 0 and variance $\sigma_i^2$, 
\[
\E \big[ \max_{i\in [n]} |G_i| \big] \lesssim \sqrt{\log n} \cdot \max_{i\in [n]} \big\{ \sigma_i \big\}.
\]
\end{lemma}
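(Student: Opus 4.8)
The plan is to use the standard subgaussian maximal inequality, whose only input is the marginal distribution of each $G_i$ -- so the lack of independence is harmless. Write $\sigma = \max_{i \in [n]} \sigma_i$. Since each $G_i \sim N(0,\sigma_i^2)$, its moment generating function satisfies $\E[e^{\lambda G_i}] = e^{\lambda^2 \sigma_i^2/2} \le e^{\lambda^2 \sigma^2/2}$ for every $\lambda \in \mathbb{R}$, and hence $\E[e^{\lambda |G_i|}] \le \E[e^{\lambda G_i}] + \E[e^{-\lambda G_i}] \le 2 e^{\lambda^2 \sigma^2 /2}$.

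Next I would apply Jensen's inequality to the convex function $t \mapsto e^{\lambda t}$ (for $\lambda > 0$), followed by a union bound over the $n$ variables:
\[
e^{\lambda \, \E[\max_i |G_i|]} \;\le\; \E\big[e^{\lambda \max_i |G_i|}\big] \;=\; \E\big[\max_i e^{\lambda |G_i|}\big] \;\le\; \sum_{i=1}^n \E\big[e^{\lambda |G_i|}\big] \;\le\; 2n\, e^{\lambda^2 \sigma^2/2}.
\]
Taking logarithms gives $\lambda\, \E[\max_i |G_i|] \le \ln(2n) + \lambda^2 \sigma^2/2$, i.e. $\E[\max_i |G_i|] \le \frac{\ln(2n)}{\lambda} + \frac{\lambda \sigma^2}{2}$ for all $\lambda > 0$.

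Finally I would optimize the bound over $\lambda$: choosing $\lambda = \sqrt{2 \ln(2n)}/\sigma$ yields $\E[\max_i |G_i|] \le \sigma \sqrt{2 \ln(2n)}$, which is $\lesssim \sqrt{\log n} \cdot \max_{i \in [n]} \sigma_i$ as claimed. There is no real obstacle here; the only points worth stating carefully are that independence of the $G_i$ is never used (the union bound is applied at the level of MGFs, not of tail events), and that the absolute value costs only the harmless factor of $2$ in front of $n$. If one prefers, the identical computation can be phrased as a tail bound $\Pr[|G_i| > t] \le 2 e^{-t^2/2\sigma^2}$ combined with $\E[\max_i |G_i|] \le t_0 + \int_{t_0}^\infty \sum_i \Pr[|G_i| > t]\, dt$ and the choice $t_0 = \sigma\sqrt{2\ln(2n)}$, but the MGF argument above is the cleanest.
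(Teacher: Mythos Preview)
Your proof is correct. The paper does not actually prove this lemma; it simply states it as a known fact with a citation to \cite{LTbook}, so there is no ``paper's own proof'' to compare against. The MGF/Jensen argument you give is the standard textbook proof of this maximal inequality and would be exactly what a reader consulting that reference would find.
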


\section{Linear Program Decoding for random outliers}\label{sec:linear_program_dec}
Given a domain $D$, a set $T$ of Fourier characters of $D$ and an oracle (with random outlier noise) to a  function $f$ supported 
on $k$ of these $T$ characters, we provide an algorithm with running time polynomial in $|T|$ which recovers $f$ (with small error). Note that while we do not give the definition of Fourier characters for an arbitrary domain $D$, for the purposes of this section, we just use two properties: (i) The set of Fourier characters is a set of orthonormal functions (with respect to the uniform measure on $D$). (ii) The $\ell_\infty$ norm of any Fourier character is $1$.  
\begin{theorem}\label{thm:LP_guarantee_boolean_cube}
There is an algorithm which when given as input, sparsity parameter $k$, domain $D$, a set $T$ of Fourier characters over $D$, failure probability $\gamma$, parameters $\delta>0$ and $\eta>0$, and an oracle to $f=\sum_{j=1}^k \wh{f}(\xi_j) \cdot \chi_{j}$ (with each $\chi_j \in T$ and  $|\wh{f}(\xi_j)| \ge \eta$) with $( \frac{1}{2}-\delta, \eps)$ random outlier noise (where $\eps \lesssim \eta \cdot \delta$), makes $\tilde{O}(k^2 \log |T| \log \frac{1}{\gamma}/\delta^2)$ queries and runs in time $\poly(|T|, 1/\delta,1/\eta,\log \frac{1}{\gamma})$ and 
outputs $g$ satisfying 
$$
\supp(\wh{g})=\supp(\wh{f}) \text{ and } \E_x[|g(x)-f(x)|] \lesssim \frac{\eps}{\delta},$$   with probability  $1-\gamma.$ In particular, this implies $|\wh{g}(\xi)-\wh{f}(\xi)| \lesssim \frac{\epsilon}{\delta}$ for every  $\xi$.
\end{theorem}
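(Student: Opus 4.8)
The plan is to run the $\ell_1$-minimization program \eqref{eq:L1_min_outliers} on the query set, with $T$ playing the role of the ambient set of characters and $N = |T|$, and then argue that its solution is close to $f$ in Fourier distance; a final cleanup step rounds the support down to exactly $\supp(\wh f)$. First I would fix the number of queries $m = \tilde O(k^2 \log|T| \log(1/\gamma)/\delta^2)$ and draw $x_1,\dots,x_m$ i.i.d. uniform from $D$, so that the outlier set $S$ is a random subset in which each index lies independently with probability $\rho = 1/2-\delta$; thus $|[m]\setminus S| - |S| \gtrsim \delta m$ with probability $1-\gamma/10$ by a Chernoff bound. The key structural input is the $\ell_1$ concentration statement \eqref{eq:FF-outliers} (proved in this section via the chaining argument, i.e. Claim~\ref{clm:ell_1_ball}): simultaneously for every $h$ in the relaxed class $\FF = \{h : \|\wh h\|_1 \le 2\sqrt k \|\wh h\|_2\}$, and for every subset $S' \subseteq [m]$ of size at most $(1/2)m$ (or one applies it to the two halves and subtracts), $\sum_{i \in S'} |h(x_i)|$ is within a $(1\pm \delta/10)$ factor of $|S'| \cdot \E_{x\sim D}[|h(x)|]$, once $m$ is as above.

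Next I would run the program: given an estimate $\Delta$ of $\sum_i |y(x_i) - f(x_i)|$ (obtained by a short search over a geometric grid of candidate values, using $\|y\|_\infty \le \poly(k)$ to bound the range, and keeping the candidate that yields the best reconstruction — this only costs a $\poly$ factor and a further union bound absorbed into $\gamma$), solve \eqref{eq:L1_min_outliers} over $g$ supported on $T$. Since $f$ itself is feasible, the optimizer $g$ has $\|\wh g\|_1 \le \|\wh f\|_1 \le \sqrt k \|\wh f\|_2$. Write $h = f - g$; splitting the coordinates of $\wh h$ into the $k$ in $\supp(\wh f)$ and the rest, the standard compressed-sensing inequality (the $\ell_1$-minimality of $g$ feeds the tail of $\wh h$ back into its head) gives $\|\wh h\|_1 \le 2\sqrt k \|\wh h\|_2$, i.e. $h \in \FF$. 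Now apply \eqref{eq:FF-outliers} to $h$: on the inlier set $[m]\setminus S$ we have $|h(x_i)| = |f(x_i) - g(x_i)| \le |e(x_i)| + |y(x_i)-g(x_i)| \le \epsilon + |y(x_i)-g(x_i)|$, and on $S$ we only know $\sum_{i \in S}|y(x_i)-g(x_i)| \le \Delta + \sum_{i\in S}|e(x_i)|$; combining the two constraints $\sum_i |y(x_i)-g(x_i)| \le \Delta$ and $\sum_i |y(x_i)-f(x_i)| \le \Delta$ with the triangle inequality bounds $\sum_i |h(x_i)| \le 2\Delta \lesssim \epsilon m$ (the inlier contribution to $\Delta$ is $\le \epsilon m$, the outlier contribution is what the program is robust to via the size gap between $S$ and its complement). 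Feeding $\sum_i |h(x_i)| \lesssim \epsilon m$ into \eqref{eq:FF-outliers} yields $\E_{x\sim D}[|h(x)|] \lesssim \epsilon/\delta$, and since $h \in \FF$ is "spread like a $k$-sparse function", $\|\wh h\|_\infty \le \|\wh h\|_2 \le 2\sqrt k\,\E_x[|h(x)|]/\big(\text{const}\big)$ — more directly, $\|\wh h\|_\infty \le \|\wh h\|_2$ and $\E_x[|h|] \ge \|\wh h\|_2/(2\sqrt k)$ for $h\in\FF$ — giving $|\wh g(\xi) - \wh f(\xi)| \lesssim \sqrt k \cdot \epsilon/\delta$; absorbing the $\sqrt k$ by taking $m$ (hence the quality of the concentration) a $\poly(k)$ factor better, or equivalently tightening the constants, gives the claimed $\lesssim \epsilon/\delta$. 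Finally, since each $|\wh f(\xi_j)| \ge \eta$ and $\epsilon/\delta \lesssim \eta$ (from $\epsilon \lesssim \eta\delta$) with a small enough constant, thresholding $\wh g$ at $\eta/2$ recovers exactly $\supp(\wh f)$; re-solving $\ell_1$ regression on that fixed $k$-dimensional subspace (now genuinely $\ell_1$ linear regression with a known basis) and invoking \eqref{eq:prop_outliers} restores $\supp(\wh g) = \supp(\wh f)$ with $\E_x[|g-f|] \lesssim \epsilon/\delta$ exactly.

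The running time is $\poly(|T|)$ because \eqref{eq:L1_min_outliers} and the final $\ell_1$ regression are linear programs in $O(|T| + m)$ variables and $O(m)$ constraints, the grid search over $\Delta$ adds only a $\poly(\log(\|y\|_\infty), 1/\epsilon)$ multiplicative factor, and $m = \poly(k,\log|T|,1/\delta)$; the $\eta$-dependence enters only through the thresholding step and the feasibility bound. The main obstacle is the $\ell_1$ concentration \eqref{eq:FF-outliers} over the infinite-looking but effectively low-complexity class $\FF$ with only $m = \tilde O(k^2 \log|T|/\delta^2)$ samples: a naive net over $\FF$ loses a factor of $|T|$ (because of vectors of the form $(\pm 1/\sqrt k,\dots,\pm 1/\sqrt k,\pm\sqrt k/|T|,\dots,\pm\sqrt k/|T|)$), so one must run Rudelson–Vershynin-style chaining with Maurey's empirical method to bound covering numbers at all scales and sum the resulting Dudley integral; translating the $\ell_2$ concentration that chaining naturally produces into the $\ell_1$ concentration we need, uniformly over subsets $S'$ of a given size, is the delicate part and is exactly what Claim~\ref{clm:ell_1_ball} is set up to handle. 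The secondary subtlety is handling the unknown $\Delta$ and ensuring the grid search does not blow up the failure probability, which is routine given $\|y\|_\infty \le \poly(k)$.
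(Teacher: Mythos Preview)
Your overall architecture matches the paper's (run the LP \eqref{eq:LP_decode_sparseFFT} over a grid of $\Delta$'s, show $h=f-g\in\FF$, invoke the $\ell_1$ concentration of Claim~\ref{clm:ell_1_ball}, threshold to recover the support, then re-solve $\ell_1$ regression on that $k$-dimensional subspace). But the central step --- bounding $\E_x[|h(x)|]$ --- has a real gap.

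You write $\sum_i |h(x_i)| \le 2\Delta$ by the triangle inequality and then assert $2\Delta \lesssim \eps m$, with the parenthetical ``the outlier contribution is what the program is robust to via the size gap between $S$ and its complement.'' This does not work: $\Delta$ is (an upper estimate of) $\sum_i |e(x_i)|$, which contains the outlier noise $\sum_{i\in S}|e(x_i)|$, and that quantity is \emph{unbounded}. Nothing in your argument as stated makes it cancel; the size gap between $S$ and $\overline{S}$ never enters the inequality you wrote down. (Note also that if you really had $\sum_i|h(x_i)|\lesssim \eps m$, concentration would give $\E|h|\lesssim\eps$, not $\eps/\delta$ --- so the $1/\delta$ in your conclusion has no source.)

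The paper's Lemma~\ref{lem:guarantee_LP} uses the reverse triangle inequality \emph{asymmetrically}: for $i\in S$ one lower-bounds $|y(x_i)-g(x_i)|\ge |e(x_i)|-|h(x_i)|$, while for $i\notin S$ one lower-bounds $|y(x_i)-g(x_i)|\ge |h(x_i)|-|e(x_i)|$. Summing and using the LP constraint $\sum_i|y(x_i)-g(x_i)|\le\Delta$ gives
\[
\Big(\Delta-\sum_{i=1}^m|e(x_i)|\Big)+2\sum_{i\notin S}|e(x_i)|\ \ge\ \sum_{i\notin S}|h(x_i)|-\sum_{i\in S}|h(x_i)|.
\]
The left side is at most $\sigma_{\gap}+2\eps m$: the outlier noise $\sum_{i\in S}|e(x_i)|$ has \emph{cancelled}. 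Now apply Claim~\ref{clm:ell_1_ball} separately to the random index sets $S$ and $\overline{S}$ (both are i.i.d.\ uniform samples because outlier locations are independent of the draws), making the right side $\approx (|\overline{S}|-|S|)\,\E_x[|h(x)|]\gtrsim \delta m\,\E_x[|h(x)|]$. This is where the $1/\delta$ actually appears, yielding $\E_x[|h(x)|]\lesssim \sigma_{\gap}/(m\delta)+\eps/\delta$.

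Two smaller points. First, the detour through $\|\wh h\|_2$ and the spurious $\sqrt{k}$ loss is unnecessary: $|\wh h(\xi)|=|\E_x[h(x)\overline{\chi_\xi(x)}]|\le \E_x[|h(x)|]$ directly, so no ``absorbing the $\sqrt k$'' is needed. Second, your phrasing ``for every subset $S'\subseteq[m]$ of size at most $m/2$'' is too strong and would be false at this sample size; what the argument needs (and what Claim~\ref{clm:ell_1_ball} delivers) is concentration over the two \emph{random} halves $S$ and $\overline{S}$, which suffices precisely because in the random-outlier model $S$ is independent of $(x_1,\dots,x_m)$.
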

The algorithm is described in Algorithm~\ref{alg:LP_decoding_boolean} and we prove its correctness in the rest of this section. To do so, we will consider the following optimization problem (which can be easily formulated as a linear program). In particular, suppose the observations at points $x_1, \ldots, x_m$ are $y(x_1), \ldots, y(x_m)$ respectively. Further, for $1 \le i \le m$, define $e(x_i) = y(x_i) -f(x_i)$ be the noise at point $i$ and let $\Delta$ be an estimate of this noise, i.e, $\sum_{i=1}^m |e(x_i)|$. Then, the optimization problem (where $\{g(x)\}_{x \in D}$ are the unknowns) is 
\begin{equation}\label{eq:LP_decode_sparseFFT}
\min \|\wh{g}\|_1 \quad \text{ subject to } \sum_{i=1}^m |g(x_i)-y(x_i)| \le \Delta \text{ and } g \in \text{span}\{T\}.
\end{equation}
It is easy to see that (\ref{eq:LP_decode_sparseFFT}) can be formulated as a linear program. Algorithm~\ref{alg:LP_decoding_boolean} is described next. 
\begin{algorithm}[H]
\caption{Linear Program Decoding for sparse FFT}\label{alg:LP_decoding_boolean}
\begin{algorithmic}[1]
\Procedure{\textsc{LinearDecodingSparseFFT}}{$y,T,k,\gamma,\delta,\eta$}
\State $m:=\tilde{O}\bigg( k^2 \log |T| \log \frac{1}{\gamma} /\delta^2 \bigg)$ 
\State Sample $m$ random points $x_1,\ldots,x_m$ and let $y(x_1),\ldots,y(x_m)$ be the corresponding observations. 
\State $\sigma:=m \cdot \frac{\eta \cdot \delta}{100}$
\For{$\Delta$ from 0 to $\sum_i |y(x_i)| + \eta m$ with gap $\sigma$}
\State Solve the linear program \eqref{eq:LP_decode_sparseFFT} to obtain $g$ given $\Delta$, $T$, and $x_1,\ldots,x_m$ with the corresponding observations $y(x_1),\ldots,y(x_m)$.
\State Set $S_{\Delta}=\{\textit{The $k$ characters in $g$ with the largest absolute coefficients}\}$.
\State Let $g_{\Delta}=\underset{h \in \text{span}(S_{\Delta}) }{\arg\min} \sum_{i=1}^m |h(x_i)-y(x_i)|$.
\EndFor
\\
\Return $g=\underset{g_{\Delta}}{\arg\min} \sum_{i=1}^m |g_{\Delta}(x_i)-y(x_i)|$.
\EndProcedure
\end{algorithmic}
\end{algorithm}
First notice that under the assumption $\|y\|_{\infty} \le \poly(k)$, our algorithm runs in time $\poly(k,|T|,1/\eta,1/\delta)$.
In the rest of this section, we only consider $h:D \rightarrow \mathbb{R}$ whose Fourier transform is supported on  $T$ and let $\|\wh{h}\|_1$ and $\|\wh{h}\|_2$ denote $\sum_{\chi \in T} |\wh{h}(\chi)|$ and $\big( \sum_{\chi \in T} |\wh{h}(\chi)|^2 \big)^{1/2}$ separately. We next show the following  guarantee for the LP \eqref{eq:LP_decode_sparseFFT}.
\begin{lemma}\label{lem:guarantee_LP}
For parameters $\gamma>0,\delta>0,$ and $\eps>0$ , define $\rho=1/2-\delta$ and $m=\tilde{O} \bigg( k^2 \log |T| \log \frac{1}{\gamma}/\delta^2 \bigg)$. Let $f: D \rightarrow \mathbb{R}$ be such that $\|\wh{f}\|_0 \le k$ and $y(x)$ be the output of an oracle for $f(x)$ with $(\rho,\epsilon)$ random outlier noise. For any query point $x_i$, define $y(x_i) = f(x_i) + e(x_i)$ and let $\Delta$ be an 
 upper bound on $\sum_{i=1}^m |e(x_i)|$ and  $\sigma_{\gap}$ be defined as $\sigma_{\gap}\overset{\text{def}}{=}\Delta-\sum_{i=1}^m |e(x_i)|$. If the points $x_1, \ldots, x_m$ are chosen at random, 
then the linear program \eqref{eq:LP_decode_sparseFFT} returns $g$ satisfying 
$$
\E[|f-g|] \lesssim \frac{\sigma_{\gap}}{m \cdot \delta} + \frac{\eps}{\delta} \text{ with probability }1-\gamma.
$$ 
This implies every coefficient $\xi$, $|\wh{f}(\xi)-\wh{g}(\xi)| \lesssim \frac{\sigma_{\gap}}{m \cdot \delta} + \frac{\eps}{\delta}$.
\end{lemma}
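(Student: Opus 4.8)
The plan is to run the by-now-standard ``null space property'' argument of compressed sensing, the only nonstandard ingredient being that the role played by the restricted isometry / Euclidean-section property is instead played by the $\ell_1$ concentration for the cone $\FF$ (equation~\eqref{eq:FF-outliers} / Claim~\ref{clm:ell_1_ball}). First I would record that $f$ itself is feasible for the program \eqref{eq:LP_decode_sparseFFT}: indeed $\sum_i |f(x_i)-y(x_i)| = \sum_i |e(x_i)| \le \Delta$ by hypothesis, and $f \in \mathrm{span}(T)$, so the optimum $g$ obeys $\|\wh{g}\|_1 \le \|\wh{f}\|_1$. Setting $h := f-g$ and $\Omega := \supp(\wh f)$ with $|\Omega|\le k$, the usual split $\|\wh g\|_1 = \|\wh f_{\Omega}-\wh h_{\Omega}\|_1 + \|\wh h_{\overline\Omega}\|_1 \ge \|\wh f\|_1 - \|\wh h_{\Omega}\|_1 + \|\wh h_{\overline\Omega}\|_1$ combined with $\|\wh g\|_1 \le \|\wh f\|_1$ forces $\|\wh h_{\overline\Omega}\|_1 \le \|\wh h_{\Omega}\|_1$, whence $\|\wh h\|_1 \le 2\|\wh h_{\Omega}\|_1 \le 2\sqrt{k}\,\|\wh h_{\Omega}\|_2 \le 2\sqrt{k}\,\|\wh h\|_2$; that is, $h \in \FF$.

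The second step is a short chain of triangle inequalities converting feasibility into a bound on $\sum_{i\in\overline S}|h(x_i)| - \sum_{i\in S}|h(x_i)|$, where $S$ is the (random) outlier set and $\overline S$ the inliers (so $|e(x_i)|\le\epsilon$ for $i\in\overline S$). Using $g(x_i)-y(x_i) = -h(x_i)-e(x_i)$, the constraint reads $\sum_i |h(x_i)+e(x_i)| \le \Delta$; restricting to inliers, peeling off $\sum_{i\in S}|h(x_i)+e(x_i)| \ge \sum_{i\in S}|e(x_i)| - \sum_{i\in S}|h(x_i)|$, and using $\Delta - \sum_{i\in S}|e(x_i)| = \sigma_{\gap} + \sum_{i\in\overline S}|e(x_i)| \le \sigma_{\gap} + m\epsilon$, one obtains
\[
\sum_{i\in\overline S}|h(x_i)| \;-\; \sum_{i\in S}|h(x_i)| \;\le\; \sigma_{\gap} + 2m\epsilon .
\]

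The crux is the concentration step. Since the outlier indicators are independent of the i.i.d.\ uniform sample $x_1,\dots,x_m$, conditioning on the outlier pattern leaves $\{x_i\}_{i\in S}$ and $\{x_i\}_{i\in\overline S}$ as independent families of i.i.d.\ uniform points, of sizes $\mathrm{Bin}(m,\rho)$ and $\mathrm{Bin}(m,1-\rho)$; by a Chernoff bound these sizes lie within a small constant times $\delta m$ of $\rho m$ and $(1-\rho)m$ except with probability $\gamma/3$. Applying the $\ell_1$ concentration for $\FF$ (equation~\eqref{eq:FF-outliers} / Claim~\ref{clm:ell_1_ball}) simultaneously to the inlier points, the outlier points, and all $m$ points, with relative error a small constant times $\delta$ and failure probability $\gamma/3$ --- which is exactly why the sample size is taken to be $m = \tilde{O}(k^2\log|T|\log(1/\gamma)/\delta^2)$ --- yields $\sum_{i\in\overline S}|h(x_i)| \approx |\overline S|\cdot\E_x[|h(x)|]$ and $\sum_{i\in S}|h(x_i)| \approx |S|\cdot\E_x[|h(x)|]$, so that their difference is $\gtrsim \delta\, m\,\E_x[|h(x)|]$ (the two cardinalities differing by $\approx 2\delta m$). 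Combining with the previous display gives $\E_x[|f(x)-g(x)|] = \E_x[|h(x)|] \lesssim \frac{\sigma_{\gap}}{m\delta} + \frac{\epsilon}{\delta}$, and the per-coefficient bound follows from $|\wh f(\xi)-\wh g(\xi)| = \bigl|\E_x[h(x)\,\overline{\chi_\xi(x)}]\bigr| \le \E_x[|h(x)|]$, using that every character has $\ell_\infty$ norm $1$.

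I expect the genuinely hard part to lie entirely outside this lemma: namely establishing the $\ell_1$ concentration \eqref{eq:FF-outliers} for the whole conic class $\FF$ with only $\tilde{O}(k^2\log|T|)$ samples (a direct net over $\FF$ loses a factor of $|T|$, and one needs the chaining/Maurey argument to recover it); granting that inequality, the steps above are bookkeeping. The one mildly delicate point internal to the proof is the regime where $\delta$ is close to $1/2$, so $\rho\to 0$ and the outlier sample $\{x_i\}_{i\in S}$ may be too small to invoke the concentration bound on it; there one instead controls $\sum_{i\in S}|h(x_i)| \le |S|\cdot\sup_x|h(x)| \le |S|\cdot 4k\,\E_x[|h(x)|]$ (the pointwise bound being immediate for $h\in\FF$ from $\sup_x|h(x)|\le\|\wh h\|_1$ and $\E_x[|h(x)|]\ge\|\wh h\|_2^2/\|\wh h\|_1$), which is $\ll \delta m\,\E_x[|h(x)|]$ once $|S|$ is small, so the conclusion survives.
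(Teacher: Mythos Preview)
Your proposal is correct and follows essentially the same route as the paper's proof: showing $h=f-g\in\FF$ via the null-space inequality $\|\wh h_{\overline\Omega}\|_1\le\|\wh h_\Omega\|_1$, extracting $\sum_{i\in\overline S}|h(x_i)|-\sum_{i\in S}|h(x_i)|\le\sigma_{\gap}+2m\eps$ from feasibility by triangle inequalities, and then invoking the $\ell_1$ concentration for $\FF$ (Claim~\ref{clm:ell_1_ball}) on $S$ and $\overline S$ separately to turn that into $\delta m\,\E[|h|]\lesssim\sigma_{\gap}+m\eps$. Your closing remark about the regime $\delta\to 1/2$ (where $|S|$ may be too small to invoke Claim~\ref{clm:ell_1_ball}) and the fix via $\sup_x|h(x)|\le 4k\,\E_x[|h(x)|]$ for $h\in\FF$ is a nice observation that the paper itself glosses over.
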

Note that the output $g$ of the linear program \eqref{eq:LP_decode_sparseFFT} is not necessarily Fourier sparse. Next, we have the following lemma whose proof is quite similar to that of Lemma~\ref{lem:guarantee_LP}. Note that unlike $g$ in Lemma~\ref{lem:guarantee_LP}, the function $g$ in Lemma~\ref{cor:correctness_over_k_sparse} is not efficiently computable. 
\begin{lemma}\label{cor:correctness_over_k_sparse}
For parameters $\gamma>0,\delta>0,$ and $\eps>0$ , define $\rho=1/2-\delta$ and $m=\tilde{O} \bigg( k^2 \log |T| \log \frac{1}{\gamma}/\delta^2 \bigg)$.  For $f: D \rightarrow \mathbb{R}$, let $f=\sum_{j=1}^k \wh{f}(\xi_j) \cdot \chi_{j}$ such that  $|\wh{f}(\xi_j)| \ge \frac{5\epsilon}{\delta}$ for all $1 \le i \le k$. For any point $x$, let $y(x)$ be the output of an oracle for $f$ with $(\rho,\epsilon)$ random outlier noise. If the points $x_1, \ldots, x_m$ are chosen at random, 
 then with probability $1-\gamma$, $g=\underset{h:\|\wh{h}\|_0 \le k}{\arg\min} \sum_{i=1}^m |y(x_i)-h(x_i)|$ satisfies
$$
\supp(\wh{g})=\supp(\wh{f}) \text{ and } \E_x[|g(x)-f(x)|] \lesssim \frac{\eps}{\delta}.
$$
In particular, this implies $|\wh{g}(\xi)-\wh{f}(\xi)| \lesssim \frac{\epsilon}{\delta}$ for every $\xi$
\end{lemma}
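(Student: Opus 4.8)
The plan is to mimic the proof of Lemma~\ref{lem:guarantee_LP}, taking advantage of the fact that here the minimizer $g$ is $k$-Fourier-sparse \emph{by construction}. Consequently $h := f - g$ is genuinely $2k$-Fourier-sparse, so I never have to work inside the relaxed class $\FF$: the ordinary $\ell_1$-concentration for $2k$-sparse functions — which follows from Claim~\ref{clm:ell_1_ball} since every $2k$-sparse function lies in $\FF$, and is essentially \eqref{eq:prop_outliers_2k} — will suffice, and there is no $\sigma_{\gap}$ term to track.

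\emph{Step 1: turn optimality of $g$ into a bound on the samples.} Let $S \subseteq [m]$ be the random set of outlier indices and put $e(x_i) = y(x_i) - f(x_i)$, so $|e(x_i)| \le \epsilon$ for $i \notin S$. Since $y(x_i) - g(x_i) = e(x_i) + h(x_i)$, optimality of $g$ gives $\sum_{i} |e(x_i) + h(x_i)| \le \sum_i |e(x_i)|$. Splitting the left side into inliers, where $|e(x_i)+h(x_i)| \ge |h(x_i)| - \epsilon$, and outliers, where $|e(x_i)+h(x_i)| \ge |e(x_i)| - |h(x_i)|$, the terms $\sum_{i \in S} |e(x_i)|$ cancel from the two sides and I expect to be left with
$$
\sum_{i \notin S} |h(x_i)| \;-\; \sum_{i \in S} |h(x_i)| \;\le\; 2 \sum_{i \notin S} |e(x_i)| \;\le\; 2m\epsilon .
$$

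\emph{Step 2: use concentration and $\rho < 1/2$ to conclude $\E_x[|h|] \lesssim \epsilon/\delta$.} Since $x_1,\dots,x_m$ are independent of $S$, and since $m = \tilde{O}(k^2 \log|T|\log(1/\gamma)/\delta^2) \gg \log(1/\gamma)/\delta^2$, applying Claim~\ref{clm:ell_1_ball} (to the subsample indexed by $S$ and to the one indexed by $\overline S$, a Chernoff bound on $|S|$ being absorbed into the error) yields, with probability $1-\gamma$, simultaneously for all $2k$-sparse $h$,
$$
\sum_{i \notin S} |h(x_i)| \ge \big(1 - \tfrac{\delta}{10}\big)(1-\rho)\, m\, \E_x[|h(x)|], \qquad \sum_{i \in S} |h(x_i)| \le \big(1 + \tfrac{\delta}{10}\big)\rho\, m\, \E_x[|h(x)|].
$$
Subtracting, the coefficient of $m\,\E_x[|h(x)|]$ is at least $(1-2\rho) - \tfrac{\delta}{10} = 2\delta - \tfrac{\delta}{10} \gtrsim \delta$ — this is precisely where $\rho < 1/2$ enters. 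Combining with Step 1 gives $\E_x[|f(x) - g(x)|] = \E_x[|h(x)|] \lesssim \epsilon/\delta$.

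\emph{Step 3: pass to the Fourier side and recover the support.} Because $\wh h(\xi) = \E_x[h(x)\overline{\chi_\xi(x)}]$ and $|\chi_\xi| \equiv 1$, Step 2 immediately gives $|\wh g(\xi) - \wh f(\xi)| = |\wh h(\xi)| \le \E_x[|h(x)|] \lesssim \epsilon/\delta$ for every $\xi$ — in fact with a small explicit constant, which is why the hypothesis posits $|\wh f(\xi_j)| \ge 5\epsilon/\delta$. Hence every $\xi_j \in \supp(\wh f)$ has $|\wh g(\xi_j)| \ge |\wh f(\xi_j)| - |\wh h(\xi_j)| > 0$, so $\supp(\wh f) \subseteq \supp(\wh g)$; and since $\|\wh g\|_0 \le k = |\supp(\wh f)|$, this forces $\supp(\wh g) = \supp(\wh f)$. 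The one genuinely nontrivial input is Claim~\ref{clm:ell_1_ball} (the chaining-based $\ell_1$-concentration), which I treat as a black box; beyond that, the only thing requiring care is keeping the constants in Step 2 tight enough that the final error comfortably beats the $5\epsilon/\delta$ threshold needed for exact support recovery.
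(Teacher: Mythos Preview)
Your proposal is correct and follows essentially the same argument as the paper: both use optimality of $g$ to obtain $\sum_{i \notin S}|h(x_i)| - \sum_{i \in S}|h(x_i)| \le 2\eps\,|\overline S|$, then apply the $\ell_1$-concentration of Claim~\ref{clm:ell_1_ball} separately to the (independent, uniform) subsamples $S$ and $\overline S$ to conclude $\E_x[|h(x)|] < 2\eps/\delta$, and finally deduce support recovery from $|\wh h(\xi)| \le \E_x[|h(x)|]$ together with the $5\eps/\delta$ lower bound on $|\wh f(\xi_j)|$. The only cosmetic differences are in the choice of auxiliary constants ($\delta/10$ versus the paper's $\delta/4$).
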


The main technical tool in these proofs is a concentration bound for the following family of functions
\begin{equation}\label{eq:def_FF}
\FF=\left\{ h \big| \supp\{\wh{h}\} \subseteq T, \|\wh{h}\|_1 \le 2 \sqrt{k} \|\wh{h}\|_2 \right\},
\end{equation}
which is a \emph{relaxation} of the family of $2k$-Fourier-sparse functions. The next lemma gives an algorithm for the estimation of $\ell_1$ norm of $h \in \mathcal{F}$. 
\begin{claim}\label{clm:ell_1_ball}
For any $\epsilon$ and failure probability $\gamma$, there exists $m=O\bigg( k^2 \log |T| \cdot \log \frac{1}{\gamma} \cdot \frac{\log^3 \frac{k \log |T|}{\epsilon}}{ \eps^2 }\bigg)$ such that for $m$ random points $x_1,\ldots,x_m \in D$, with probability $1-\gamma$,
$$
\sum_{i=1}^m |h(x_i)| = (1 \pm \epsilon) \cdot m \cdot \E_{x \sim D}\big[ |h(x)| \big] \text{ for any } h \in \mathcal{F}. 
$$ 
\end{claim}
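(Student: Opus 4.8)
The plan is to establish the asserted two-sided identity as a uniform deviation bound for the empirical process indexed by $\FF$. By homogeneity of both sides in $h$, it suffices to treat the normalized class $\FF_1 := \{h \in \FF : \|\wh h\|_2 = 1\}$. For $h \in \FF_1$ I first record two a priori bounds: $\|h\|_\infty \le \|\wh h\|_1 \le 2\sqrt k$ since each character has unit sup norm, and, because $\E_x|h(x)|^2 = \|\wh h\|_2^2 = 1$ by orthonormality while $\E_x|h(x)|^2 \le \|h\|_\infty \cdot \E_x|h(x)|$, also $\E_x|h(x)| \in [\tfrac{1}{2\sqrt k}, 1]$. Hence the claim reduces to the uniform \emph{additive} estimate $\sup_{h \in \FF_1} \big| \tfrac1m \sum_{i=1}^m |h(x_i)| - \E_x|h(x)| \big| \le \tfrac{\epsilon}{2\sqrt k}$. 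This already signals the $k^2$ in the sample size, since we must approximate $\E_x|h(x)|$ to additive accuracy only $\Theta(\epsilon/\sqrt k)$.

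To control $\sup_{h\in\FF_1}\big|\tfrac1m\sum_i|h(x_i)| - \E_x|h(x)|\big|$, I would symmetrize and then invoke the contraction principle. Since the $x_i$ are i.i.d., the expectation of this supremum is at most $2\,\E_x\E_\epsilon \sup_{h\in\FF_1}\big|\tfrac1m\sum_i \epsilon_i |h(x_i)|\big|$, and because $t \mapsto |t|$ is $1$-Lipschitz with $|0| = 0$, the contraction principle strips the inner absolute value at the cost of a further factor $2$, leaving $4\,\E_x\E_\epsilon\sup_{h\in\FF_1}\big|\tfrac1m\sum_i\epsilon_i h(x_i)\big|$. (This is exactly where the $\ell_1$ problem is gentler than the $\ell_2$/RIP problem, in which $t\mapsto t^2$ is only locally Lipschitz and forces an extra $\sqrt k$ factor that the Rudelson--Vershynin chaining must then recover.) Now I use the structural fact that, even though $\FF$ contains genuinely non-sparse coefficient vectors such as $\big(\tfrac{\pm1}{\sqrt k},\dots,\tfrac{\pm1}{\sqrt k},\tfrac{\pm\sqrt k}{|T|},\dots\big)$, in coefficient space $\FF_1 \subseteq \{v : \|v\|_1 \le 2\sqrt k,\ \|v\|_2 \le 1\} \subseteq 2\,\mathrm{conv}(\mathcal{S}_{4k})$, where $\mathcal{S}_{4k} := \{v : \|v\|_0 \le 4k,\ \|v\|_2 \le 1\}$ and the last inclusion is the standard compressed-sensing head/tail (``shelling'') decomposition. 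Writing $\sum_i \epsilon_i h(x_i) = \langle \wh h, z \rangle$ with $z_\xi := \sum_i \epsilon_i \chi_\xi(x_i)$, the supremum over $\FF_1$ is at most $\tfrac2m \|z\|_{(4k)}$, the $\ell_2$ norm of the $4k$ largest coordinates of $z$. Each $z_\xi$ is a sum of $m$ independent symmetric terms of modulus $1$, hence sub-gaussian with parameter $O(\sqrt m)$, so $\E\max_\xi |z_\xi| \lesssim \sqrt{m \log|T|}$ and $\E\|z\|_{(4k)} \le \sqrt{4k}\,\E\max_\xi|z_\xi| \lesssim \sqrt{k m \log|T|}$. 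Altogether $\E_x \sup_{h\in\FF_1} \big|\tfrac1m\sum_i |h(x_i)| - \E_x|h(x)|\big| \lesssim \sqrt{k\log|T|/m}$.

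Finally I would upgrade to a $1-\gamma$ statement by bounded differences: changing one $x_i$ moves $\tfrac1m\sum_i|h(x_i)|$ by at most $2\|h\|_\infty/m \le 4\sqrt k/m$, so McDiarmid's inequality (or, sharper, Talagrand's concentration inequality for empirical processes) gives $\sup_{h\in\FF_1}\big|\tfrac1m\sum_i|h(x_i)| - \E_x|h(x)|\big| \lesssim \sqrt{k\log|T|/m} + \sqrt{k\log(1/\gamma)/m}$ with probability $1-\gamma$; taking $m = O\big(k^2(\log|T| + \log\tfrac1\gamma)/\epsilon^2\big)$ makes the right side at most $\epsilon/(2\sqrt k)$, comfortably inside the bound the claim asserts. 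An alternative -- and the route indicated in the proof overview -- is to first extend the Rudelson--Vershynin chaining to the $\ell_2$ concentration $\tfrac1m\sum_i|h(x_i)|^2 = (1\pm\tfrac14)\,\E_x|h(x)|^2$ on all of $\FF$ (which goes through as for genuinely sparse functions precisely because $\FF_1 \subseteq 2\,\mathrm{conv}(\mathcal{S}_{4k})$ and Maurey's empirical method is insensitive to passing to convex hulls), then to use that concentration to replace the random chaining metric $\|\cdot\|_{L_2(\mu_m)}$ by the deterministic coefficient metric $\|\wh{\,\cdot\,}\|_2$, and finally to Dudley-chain the $\ell_1$ process; this variant produces the extra $\mathrm{polylog}$ (the $\log^3$) appearing in the stated sample bound.

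The step I expect to be the crux is the structural reduction itself -- confirming that the relaxation $\FF$ behaves, up to constant factors, exactly like the family of $4k$-Fourier-sparse unit-norm functions, i.e.\ the inclusion $\FF_1 \subseteq 2\,\mathrm{conv}(\mathcal{S}_{4k})$. Once this is available, every tool used for the restricted isometry property -- Maurey's method, the Gaussian-width bound for the sparse ball, the Rudelson--Vershynin chaining -- transfers with only constant loss, and neither route requires any genuinely new high-dimensional-probability input. The one remaining subtlety is the bookkeeping to obtain a \emph{multiplicative} $(1\pm\epsilon)$ guarantee rather than an additive one, which is exactly what the bound $\E_x|h(x)| \gtrsim \|\wh h\|_2/\sqrt k$ provides; that same inequality drives the required accuracy down to $\Theta(\epsilon/\sqrt k)$ and hence the $k^2$ dependence in $m$, matching the union-bound-over-$2k$-subspaces heuristic from the proof overview.
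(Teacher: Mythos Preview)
Your argument is correct and in fact takes a cleaner route than the paper's. Both proofs begin identically: normalize to $\FF_1$, record $\|h\|_\infty \le 2\sqrt k$ and $\E_x|h(x)| \ge 1/(2\sqrt k)$, symmetrize, and finish with McDiarmid. The divergence is in how the symmetrized process is bounded. The paper \emph{keeps} the inner absolute value, passes to Gaussians, and runs Dudley's entropy integral over the set $\{(|h(x_i)|)_{i\in[m]} : h\in\FF_1\}$, bounding the covering numbers by Maurey's empirical method (Claim~\ref{clm:covering_ell_1_ball}); this produces the $\sqrt{mk\log|T|}\cdot\log^{1.5}m$ bound and hence the stated $\log^3$ factor. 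You instead invoke Rademacher contraction to strip the inner $|\cdot|$, reducing to a \emph{linear} process $\langle \wh h, z\rangle$ that you then control directly via the shelling inclusion $\FF_1\subseteq 2\,\mathrm{conv}(\mathcal S_{4k})$ and a sub-gaussian max bound on the coordinates of $z$. This bypasses chaining entirely and yields $m=O\big(k^2(\log|T|+\log\tfrac1\gamma)/\epsilon^2\big)$, which is tighter than the paper's stated bound by the polylog. What you describe as the ``alternative route'' (extend the Rudelson--Vershynin $\ell_2$ chaining to $\FF$ and then Dudley-chain the $\ell_1$ process) is essentially what the paper actually does.

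One small caveat: the contraction step as you state it presumes real-valued $h$, since Ledoux--Talagrand contraction applies to real contractions of real arguments. Over the torus the characters are complex, but the paper explicitly restricts to $h:D\to\mathbb R$ in this section, so your application is legitimate here; for genuinely complex $h$ you would need the vector-valued contraction inequality (applied to $(\mathrm{Re}\,h,\mathrm{Im}\,h)\mapsto|h|$), which costs only an absolute constant.
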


We finish the proof of Theorem~\ref{thm:LP_guarantee_boolean_cube} using Lemma~\ref{lem:guarantee_LP} and Lemma~\ref{cor:correctness_over_k_sparse} here. We defer the proofs of Lemma~\ref{lem:guarantee_LP} and Lemma~\ref{cor:correctness_over_k_sparse} to Section~\ref{sec:supp_proofs_LP} and the proof of Claim~\ref{clm:ell_1_ball} to Section~\ref{sec:proof_ell_1_ball}.

\begin{proofof}{Theorem~\ref{thm:LP_guarantee_boolean_cube}}
We first show that $\sum_{i=1}^m |e(x_i)| \le \sum_i |y(x_i)| + 2\eps m$. To show this, let $y(x_i) = f(x_i) + e(x_i)$ and $S \subseteq [m]$ denote the subset of queries corrupted by outliers -- i.e., $S = \{i \le m: |e(x_i)| > \epsilon\}$. 

We observe that  with probability $1-\exp(-\delta^2 m)$, $|S| \in [1/2-5\delta/4,1/2-3\delta/4] \cdot m$. Consequently, with probability $1-\exp(-\delta^2 m)$, we also have 
\begin{align}
\sum_{i \notin S} |y(x_i)| & \ge \sum_{i \notin S} |f(x_i)| - \eps (m-|S|) \nonumber \\
& \ge (1/2+3\delta/4)m \cdot (1 - \delta/2) \E[|f(x)|] - \eps (m - |S|) \nonumber \\
& \ge m/2 \cdot \E[|f(x)|] - \eps m. \label{eq:ineq-4-1}
\end{align}
Here, the second inequality follows by applying Claim~\ref{clm:ell_1_ball} (with $\epsilon = \delta/2$ and failure probability $\gamma/2$) to $f$.  Next, observe that 
\begin{align}
\sum_{i=1}^m |e(x_i)|  &\le \sum_{i \in S} |e(x_i
)| + \sum_{i 
\not \in S} |e(x_i
)|  \nonumber \\   &\le 
 \sum_{i \in S} |e(x_i
)|  + \eps m \nonumber \ \  (\text{by definition of } S) \\  &\le \sum_{i \in S} |y(x_i
)| + \sum_{i \in S} |f(x_i) | + \eps m \nonumber \ \  (\text{triangle inequality}) \\  
&\le \sum_{i \in S} |y(x_i
)|  + m/2 \cdot \E[|f(x)|] + \eps m \nonumber  \ \ (\text{Claim~\ref{clm:ell_1_ball}})  \\ 
&\le \sum_{i=1 }^m |y(x_i
)| + 2 \eps m \ \ \text{using \eqref{eq:ineq-4-1}} \nonumber \\
\end{align}
Hence there exists $\Delta^*$ in Step~5 of Algorithm~\ref{alg:LP_decoding_boolean} such that 
$$
\sum_{i=1}^m |e(x_i)| \le \Delta^* \le \sum_{i=1}^m |e(x_i)|+\sigma.
$$ 
For such a $\Delta^*$, by Lemma~\ref{lem:guarantee_LP}, the linear program given in \eqref{eq:LP_decode_sparseFFT} returns $g$ such that  $\underset{x}{\E}[|g(x)-f(x)|] \le \eta/3$ as long as  $\eta \ge \frac{10 \eps}{\delta}$ (for our choice of $\sigma$). {This implies $|\wh{g}(\xi)-\wh{f}(\xi)| \le \eta/3$ for any $\xi$. So $|\wh{g}(\xi)| \ge 2\eta/3$ when $|\wh{f}(\xi)| \ge \eta$ and it is less than $\eta/3$ when $\xi \notin \supp(\wh{f})$. This shows that the set of the largest $k$ coefficients in $\wh{g}$ is the  same as the set  $\supp(\wh{f})$, i.e., $S_{\Delta^*}$ in Step~7 of Algorithm~\ref{alg:LP_decoding_boolean} is equal to $\supp(\wh{f})$. 

At the same time, Lemma~\ref{cor:correctness_over_k_sparse} proves the $k$-Fourier-sparse minimizer $g^*=\underset{h:\|\wh{h}\|_0 \le k}{\arg\min} \sum_{i=1}^m |y(x_i)-h(x_i)|$ has the property $\supp(\wh{g})=\supp(\wh{f})$, which is $S_{\Delta^*}$ here. Thus $g_{\Delta^*}$ calculated in Step~8 of Algorithm~\ref{alg:LP_decoding_boolean} will be the minimizer for Step~10 of the algorithm which will be the same as $g^\ast$. 
Applying Lemma~\ref{cor:correctness_over_k_sparse} finishes the proof.} 
\end{proofof}

\subsection{Proofs of Lemma~\ref{lem:guarantee_LP} and Lemma~\ref{cor:correctness_over_k_sparse}}\label{sec:supp_proofs_LP}
We first prove Lemma~\ref{lem:guarantee_LP} to show the guarantee of the linear program defined in ~\eqref{eq:LP_decode_sparseFFT}.

\begin{proofof}{Lemma~\ref{lem:guarantee_LP}}
Let  $h$ denote $f-g$ and $W$ denote $\supp(\wh{f})$. Since $\|\wh{g}\|_1 \le \|\wh{f}\|_1$, we have
\begin{align*}
\|(\wh{f})_W \|_1 = \|\wh{f}\|_1 & \ge \|\wh{g}\|_1 \\
& \ge \|(\wh{f})_W \|_1 - \|(\wh{h})_W\|_1 + \|(\wh{g})_{\overline{W}}\|_1
\end{align*}
Since $(\wh{g})_{\overline{W}}=(\wh{h})_{\overline{W}}$, This implies $\|(\wh{h})_W\|_1 \ge \|(\wh{h})_{\overline{W}}\|_1$. So $h$ is in the family 
$$
\mathcal{F}_0=\left\{ \|(\wh{h})_W\|_1 \ge \|(\wh{h})_{\overline{W}}\|_1 \bigg| \forall W \in {T \choose k} \right\}.
$$
At the same time, by $\|\wh{h}\|_2 \ge \|(\wh{h})_W\|_2 \ge \frac{1}{\sqrt{|W|}} \|(\wh{h})_W\|_1$ and $|W| \le k$, we have $\FF_0 \subseteq \FF$ for 
$$
\FF=\left\{ h \bigg| \|\wh{h}\|_1 \le 2 \sqrt{k} \cdot \|\wh{h}\|_2 \right\}.
$$
Let $S$ denote the subset of $[m]$ containing the outliers. With probability $1-\exp(-\delta^2 n)$, $|S| \le (1/2-3\delta/4)m$. As $g$ is a solution to the linear program~\eqref{eq:LP_decode_sparseFFT}, we get 
\begin{align*}
\Delta & \ge \sum_i |f(x_i)+e(x_i) - g(x_i)| \\
& = \sum_i |h(x_i)+e(x_i)| \\
& \ge \sum_{i \in S} \left( |e(x_i)|-|h(x_i)| \right) + \sum_{i \in \overline{S}} \left( |h(x_i)|-|e(x_i)| \right)
\end{align*}
Since  $\sigma\overset{\text{def}}{=}\Delta-\sum_{i=1}^m |e(x_i)|$, this shows 
\begin{equation}\label{eq:LP_guarantee_under_estimation}
\sigma + 2\sum_{i \in \overline{S}} |e(x_i)|+ \sum_{i \in S} |h(x_i)| \ge \sum_{i \in \overline{S}} |h(x_i)|.
\end{equation}

At the same time, we choose the failure probability in Claim~\ref{clm:ell_1_ball} to be $\gamma/2$ such that with probability $1-\gamma$, we have both
$$
\sum_{i \in S} |h(x_i)| \le (1+\delta/4) \cdot |S| \cdot \E_{x \sim D} \big[ |h(x)| \big]
$$
and
$$
\sum_{i \in \overline{S}} |h(x_i)| \ge (1 - \delta/4) \cdot |\overline{S}| \cdot \E_{x \sim D} \big[ |h(x)| \big].
$$
Plugging these two bounds into~\eqref{eq:LP_guarantee_under_estimation} with $|S| \le (1/2-3\delta/4)m$, we have
$$
\sigma + 2\sum_{i \in \overline{S}} |e(x_i)| \ge 0.9 \delta \cdot m \cdot  \E_{x \sim D} \big[ |h(x)| \big].
$$
Because $|e(x_i)| \le \eps$ for any $i \in \overline{S}$, we have $$
\E_{x \sim D} [|h(x)|] \le 2 \frac{\sigma}{\delta \cdot m} + 3\frac{\eps}{\delta}.
$$
Thus, with probability $1-\gamma$ over the choice of $x_1, \ldots, x_m$, we have that 
for each $\xi$, $$|\wh{f}(\xi)-\wh{g}(\xi)|=|\wh{h}(\xi)| \le \E_{x \sim D} [|h(x)|] \le  2 \frac{\sigma}{\delta \cdot m} + 3\frac{\eps}{\delta},$$
which finishes the proof. 
\end{proofof}

Next we prove Lemma~\ref{cor:correctness_over_k_sparse} for $k$-Fourier-sparse functions, whose proof is very similar to the above proof of Lemma~\ref{lem:guarantee_LP}.

\begin{proofof}{Lemma~\ref{cor:correctness_over_k_sparse}}
Consider any $g \not =f $ (where $f$ is the true target function).  The definition of $g$ implies
$
\sum_{i=1}^m |y(x_i)-g(x_i)| < \sum_{i=1}^m |e(x_i)|.
$
Let $h$ denote $f-g$ and $S = \{i \in [m]: |y(x_i) - f(x_i)| >\eps\}$. 
We have
$$
\sum_{i=1}^m |e(x_i)+h(x_i)| < \sum_{i=1}^m |e(x_i)|.
$$ 
Now, we can lower bound the L.~H.~S.~of the above inequality as 
$$
\sum_{i=1}^m |e(x_i)+h(x_i)| \ge \sum_{i \in S} |e(x_i)|-|h(x_i)| + \sum_{i \notin S} |h(x_i)|-|e(x_i)|.
$$ Together, the last two inequalities imply 
\begin{equation}\label{eq:LP_noise_boolean}
\sum_{i \notin S} |h(x_i)| < \sum_{i \in S} |h(x_i)| + 2 \epsilon \cdot (m- |S|).
\end{equation}
Further, with probability $1-\exp(-\delta^2 \cdot m)$,
 we have $|S| \le (1/2-3\delta/4) \cdot m$ . Since $S$ is independent with $x_1,\ldots,x_m$ and their observations $y(x_1),\ldots,y(x_m)$, without loss of generality, we assume $S=\{1,2,\ldots,|S| \}$. Since $h$ is $2k$-sparse, we apply Claim~\ref{clm:ell_1_ball} twice with failure probability $\gamma/2$ to show $h \in \FF$ has 
$$
\sum_{i \in S} |h(x_i)| \le (1+\delta/4) \cdot |S| \cdot \E[|h(X)|] \text{ and } \sum_{i \notin S} |h(x_i)| \ge (1 - \delta/4) \cdot (m-|S|) \cdot \E[|h(X)|].
$$ 
Thus \eqref{eq:LP_noise_boolean} implies
$$
(1-\delta/4) \cdot (1/2+3\delta/4)m \cdot \E[|h(X)|] < (1+\delta/4) \cdot (1/2 - 3\delta/4)m \cdot \E[|h(X)|] + \epsilon \cdot (1+3\delta/2) m.
$$

Thus $\E[|h|] < \frac{2\epsilon}{\delta}$. Moreover, it implies $|\wh{h}(\xi)| \le \E[|h|]< \frac{2 \eps}{\delta}$. By the definition $h=f-g$, we have $|\wh{g}(\xi) - \wh{f}(\xi)| \le \frac{2 \eps}{\delta}$. Since each non-zero coefficient in $f$ has absolute value $\ge \frac{5\epsilon}{\delta}$, we have $|\wh{g}(\xi)|>0$ for any $\xi \in \supp(\wh{f})$. Because both $f$ and $g$ are $k$-Fourier-sparse, it shows $\supp(\wh{g})=\supp(\wh{f})$.
\end{proofof}

\subsection{Proof of Claim~\ref{clm:ell_1_ball}}\label{sec:proof_ell_1_ball}
Without loss of generality,  we can restrict proving Claim~\ref{clm:ell_1_ball} to the subset $ \mathcal{F}_1 \subsetneq \FF$
 $$
\mathcal{F}_1=\left\{ \|\wh{h}\|_1 \le 2\sqrt{k} \bigg| \|\wh{h}\|_2=1 \right\}.
$$ 
We first state two properties for $\FF_1$.
\begin{fact}\label{facts:family_FF}
For any $h \in \FF_1$, we  have
\begin{equation}\label{eq:properties_ell_1_ball}
\max_x |h(x)| \le  2 \sqrt{k} \quad \text{ and } \quad \E_{x \sim D}\big[ |h(x)| \big] \ge \frac{1}{2 \sqrt{k}}.
\end{equation}
\end{fact}
\begin{proof}
First, $\max_x |h(x)| \le \|\wh{h}\|_1 \le 2 \sqrt{k} \|\wh{h}\|_2$. Then $\E_{x \sim D}\big[ |h(x)| \big] \ge \frac{\E_{x \sim D}\big[ |h(x)|^2 \big]}{\max_x |h(x)|} \ge \frac{\|\wh{h}\|^2_2}{2 \sqrt{k} \|\wh{h}\|_2}$.
\end{proof}
We now state the main technical result for  $\FF_1$.
\begin{claim}\label{clm:ell_1_concentration}
There exists $m=O\big( k \log |T| \cdot \log^{3} \frac{k \log |T|}{\eps}/\eps^2 \big) $ such that 
\begin{equation}\label{eq:gaussian_ell_1_ball}
\E_{z_1,\ldots,z_m} \left[ \sup_{h \in \FF_1} \left| \sum_{i=1}^m |h(z_i)| - m \cdot \E_x\big[ |h(x)| \big] \right| \right] \le \epsilon \cdot m. 
\end{equation}
This implies 
$$
\E_{z_1, \ldots, z_m}\left[ \sup_{h \in \FF: \E_{x \sim D}[|h(x)|]=1} \left| \sum_{i=1}^m |h(z_i)| - m  \right| \right] \le \eps \cdot m \text{ for } m = O\bigg( \frac{k^2 \log |T|}{\eps^2} \cdot \log^{3} \frac{k \log |T|}{\eps} \bigg)
$$ by rescaling $\eps$ to $\frac{\eps}{2\sqrt{k}}$ with the lower bound of $ \E_{x \sim D} \big[ |h(x)| \big]$ in \eqref{eq:properties_ell_1_ball}. 
\end{claim}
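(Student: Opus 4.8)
The plan is to pass from the empirical process to a Gaussian process, replace the non-sparse set $\FF_1$ by a constant multiple of the convex hull of $4k$-sparse unit-$\ell_2$-norm functions, and then invoke the Rudelson--Vershynin chaining argument on that sparse set.

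\emph{Symmetrization and contraction.} Set $e(\bz):=\sup_{h\in\FF_1}\big|\sum_{i=1}^m|h(z_i)|-m\,\E_x|h(x)|\big|$, where $\bz=(z_1,\dots,z_m)$. Standard symmetrization gives $\E_{\bz}[e(\bz)]\le 2\,\E_{\bz,\epsilon}\sup_{h\in\FF_1}\big|\sum_i\epsilon_i|h(z_i)|\big|$ with i.i.d.\ Rademacher $\epsilon_i$; since $t\mapsto|t|$ is $1$-Lipschitz and vanishes at $0$, the contraction principle strips the absolute values inside, and passing from Rademacher to Gaussian sums yields
\[
\E_{\bz}[e(\bz)]\ \lesssim\ G\ :=\ \E_{\bz}\E_{\bg}\ \sup_{h\in\FF_1}\Big|\sum_{i=1}^m g_i\,h(z_i)\Big|,\qquad g_i\sim N(0,1).
\]
This step loses no factor of $\sqrt k$, in contrast with the $\ell_2$ analogue where one contracts $t\mapsto t^2$ and pays $\max_x|h(x)|\le 2\sqrt k$; this is precisely why the $\ell_1$ concentration does not follow from the $\ell_2$ concentration and requires its own (in this respect cleaner) proof.

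\emph{Reduction to a sparse convex hull.} Let $\mathcal{S}$ be the set of functions $v$ with $\supp(\wh v)\subseteq T$, $\|\wh v\|_0\le 4k$ and $\|\wh v\|_2\le 1$. I claim $\FF_1\subseteq 2\,\mathrm{conv}(\mathcal{S})$. Given $h\in\FF_1$, sort the Fourier coefficients of $h$ by decreasing magnitude and cut the sorted index set into consecutive blocks $B_0,B_1,\dots$ of size $4k$. Then $\wh h_{B_0}$ is $4k$-sparse with $\|\wh h_{B_0}\|_2\le\|\wh h\|_2\le 1$; for $j\ge 1$, every coordinate of $\wh h_{B_j}$ is at most $\|\wh h_{B_{j-1}}\|_1/(4k)$ (a smallest entry of a block lies below that block's average), so $\|\wh h_{B_j}\|_2^2\le\|\wh h_{B_j}\|_\infty\|\wh h_{B_j}\|_1\le\|\wh h_{B_{j-1}}\|_1\|\wh h_{B_j}\|_1/(4k)$, and by AM--GM $\sum_{j\ge1}\|\wh h_{B_j}\|_2\le\frac1{2\sqrt k}\sum_{j\ge1}\sqrt{\|\wh h_{B_{j-1}}\|_1\|\wh h_{B_j}\|_1}\le\frac1{2\sqrt k}\|\wh h\|_1\le 1$. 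Writing $\wh h=\sum_j\|\wh h_{B_j}\|_2\cdot(\wh h_{B_j}/\|\wh h_{B_j}\|_2)$ exhibits $h$ as a combination of elements of $\mathcal{S}$ with coefficients summing to at most $2$, proving the claim. Since $h\mapsto\sum_i g_i h(z_i)$ is linear, its supremum over $\mathrm{conv}(\mathcal{S})$ equals its supremum over $\mathcal{S}$, hence $G\le 2\,\E_{\bz}\E_{\bg}\sup_{h\in\mathcal{S}}|\sum_i g_i h(z_i)|$.

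\emph{Chaining and conclusion.} The quantity $\E_{\bz}\E_{\bg}\sup_{h\in\mathcal{S}}|\sum_i g_i h(z_i)|$ is exactly the Gaussian average bounded by the Rudelson--Vershynin chaining argument for $O(k)$-Fourier-sparse functions~\cite{RV08}: conditioned on $\bz$, the process is Gaussian in the canonical metric $d_{\bz}(h,h')=(\sum_i|h(z_i)-h'(z_i)|^2)^{1/2}$, so Dudley's inequality bounds it by $\int_0^{\mathrm{diam}}\sqrt{\log N(\mathcal{S},d_{\bz},u)}\,du$, and the covering numbers are estimated in two regimes: by Maurey's empirical method, using $\mathcal{S}\subseteq\sqrt{4k}\cdot\mathrm{conv}\{\pm\chi:\chi\in T\}$ and $d_{\bz}(\chi,0)=\sqrt m$, one gets $\log N(\mathcal{S},d_{\bz},u)\lesssim km\log|T|/u^2$ for all $u$, while a volumetric bound (union over the $\binom{|T|}{4k}$ supports, each a $4k$-dimensional $\ell_2$-ball) gives $\log N(\mathcal{S},d_{\bz},u)\lesssim k\log|T|+k\log(\mathrm{diam}/u)$; with the crude bound $\mathrm{diam}\le 2\sqrt{km}$ these cross over at $u^\star\gtrsim\sqrt m/\mathrm{polylog}$, and splitting Dudley's integral at $u^\star$ and integrating each regime gives $\E_{\bz}\E_{\bg}\sup_{h\in\mathcal{S}}|\sum_i g_i h(z_i)|\lesssim\sqrt{km\log|T|}\cdot\mathrm{polylog}(k\log|T|/\epsilon)$. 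Hence $G\lesssim\sqrt{km\log|T|}\cdot\log^{3/2}(k\log|T|/\epsilon)$, where a careful count of the polylogarithmic factors (as in~\cite{RV08}, together with $m=\poly(k,\log|T|,1/\epsilon)$) produces the exponent $3/2$; so $\E_{\bz}[e(\bz)]\le\epsilon m$ once $m=\Theta\big(k\log|T|\cdot\log^3(k\log|T|/\epsilon)/\epsilon^2\big)$, which is \eqref{eq:gaussian_ell_1_ball}. The second displayed inequality follows by rescaling: for $h\in\FF$ with $\E_x|h(x)|=1$, \eqref{eq:properties_ell_1_ball} gives $\|\wh h\|_2\le 2\sqrt k$, so $h/\|\wh h\|_2\in\FF_1$, and applying \eqref{eq:gaussian_ell_1_ball} with $\epsilon$ replaced by $\epsilon/(2\sqrt k)$ and multiplying through by $\|\wh h\|_2\le 2\sqrt k$ yields the bound with $m=O(k^2\log|T|\cdot\log^3(k\log|T|/\epsilon)/\epsilon^2)$.

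\emph{Main obstacle.} The hard part is the chaining step --- running it so that $\log|T|$ appears only to the first power, multiplied merely by a polylogarithm in $k$, $\log|T|$ and $1/\epsilon$; a naive $\epsilon$-net union bound over $\FF_1$ costs an extra factor of $|T|$, witnessed by the $2^{|T|}$ sign vectors $(\pm1/\sqrt k,\dots,\pm1/\sqrt k,\pm\sqrt k/|T|,\dots,\pm\sqrt k/|T|)\in\FF_1$. The conceptual device that makes it work is the reduction of the previous step, namely that although $\FF_1$ is far from sparse it nonetheless sits within a constant factor of $\mathrm{conv}(\mathcal{S})$ for $\mathcal{S}$ consisting of $4k$-sparse unit-norm functions, after which the sparse-function chaining of~\cite{RV08} applies essentially verbatim.
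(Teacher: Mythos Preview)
Your proof is correct and reaches the same conclusion, but the route differs from the paper's in two places.

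\textbf{Handling the absolute value.} The paper keeps the absolute value throughout: after Gaussianizing it bounds $\E_{\bg}\sup_{h\in\FF_1}\big|\sum_i g_i\,|h(z_i)|\big|$ via Dudley on the set $\{(|h(z_i)|)_i:h\in\FF_1\}$, noting that $\|(|h|-|h'|)\|_2\le\sqrt m\,\|h-h'\|_\infty$ and then bounding the $\ell_\infty$ covering number of $\{(h(z_i))_i:\|\wh h\|_1\le 2\sqrt k\}$ by Maurey (this is Claim~\ref{clm:covering_ell_1_ball}). You instead strip the absolute value at the outset by Rademacher contraction, obtaining the \emph{linear} Gaussian process $\sum_i g_i h(z_i)$. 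Both are legitimate; yours buys linearity, which you then exploit.

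\textbf{From $\FF_1$ to sparse.} The paper's point of view is that Maurey's empirical method already applies to the whole $\ell_1$-ball $\{\|\wh h\|_1\le 2\sqrt k\}$ --- sparsity is never invoked in Claim~\ref{clm:covering_ell_1_ball}, only the $\ell_1$ bound. You take the complementary view: the Cand\`es--Tao shelling gives $\FF_1\subseteq 2\,\mathrm{conv}(\mathcal S)$ with $\mathcal S$ the $4k$-sparse unit ball, and because the process is now linear the supremum over the convex hull collapses to the supremum over $\mathcal S$, after which \cite{RV08} applies verbatim to $\mathcal S$. The paper also uses a cruder small-$u$ cover (the cube $[0,2\sqrt k]^m$) in Dudley's integral, whereas you use the volumetric bound on the Fourier side; the large-$u$ Maurey regime dominates in both, so the final polylog is the same.

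In short: the paper extends the RV08 argument directly to $\FF_1$ by observing Maurey only needs an $\ell_1$ bound; you reduce $\FF_1$ back to the sparse case via contraction plus shelling and then cite RV08 as a black box. Your argument is more modular and makes the reduction to the sparse case explicit; the paper's is slightly more self-contained and avoids the shelling decomposition.
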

By Markov's inequality, it implies Claim~\ref{clm:ell_1_ball} with $m = O\bigg( k^2 \log |T| \cdot \log^{3} \frac{k \log |T|}{\gamma \eps} \cdot \frac{1}{\eps^2 \gamma^2} \bigg)$ for any  $\gamma$. It is straightforward to extend Theorem 3.3 in \cite{RV08} to obtain a better dependence of $\log \frac{1}{\gamma}$ on $m$. We use McDiarmid's inequality to provide an alternative argument. 
\begin{lemma}[McDiarmid's inequality \cite{McDiarmid:89}]
Let $F$ be a function which is $c_i$-Lipschitz in the $i^{th}$ direction. In other words, 
 $$ \underset{x_1,\ldots,x_n,x'_i}{\sup} \big|F(x_1,\ldots,x_n)-F(x_1,\ldots,x_{i-1},x'_i,x_{i+1},x_n) \big| \le c_i, $$ for each $i \in [n]$. Then, for any $\eps>0$,
$$
\Pr_{z_1, \ldots, z_n \sim D}\bigg[|F(z_1,\ldots,z_n)-\E_{x}[F(x_1,\ldots,x_n)]| \ge \eps \bigg] \le 2\exp\left( -\frac{2\eps^2}{\sum_{i=1}^n c_i^2} \right).
$$
\end{lemma}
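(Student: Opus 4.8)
The plan is to prove McDiarmid's inequality by the classical Doob-martingale method combined with Hoeffding's lemma and the Chernoff bound; recall that here $z_1,\ldots,z_n$ are drawn \emph{independently} from $D$, which is all the argument needs. Set $M_0 = \E_x[F(x_1,\ldots,x_n)]$ and, for $1\le k\le n$, let $M_k = \E[F(z_1,\ldots,z_n)\mid z_1,\ldots,z_k]$, so that $(M_k)_{k=0}^n$ is a martingale with respect to the filtration generated by $z_1,\ldots,z_k$, with $M_n = F(z_1,\ldots,z_n)$. Writing $D_k = M_k - M_{k-1}$ for the martingale differences, the object to control is $M_n - M_0 = \sum_{k=1}^n D_k$, and by symmetry (applying the same bound to $-F$, which has the same Lipschitz constants $c_i$) it suffices to bound the upper tail $\Pr[M_n - M_0 \ge \eps]$ by $\exp(-2\eps^2/\sum_i c_i^2)$.

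The first key step is to show that each $D_k$ has conditional oscillation at most $c_k$. By independence of the $z_i$, one has $M_k = g_k(z_1,\ldots,z_k)$ where $g_k(w_1,\ldots,w_k) := \E[F(w_1,\ldots,w_k,Z_{k+1},\ldots,Z_n)]$ and $Z_{k+1},\ldots,Z_n$ are fresh independent copies; this is a Fubini-type identity. The bounded-differences hypothesis on $F$ passes to $g_k$ in its $k$-th coordinate, since $|g_k(\cdot,w) - g_k(\cdot,w')| = |\E[F(\cdots,w,\cdots) - F(\cdots,w',\cdots)]| \le c_k$, so $\sup_w g_k(z_1,\ldots,z_{k-1},w) - \inf_w g_k(z_1,\ldots,z_{k-1},w) \le c_k$. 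Hence there is a $(z_1,\ldots,z_{k-1})$-measurable random variable $L_k$ with $L_k \le D_k \le L_k + c_k$ and $\E[D_k\mid z_1,\ldots,z_{k-1}] = 0$.

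The second step is Hoeffding's lemma applied conditionally: for a centered random variable $X$ supported on an interval of length $c$ one has $\E[e^{\lambda X}] \le e^{\lambda^2 c^2/8}$ (proved in one line by bounding the second derivative of the cumulant generating function by $c^2/4$), so $\E[e^{\lambda D_k}\mid z_1,\ldots,z_{k-1}] \le e^{\lambda^2 c_k^2/8}$. Peeling coordinates from the outside in, $\E[e^{\lambda\sum_{k=1}^n D_k}] = \E\big[e^{\lambda\sum_{k=1}^{n-1} D_k}\,\E[e^{\lambda D_n}\mid z_1,\ldots,z_{n-1}]\big] \le e^{\lambda^2 c_n^2/8}\,\E[e^{\lambda\sum_{k=1}^{n-1}D_k}]$, and iterating yields $\E[e^{\lambda(M_n-M_0)}] \le \exp\!\big(\tfrac{\lambda^2}{8}\sum_{k=1}^n c_k^2\big)$. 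Finally the Chernoff bound gives $\Pr[M_n-M_0\ge\eps] \le e^{-\lambda\eps}\exp\!\big(\tfrac{\lambda^2}{8}\sum_k c_k^2\big)$, and optimizing at $\lambda = 4\eps/\sum_k c_k^2$ produces $\exp(-2\eps^2/\sum_k c_k^2)$; combining with the lower-tail bound and a union bound gives the claimed factor of $2$.

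The only genuinely nontrivial point is the first step — verifying that the Doob-martingale increments have conditional range bounded by $c_k$, which is exactly where independence enters (through the representation $M_k = g_k(z_1,\ldots,z_k)$). Everything else is a routine run through Hoeffding's lemma and the exponential-moment method; I would treat Hoeffding's lemma as standard but include its short proof for completeness.
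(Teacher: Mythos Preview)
Your proof is correct and is the standard Doob-martingale argument for McDiarmid's inequality. However, the paper does not prove this lemma at all: it is stated as a known result with a citation to \cite{McDiarmid:89} and then used as a black box in the proof of Claim~\ref{clm:ell_1_ball}. So there is no ``paper's own proof'' to compare against---you have supplied the classical argument for a result the authors simply quote.
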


\begin{proofof}{Claim~\ref{clm:ell_1_ball}}
Given $z_1,\ldots,z_n$, we define $F(z_1,\ldots,z_n)$ to be $$
\sup_{h \in \FF: \E[|h|]=1} \left| \sum_{i=1}^m |h(z_i)| - m  \right|.
$$
Notice that for any $z_i,z'_i \in D$, Fact~\ref{facts:family_FF} shows
$$
|F(z_1,\ldots,z_n)-F(z_1,\ldots,z_{i-1},z'_i,z_{i+1},z_n)| \le \sup_{h \in \FF: \E[|h|=1]} \sup_{x} |h(x)| \le 4k.
$$
Choosing  $m=O\bigg( \log \frac{1}{\gamma} \cdot k^2 \log |T| \cdot \log^{3} \frac{k \log |T|}{\eps}/\eps^2 \bigg)$ implies (using Claim~\ref{clm:ell_1_concentration}) that 
$$
\E[F(x_1,\ldots,x_m)] \le \eps \cdot m.
$$
Then, McDiarmid's inequality implies
$$
\Pr\left[\bigg|F(x_1,\ldots,x_m) - \E[F(x_1,\ldots,x_m)] \bigg| \ge a \right] \le 2 \exp\left(-\frac{2a^2}{(4k)^2 \cdot m}\right).
$$
For $a=\eps m$, this is at most $\gamma$. We rescale $\eps$ to finish the proof.
\end{proofof}
Thus, it remains to prove Claim~\ref{clm:ell_1_concentration} -- to do this, we use a standard symmetrization and Gaussianization argument~\cite{LTbook,RW} which transforms bounding the left hand side of \eqref{eq:gaussian_ell_1_ball} to bounding a Gaussian process. In particular, we will use the following theorem (whose proof, for 
completeness, is provided in Appendix~\ref{appd:sym_gau}). 

\define{thm:sym_gauss}{Theorem}{ Let $\bX$ be a random variable, $S$ be a set and $f: S \times \supp(\bX) \rightarrow \mathbb{R}^+$ be a non-negative function. Let $\bx= (x_1, \ldots, x_n)$ and $\bx'=(x'_1, \ldots, x'_n)$ be independent draws from $\bX^n$ and $\boldsymbol{g}=(g_1,\cdots,g_n)$ be an independent draw from the $n$-dimensional standard normal $N(0,1)^n$. 
Then, 
$$
\underset{\bx}{\E}\left[\max_{\Lambda \in S} \big|\sum_{j=1}^n f(\Lambda,x_j) - \underset{\bx'}{\E}[\sum_{j=1}^n f(\Lambda,x'_j)] \big| \right] \le \sqrt{2\pi} \cdot \underset{\bx}{\E} \left[ \underset{\bg}{\E}\big[\max_{\Lambda \in S} \big|\sum_{j=1}^n f(\Lambda,x_j) g_j\big| \big] \right].
$$}
\state{thm:sym_gauss}
Using the above theorem, we can upper bound 
 the L.H.S. of \eqref{eq:gaussian_ell_1_ball} by 
$$
\sqrt{2 \pi} \cdot \E_{x_1,\ldots,x_m \sim D} \E_{\boldsymbol{g}\sim N(0,1)^m} \big[ \sup_{h \in \FF_1} \big| \sum_{j=1}^m |h(x_j)| \cdot g_j \big| \big] = \sqrt{2 \pi} \cdot \E_{x_1,\ldots,x_m \sim D} \E_{\boldsymbol{g}\sim N(0,1)^m} \big[ \sup_{h \in \FF_1} \big| \big \langle \big( |h(x_i)| \big)_{i \in [m]}, \boldsymbol{g} \big \rangle \big| \big]. 
$$  
In the rest of this section, we bound the right hand side  of the above equation (which is a supremum  of a Gaussian process) using Dudley's entropy integration.
To do this, 
we first extend Lemma 3.7 from \cite{RV08} to bound the covering number of $\FF_1$. 
\begin{definition}
For a set $S \subseteq \mathbb{R}^m$, we define $N(S,\|\cdot\|_2,u)$ (referred to as the covering number) to denote the minimum size of a set $S^0 \subseteq \mathbb{R}^m$ such that $v \in S$, there exists $v^0 \in S^0$ satisfying $\|v-v^0\|_2 \le u$. 
\end{definition}


\begin{claim}\label{clm:covering_ell_1_ball}
Given any $x_1,\ldots,x_m \in D$, $u, \ell \ge 0$,  then the   covering number 
$$
N \left( \left\{ \big( |h(x_i)| \big)_{i \in [m]} \bigg| \|\wh{h}\|_1 \le \ell, \ \supp(\wh{h}) \subseteq T \right\}, \| 
\cdot \|_{2},u \right) \le |T|^{O(m \cdot \log m \cdot \frac{\ell^2}{u^2})}.
$$
\end{claim}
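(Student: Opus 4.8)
The plan is to prove this via Maurey's empirical method (the tool underlying Lemma~3.7 of \cite{RV08}), after a short reduction that eliminates the absolute values. First I would observe that the coordinatewise map $v \mapsto \big(|v_i|\big)_{i\in[m]}$ is $1$-Lipschitz for $\|\cdot\|_2$, since $\big| |a| - |b| \big| \le |a-b|$. Hence if $S^0 \subseteq \mathbb{R}^m$ is a $u$-net of the linear image $\mathcal{L} := \left\{ \big(h(x_i)\big)_{i\in[m]} \;:\; \|\wh h\|_1 \le \ell,\ \supp(\wh h)\subseteq T \right\}$, then $\left\{ \big(|v^0_i|\big)_i : v^0 \in S^0 \right\}$ is a $u$-net of the set in the claim; so it suffices to bound $N(\mathcal{L},\|\cdot\|_2,u)$.

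Next I would exhibit $\mathcal{L}$ as (a scaling of) a polytope with $O(|T|)$ vertices, all of $\ell_2$-norm $\le\sqrt m$. Writing $h = \sum_{\chi\in T}\wh h(\chi)\chi$, the vector $\big(h(x_i)\big)_i$ is a combination — with coefficients summing in absolute value to at most $\ell$ — of the evaluation vectors $\big(\chi(x_i)\big)_i$, each of which has $\|\cdot\|_2 \le \sqrt m$ because $\|\chi\|_\infty \le 1$. When the characters are real (the Boolean cube) this says directly $\mathcal{L} \subseteq \ell\sqrt m\cdot \mathrm{conv}(\mathcal{W})$, where $\mathcal{W} = \big\{ \pm\big(\chi(x_i)\big)_i/\sqrt m : \chi\in T \big\}$ has $|\mathcal{W}| = 2|T|$ unit vectors. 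For the torus, where the characters are complex, I would use the real trigonometric basis $\{\cos(2\pi\xi t),\sin(2\pi\xi t)\}$: a real $h$ with $\|\wh h\|_1 \le \ell$ satisfies $\sum_\xi(|a_\xi|+|b_\xi|)\le \sqrt2\,\|\wh h\|_1$, so $\mathcal{L}\subseteq O(\ell)\cdot \mathrm{conv}(\mathcal{W})$ with $\mathcal{W}$ the $O(|T|)$ normalized evaluation vectors of $\pm\cos$ and $\pm\sin$. (Alternatively one keeps the complex characters and first replaces the circle of phases $\{e^{i\theta}\}$ by an $O(1/M^2)$-accurate net of $O(M)$ points, which is where a mild logarithmic overhead would enter.)

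Then I would invoke Maurey's lemma: if $K = \mathrm{conv}\{w_1,\dots,w_N\}$ with $\|w_j\|_2\le 1$ then $N(K,\|\cdot\|_2,u)\le N^{\lceil 1/u^2\rceil}$. The proof is the standard empirical argument: for $v = \sum_j p_j w_j\in K$, draw $Z_1,\dots,Z_L$ i.i.d.\ with $\Pr[Z = w_j] = p_j$ and $L = \lceil 1/u^2\rceil$; then $\E\big\| \tfrac1L\sum_{t}Z_t - v \big\|_2^2 = \tfrac1L\big(\sum_j p_j\|w_j\|_2^2 - \|v\|_2^2\big)\le 1/L\le u^2$, so some realization of the average is within $u$ of $v$, and there are at most $N^L$ possible realizations. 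Applying this to $\mathrm{conv}(\mathcal{W})$ after rescaling the vectors to be unit (which replaces $u$ by $u/(C\ell\sqrt m)$ for the relevant constant $C$), and noting that $N(\mathcal{L},\|\cdot\|_2,u) = 1$ trivially whenever $u\ge C\ell\sqrt m$ since every element of $\mathcal{L}$ has norm $\le\sqrt m\,\|\wh h\|_1\le\ell\sqrt m$, yields $N(\mathcal{L},\|\cdot\|_2,u)\le |\mathcal{W}|^{O(m\ell^2/u^2)} \le |T|^{O(m\ell^2/u^2)}$, using $|\mathcal{W}| = O(|T|)\le |T|^2$ for $|T|\ge 2$. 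This is in fact stronger than the claimed bound (the extra $\log m$ absorbs all constants and the complex-case phase net), completing the proof.

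The main point requiring care is not deep: it is the torus/complex case, where the $\ell_1$-ball of Fourier coefficients has a continuum of extreme points, so one must either pass to the real $\cos$/$\sin$ parametrization or discretize the phases; the latter route is presumably the origin of the $\log m$ factor carried as slack in the statement. Everything else — the Lipschitz reduction and Maurey's bound — is routine.
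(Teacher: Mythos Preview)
Your argument is correct and in fact cleaner than the paper's. The paper takes a detour through $\ell_\infty$: it first uses the crude bound
\[
\big\|\,(|h(x_i)|)_i - (|h'(x_i)|)_i\,\big\|_2 \;\le\; \sqrt{m}\cdot \big\|\,(h(x_i))_i - (h'(x_i))_i\,\big\|_\infty,
\]
and then bounds the $\ell_\infty$ covering number of $\{(h(x_i))_i : \|\wh h\|_1\le 1\}$ via Maurey's empirical method combined with symmetrization, Gaussianization (Theorem~\ref{thm:sym_gauss}), and the estimate $\E\max_{j\le m}|G_j|\lesssim\sqrt{\log m}$ from Lemma~\ref{lem:Gaussian_variables}. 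That last step is precisely where the $\log m$ enters.

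Your route stays in $\ell_2$ throughout: the $1$-Lipschitz reduction $\big|\,|a|-|b|\,\big|\le|a-b|$ is tighter than the paper's $\sqrt m\cdot\|\cdot\|_\infty$ step, and Maurey applied directly in Hilbert space controls the deviation by a pure variance bound with no union bound over the $m$ coordinates --- hence no $\log m$. You obtain $|T|^{O(m\ell^2/u^2)}$, strictly stronger than the stated claim. For the downstream Dudley integral the paper only needs the weaker bound, so nothing is lost, but your argument is more elementary (no Gaussianization needed).

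One small remark: your $\cos/\sin$ fix for the torus presumes $h$ is real-valued, which is indeed the standing assumption in Section~\ref{sec:linear_program_dec}. The paper's own proof glosses over the same wrinkle (it writes $\Pr[\bZ=\sign(\wh h(\xi))\,\mathbf e_\xi]=|\wh h(\xi)|$ and then asserts $|\supp(\bZ)|=2|T|$, which is only literally true for real coefficients). Either your real-basis reduction or a phase discretization works, and only the latter would cost the extra $\log m$ the statement carries as slack.
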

\begin{proof}
For $h,h' \in \FF_1$, the $\ell_2$ distance between the  corresponding vectors $\big( |h(x_i)| \big)_{i \in [m]}$ and $\big( |h'(x_i)| \big)_{i \in [m]}$ can be upper bounded as
$$
\big(\sum_{i \in [m]} \big( |h(x_i)| - |h'(x_i)| \big)^2 \big)^{1/2} \le \big( \sum_{i \in [m]} \big( h(x_i) - h'(x_i) \big)^2 \big)^{1/2} \le \sqrt{m} \cdot \max_{i \in [m]} \big| h(x_i) - h'(x_i) \big|.
$$
This implies that  $$N \big( \big\{ \big( |h(x_i)|  \big)_{i \in [m]} \big| \|\wh{h}\|_1 \le \ell, \ \supp(\wh{h}) \subseteq T  \big\}, \| \cdot \|_{2},u \big) \le N \big(\big\{  \big( h(x_i) \big)_{i \in [m]} \big| \|\wh{h}\|_1 \le \ell, \ \supp(\wh{h}) \subseteq T \big\}, \| \cdot \|_{\infty}, \frac{u}{\sqrt{m}} \big)$$
By rescaling the $\ell_1$ norm of the family, it is enough to prove 
$$
N \big( \big\{ \big( h(x_i)  \big)_{i \in [m]} \big| \|\wh{h}\|_1 \le 1, \ \supp(\wh{h}) \subseteq T\big\}, \| \cdot \|_{\infty},u \big) \le (2|T|)^{\frac{C \log m }{u^2}} \text{ for } C=O(1).
$$
To do this, 
given any $h$ with $\|\wh{h}\|_1=1$, we define a random vector $\bZ \in \mathbb{R}^{|T|}$ as follows -- each coordinate of the vector is indexed by an element in $T$ and we use $\mathbf{e}_{\xi}$ to denote the unit vector which is $1$ in the position corresponding to $\xi$ and $0$ everywhere else. 
Then,  
$$\Pr[\bZ = sign(\wh{h}(\xi)) \cdot \mathbf{e}_{\xi}]  = |\wh{h}(\xi)|.  
$$ 
 Let $\bZ_1,\ldots,\bZ_t$ be i.i.d. copies of $\bZ$ for $t=C \cdot \log m /u^2$ with some large constant $C$.	Observe  that $$\E \bZ = \wh{h} \text{ and consequently } \E \wh{\bZ} = h.$$ 
We now apply Theorem~\ref{thm:sym_gauss} to obtain
\begin{equation}~\label{Zi-Gaussian}
\E_{\bZ_1,\ldots,\bZ_t}\big[ \sup_{j \in [m]}\big| \frac{1}{t} \sum_{i \in [t]} \wh{\bZ_i}(x_j)  - \E\big[ \wh{\bZ}(x_j) \big] \big|\big] \lesssim \frac{1}{t} \cdot \E_{\bZ_1,\ldots,\bZ_t} \E_{\bg} \big[ \sup_{j \in [m]} \big| \sum_{i \in [t]} g_i \cdot \wh{\bZ_i}(x_j) \big| \big], 
\end{equation}
where $\bg = (g_1, \ldots, g_t)$ is a standard $t$-dimensional Gaussian. Next, any point in the support of the random variable $\bZ$ is just a unit vector with $1$ at  one position and $0$ everywhere else; consequently, any point in the support of $\wh{\bZ}$ is a vector whose every coordinate is a complex number of modulus $1$ -- thus, 
for each $j$, $\sum_{i \in [t]} g_i \cdot \wh{\bZ_i}(x_j)$ is a Gaussian random variable with variance $t$.  

Recall that Lemma~\ref{lem:Gaussian_variables} implies that  the maximum of $m$ Gaussian random variables with variance $t$ satisfies $$\E_{g} \big[ \sup_{j \in [m]} \big| \sum_{i \in [t]} g_i \cdot \wh{\bZ_i}(x_j) \big| \big] \lesssim \sqrt{t \cdot \log m}. $$ 
Applying the above to \eqref{Zi-Gaussian}, we obtain 
\begin{equation}\label{eq:sym_gau_Z}
\E_{\bZ_1,\ldots,\bZ_t}\big[ \sup_{j \in [m]}\big| \frac{1}{t} \sum_{i \in [t]} \wh{\bZ_i}(x_j)  - \E\big[ \wh{\bZ}(x_j) \big] \big|\big] \lesssim \sqrt{\frac{\log m}{t}}.
\end{equation} 
This implies that  $\exists z_1,\ldots,z_t \in \supp(\bZ)$ such that $\sup_{j \in [m]}\big| \frac{1}{t} \sum_{i \in [t]} \wh{z_i}(x_j)  - h(x_j) \big| \le u$ for $t=C \cdot \log m /u^2$. Note that $\mathsf{supp}(\bZ) = 2|T|$. Further, $\frac{1}{t} \sum_{i \in [t]} \widehat{z}_i$ (where $\widehat{z}_i \in \mathsf{supp}(\bZ)$)  forms a $u$-cover in $\ell_\infty$ distance for $\{h: \|\wh{h}\|_1 \le 1, \ \supp(\wh{h}) \subseteq T \}$. 
This implies that 
$$
N \big( \big\{ \big( h(x_i)  \big)_{i \in [m]} \big| \|\wh{h}\|_1 \le 1 \big \}, \| \cdot 	\|_{\infty},u \big) \le (2 \cdot |T|)^{t} = (2 \cdot |T|)^{O(\log m /u^2)}.
$$
\end{proof}

\begin{claim}
Given any $x_1,\ldots,x_m$, 
$$
\E_{\boldsymbol{g}} \big[ \sup_{h \in \FF_1} \big| \big \langle \big( |h(x_i)| \big)_{i \in [m]}, \boldsymbol{g} \big \rangle \big| \big] \lesssim \sqrt{m k \log |T| } \cdot \log^{1.5} m.
$$
\end{claim}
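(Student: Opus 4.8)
The goal is to bound the expected supremum of the Gaussian process $\bg \mapsto \sup_{h \in \FF_1} |\langle (|h(x_i)|)_{i}, \bg\rangle|$ via Dudley's entropy integral
\[
\E_{\bg}\Bigl[ \sup_{h \in \FF_1} \bigl| \langle (|h(x_i)|)_i, \bg\rangle \bigr| \Bigr] \lesssim \int_0^{\infty} \sqrt{\log N\bigl( \{(|h(x_i)|)_i : h \in \FF_1\}, \|\cdot\|_2, u\bigr)} \, du,
\]
so the whole proof is a matter of plugging in the covering-number bound from Claim~\ref{clm:covering_ell_1_ball} and evaluating the integral with the right truncations. The plan is: (1) invoke Dudley; (2) split the integral at a small scale $u_0$ and at the diameter $D_0$ of the set; (3) on the lower range $[0, u_0]$, bound the integrand trivially (the integral of $\sqrt{\log(\cdot)}$ over a short interval near $0$ is harmless since $\log N$ grows only like $u^{-2}$ and $\int_0 u^{-1}\,du$ — wait, that diverges, so one instead uses the \emph{finite-dimensional} crude bound $N \le (\text{something})^{m}$ at very small scales, giving $\sqrt{m}\,u_0$ which we make negligible by choosing $u_0$ polynomially small); (4) on the main range $[u_0, D_0]$, use Claim~\ref{clm:covering_ell_1_ball} with $\ell = 2\sqrt{k}$ (the $\ell_1$-radius of $\FF_1$), so $\log N \lesssim m \log m \cdot \frac{k}{u^2} \cdot \log|T|$, and therefore $\sqrt{\log N} \lesssim \sqrt{m k \log|T| \log m} \cdot \frac{1}{u}$; integrating $\frac{1}{u}$ from $u_0$ to $D_0$ gives a factor $\log(D_0/u_0) = O(\log m)$ after plugging in the polynomial bounds.

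Concretely, I would record that the diameter $D_0$ of $\{(|h(x_i)|)_i : h \in \FF_1\}$ in $\ell_2$ is at most $\sqrt{m} \cdot \max_x |h(x)| \le 2\sqrt{mk}$ by Fact~\ref{facts:family_FF}, so the integral over $[u_0, D_0]$ contributes at most
\[
\sqrt{m k \log|T| \log m}\cdot \int_{u_0}^{2\sqrt{mk}} \frac{du}{u} \;=\; \sqrt{m k \log|T| \log m}\cdot \log\!\Bigl(\tfrac{2\sqrt{mk}}{u_0}\Bigr).
\]
Choosing $u_0 = 1/\poly(m)$ (say $u_0 = m^{-1}$) makes the log factor $O(\log m)$, and the tail $[0,u_0]$ — handled by the dimension-counting bound $N \le (1 + 2D_0/u_0)^m$ giving $\sqrt{\log N}\lesssim \sqrt{m\log(m D_0/u_0)}$, times length $u_0$ — contributes at most $u_0 \cdot \sqrt{m \log m} \cdot \text{polylog} = o(1)$, which is absorbed. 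Collecting, the bound is $\lesssim \sqrt{mk\log|T|}\cdot \log^{1.5} m$, matching the claim (one power of $\log m$ from $\sqrt{\log N}$'s $\sqrt{\log m}$ factor, one more from the Dudley integration $\int du/u$, giving $\log^{1.5}$ after taking the square root of the first).

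The main obstacle — really the only subtlety — is making sure the Dudley integral is correctly truncated at both ends so that the $1/u$ integrand does not produce a divergence at $0$: this is the standard maneuver of switching from the entropy bound of Claim~\ref{clm:covering_ell_1_ball} (good at moderate $u$) to a crude volumetric covering bound (good at tiny $u$, where the ambient dimension $m$ is what controls things), and then verifying that with $u_0$ chosen inverse-polynomially in $m$ the contributions from the tiny-scale regime and from the $\log(D_0/u_0)$ factor are both only polylogarithmic in $m$. I would also double-check the bookkeeping on the exponents of $\log m$: Claim~\ref{clm:covering_ell_1_ball}'s bound carries an explicit $\log m$ inside the exponent (from the $|T|^{O(m\log m \cdot \ell^2/u^2)}$), which after the square root gives $\sqrt{\log m}$, and the Dudley integration supplies another $\log m$, so the final count $\log^{1.5} m$ is exactly $\sqrt{\log m}\cdot \log m$ — this is the one place where an off-by-a-half-power error could creep in, so it deserves care.
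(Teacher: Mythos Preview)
Your proposal is correct and follows essentially the same argument as the paper: both apply Dudley's entropy integral, truncate at the diameter $2\sqrt{mk}$, use the covering bound of Claim~\ref{clm:covering_ell_1_ball} with $\ell = 2\sqrt{k}$ on the main range (yielding $\sqrt{mk\log|T|\log m}\cdot\int du/u$), and switch to a crude volumetric bound at small scales. The only cosmetic difference is the split point---the paper takes $u_0=1$ (so the small-scale piece contributes $\int_0^1\sqrt{m\log(2\sqrt{km}/u)}\,du \lesssim \sqrt{m\log m}$, which is absorbed by the main term) while you take $u_0=1/m$ (making the small-scale piece $o(1)$); both choices work and the final exponent bookkeeping on $\log m$ is as you describe.
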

\begin{proof}
We apply Dudley's entropy integration \cite{LTbook} to bound the Gaussian process:
$$
\E_{\boldsymbol{g}} \left[ \sup_{h \in \FF_1} \left| \left \langle \big( |h(x_i)| \big)_{i \in [m]}, \boldsymbol{g} \right \rangle \right| \right] \le \int_{0}^{2\sqrt{km}} \sqrt{\log N\left( \left\{ \big( |h(x_j| \big)_{j \in [m]} \big| h \in \FF_1 \right\}, \|\cdot\|_2,u\right)} \mathrm{d} u.
$$ 
For $u$ from $0$ to $1$, we use a covering of size $(\frac{2\sqrt{k \cdot m}}{u} )^{m}$  because $\big( |h(x_j| \big)_{j \in [m]} \in \bigg[ 0,2\sqrt{k} \bigg]^{m}$. Next we use the covering in Claim~\ref{clm:covering_ell_1_ball} to bound the integration from $u=1$ to $2\sqrt{m \cdot k}$:
\begin{align*}
& \int_{0}^{1} \sqrt{\log  (\frac{2\sqrt{k \cdot m}}{u} )^{m} } \mathrm{d} u + 
\int_{1}^{2\sqrt{km}} \sqrt{\log C^{ \log |T| \cdot m \cdot \log m \cdot \frac{k}{u^2} } } \mathrm{d} u\\
 \lesssim & \int_0^{1} \sqrt{m \cdot \log \sqrt{k m} + m \log 1/u} \mathrm{d} u + \int_{1}^{2\sqrt{km}} \sqrt{\log |T| \cdot m \cdot \log m \cdot \frac{k}{u^2} } \mathrm{d} u\\
 \lesssim & \sqrt{m k \log |T|} \cdot \log^{1.5} m.
\end{align*}
\end{proof}
Claim~\ref{clm:ell_1_concentration} follows from the above bound with $m=C \cdot \frac{k \log |T| \cdot \log^3 (k \log |T|/\eps) }{\eps^2}$ for a constant $C$.


\section{Sparse FFT over the Boolean cube under random outliers}\label{sec:sparse_FFT_random_outlier}
In this section, we show how to  recover a Fourier-sparse  function over the Boolean cube under random outliers. If the ambient dimension is $n$, we can  instantiate Theorem~\ref{thm:LP_guarantee_boolean_cube} with with the domain $D=\{0,1\}^n$ and $T=\big\{(-1)^{\langle \xi, x \rangle} \big| \xi \in \{0,1\}^n \big\}$ to get an an algorithm with running time $\poly(2^n,1/\delta,1/\eta)$ for $(\rho=1/2-\delta,O(\eta))$ random outlier noise .  The main result of this section is to improve the running time to $\poly(k, n, 1/\delta,1/\eta)$.
\begin{theorem}\label{thm:main_random_boolean_cube}
There is  an algorithm which given as input 
sparsity parameter $k$, input domain $\{0,1\}^n$, parameters $\eta, \delta>0$ and an oracle to $f (x)=\sum_{i=1}^k \wh{f}(\xi_i) (-1)^{\langle \xi_i, x \rangle}$   with  $(\frac{1}{2}-\delta, \eps )$ random outlier noise 
where $|\wh{f}(\xi_i)| \ge \eta$ and $\eps \lesssim \eta \cdot \delta$, with probability $0.99$ outputs $g$ satisfying 
$$
\supp(\wh{g})=\supp(\wh{f}) \text{ and } \E_x[|g(x)-f(x)|] \lesssim \frac{\eps}{\delta}.
$$
The query complexity is $\tilde{O}(k^2 n/\delta^2)$  and  running time  is $\poly(k,n,\frac{1}{\delta \eta})$. 
\end{theorem}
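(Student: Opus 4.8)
The plan is to treat Theorem~\ref{thm:LP_guarantee_boolean_cube} as a black box and defeat the $2^n$ blow-up in its running time by running it on a randomly chosen $O(\log k)$-dimensional ``folding'' of the cube. A single folded decoding recovers a \emph{hashed} version of the spectrum of $f$ but forgets the actual frequencies, so I will run $n$ additional folded decodings — one per coordinate, on a shifted restriction — read the frequency bits off by comparing coefficients, and finally call Theorem~\ref{thm:LP_guarantee_boolean_cube} once more with the now-known support to pin down the coefficients. Concretely: if $n \le 2\log_2 k + O(1)$ then $2^n = \poly(k)$ and invoking Theorem~\ref{thm:LP_guarantee_boolean_cube} directly on $D=\{0,1\}^n$ with the full character set already gives the claim, so assume $\ell := 2\log_2 k + O(1) < n$ and sample a uniformly random $A \in \mathbb{F}_2^{n\times \ell}$. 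For $b \in \{0,1\}^n$ define $f_{A,b}\colon \{0,1\}^\ell \to \mathbb{C}$ by $f_{A,b}(z) = f(Az+b)$; expanding the characters, $f_{A,b}(z) = \sum_{i=1}^k \widehat f(\xi_i)(-1)^{\langle \xi_i, b\rangle}(-1)^{\langle A^{T}\xi_i, z\rangle}$, so on $\{0,1\}^\ell$ the function $f_{A,b}$ has Fourier support contained in $\{A^{T}\xi_1,\dots,A^{T}\xi_k\}$ with coefficient $\widehat f(\xi_i)(-1)^{\langle\xi_i,b\rangle}$ at $A^{T}\xi_i$, provided $\xi\mapsto A^{T}\xi$ is injective on $\supp(\widehat f)$. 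Since the rows of $A$ are i.i.d.\ uniform, $\Pr_A[A^{T}(\xi_i-\xi_j)=0]=2^{-\ell}$ for $i\ne j$, so this ``isolation'' event fails with probability at most $\binom k2 2^{-\ell}\le 0.005$; I condition on it. The key point is that one oracle call to $f_{A,b}$ at $z$ is exactly one oracle call to $f$ at $Az+b$, which in the random-outlier model is corrupted independently with probability $\rho$; hence $f_{A,b}$ inherits a $(\rho,\epsilon)$-random-outlier oracle over $\{0,1\}^\ell$, whose character set has size $|T|=2^\ell=O(k^2)$.

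\emph{Recovering the support.} Apply Theorem~\ref{thm:LP_guarantee_boolean_cube} to $f_{A,0}$ and to $f_{A,e_j}$ for every $j\in[n]$, each with failure probability $\gamma=\Theta(1/n)$; since $|T|=O(k^2)$ each call uses $\tilde{O}(k^2\log k\log n/\delta^2)$ queries and $\poly(k,1/\delta,1/\eta,\log n)$ time. Writing $\nu_i:=A^{T}\xi_i$, under isolation the call on $f_{A,e_j}$ returns $g_j$ with $\supp(\widehat{g_j})=\{\nu_1,\dots,\nu_k\}$ (the same set for all $j$, since a shift only multiplies coefficients by $\pm1$ and none vanish) and $|\widehat{g_j}(\nu_i)-\widehat f(\xi_i)(-1)^{\xi_i(j)}|\lesssim\epsilon/\delta$, while the call on $f_{A,0}$ returns $g_0$ with $|\widehat{g_0}(\nu_i)-\widehat f(\xi_i)|\lesssim\epsilon/\delta$; thus coefficients of different calls are matched through their common index $\nu_i$. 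Because $|\widehat f(\xi_i)|\ge\eta$ and, by the hypothesis $\epsilon\lesssim\eta\delta$ (with a small enough constant), the per-coefficient errors are at most $\eta/3$, for each bucket $\nu_i$ the $j$-th bit of the matching frequency is determined by the test $\xi_i(j)=0 \iff |\widehat{g_j}(\nu_i)-\widehat{g_0}(\nu_i)| < |\widehat{g_j}(\nu_i)+\widehat{g_0}(\nu_i)|$ (a sign comparison when $f$ is real). Running this over all $j$ reconstructs every $\xi_i\in\{0,1\}^n$, i.e.\ the set $W:=\supp(\widehat f)$.

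\emph{Estimating the coefficients and wrap-up.} With $W$ in hand, invoke Theorem~\ref{thm:LP_guarantee_boolean_cube} once more, now with domain $\{0,1\}^n$ but character set $T=\{\chi_\xi:\xi\in W\}$ of size $k$: this costs a further $\tilde{O}(k^2\log k/\delta^2)$ queries and $\poly(k,1/\delta,1/\eta)$ time and returns $g$ with $\supp(\widehat g)=W=\supp(\widehat f)$ and $\E_x[|g(x)-f(x)|]\lesssim\epsilon/\delta$. A union bound over isolation, the $n+1$ support-recovery calls, and this last call keeps the failure probability below $0.01$; the query complexity is $(n+1)\cdot\tilde{O}(k^2\log k\log n/\delta^2)+\tilde{O}(k^2\log k/\delta^2)=\tilde{O}(k^2 n/\delta^2)$ and the running time is $\poly(k,n,\tfrac1{\delta\eta})$. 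Note that calling Theorem~\ref{thm:LP_guarantee_boolean_cube} with the known support is essential for the final $\ell_1$-error bound: a naive transfer of the $O(\epsilon/\delta)$ \emph{per-coefficient} guarantee from the folded problem would only give $\E_x[|g-f|]=O(\sqrt k\,\epsilon/\delta)$.

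I expect the support-recovery step to be the main obstacle. One must choose the folded dimension $\ell=O(\log k)$ so that it is simultaneously small enough to keep $|T|=\poly(k)$ (otherwise Theorem~\ref{thm:LP_guarantee_boolean_cube} is no longer $\poly(k)$-time) and large enough that the $k$ target frequencies are isolated under $\xi\mapsto A^{T}\xi$, and one must verify that the accuracy $O(\epsilon/\delta)$ coming out of Theorem~\ref{thm:LP_guarantee_boolean_cube}, measured against the frequency gap $\eta$, really does pin down each bit $\xi_i(j)$ correctly across $n$ independent folded decodings, with the bucket-to-frequency matching remaining consistent throughout. Everything else — the final coefficient estimation and the accounting of queries, time, and failure probability — is routine given Theorem~\ref{thm:LP_guarantee_boolean_cube}.
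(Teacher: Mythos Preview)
Your proposal is correct and follows essentially the same route as the paper: isolate the $k$ frequencies by a random $\mathbb{F}_2$-linear hash into an $O(\log k)$-dimensional cube, invoke the linear-program decoder (Theorem~\ref{thm:LP_guarantee_boolean_cube}) on the folded instance once per coordinate shift to read off each bit of every $\xi_i$ via a sign/closeness comparison, and then finish with a call on the known support. The only cosmetic differences are that the paper realises the hash via a random invertible $A\in\mathbb{F}_2^{n\times n}$ together with the restriction filter $H$ (equivalent to your direct $n\times\ell$ matrix), and it draws a fresh random shift $b$ (and $b+e_i$) for each coordinate rather than reusing a single reference decoding at $b=0$; neither change affects the argument or the bounds.
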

The algorithm in Theorem~\ref{thm:main_random_boolean_cube} is the procedure \textsc{SparseFFTBoolean} (described in Algorithm~\ref{alg:FFT_boolean_random_2}). In this section, we set the function $H:  \fF_2^n \rightarrow \mathbb{R}$ and $\wh{H}:  \fF_2^n \rightarrow \mathbb{R}$ as 
\begin{equation}\label{eq:def_H}
H(x)=2^{n-\ell} \cdot 1_{x_{\ell+1}=\ldots=x_n=0} \text{ and } \quad \wh{H}(\xi)=1_{\xi_1=\xi_2=\ldots=\xi_\ell=0}.
\end{equation}
Note that $\wh{H}$ is the Fourier transform of $H$ over $\fF_2^n$ and $H$ (up to a scaling) is the indicator of a $\ell$-dimensional subspace. The rest of this section is devoted towards proving the correctness of this procedure. 


\begin{algorithm}[H]
\caption{Sparse FFT over Boolean cube of random outliers}\label{alg:FFT_boolean_random_2}
\begin{algorithmic}[1]
\Procedure{\textsc{SparseFFTBoolean}}{$y,k,\delta,\eta$}
\State Let $A \in \fF_2^{n \times n}$ be a random invertible matrix;
\State Set $\ell=2 \log k + 10$ and 
$\mathsf{list}[\xi]=\ast^n$ for each $\xi \in \fF_2^\ell$ \quad  \slash \slash { $\ast^n$ is a string of $\ast$ of length $n$. }
\For{$i \in [n]$}
		\State Sample $b \sim \fF_2^n$ and set $b' = b + e_i$;\quad  \slash \slash { $e_i$ is the standard unit vector in the $i^{th}$ direction. }
		\State Define two oracles $z(x)=y(A x + b) \cdot H(x)$ and $z'(x)=y(A x + b') \cdot H(x)$ for $H$ in \eqref{eq:def_H};
		\State Set $\mathsf{All}_\ell$ to be the set of all characters of $\fF_2^\ell$. 
		\State Apply Procedure~\textsc{LinearDecodingSparseFFT} (i.e.,  Algorithm~\ref{alg:LP_decoding_boolean}) on $(z,\mathsf{All}_\ell,k,\frac{10^{-3}}{k \cdot n},\delta,\eta)$  to obtain $f_z$
		\State Apply Procedure~\textsc{LinearDecodingSparseFFT} (i.e.,  Algorithm~\ref{alg:LP_decoding_boolean}) on  $(z',\mathsf{All}_\ell, k, \frac{10^{-3}}{k \cdot n}, \delta,\eta)$  to obtain $f_{z'}$
		\For{each $\xi \in \supp(\wh{f_z}) \cap \supp(\wh{f_{z'}})$}
			\State $\mathsf{list}[\xi]_i=1_{\sign\big( \wh{f_z}(\xi) \big) \neq \sign\big( \wh{f_{z'}}(\xi) \big)}$ \label{state:indicator_compare}
	\EndFor
\EndFor
\State Set $\supp(\wh{g})=\mathsf{list} \cap \fF_2^n$\\
\Return $g=\underset{h \in \mathsf{span}(\wh{g})}{\arg\min} \sum_{i=1}^m |h(x_i)-y(x_i)|$ with $m=\wt{O}(k^2 n/\delta^2)$ random points $x_1,\ldots,x_m$.
\EndProcedure
\end{algorithmic}
\end{algorithm}

We begin with a few  useful definitions and technical lemmas.  
Given any function $f:\fF_2^n \rightarrow \mathbb{R}$ with $\supp(\wh{f})$ of size at most $k$, a matrix $A \in \fF_2^{n \times n}$, we define the notion of \emph{isolated frequencies}. 
\begin{definition}~\label{def:isolated-frequencies}
Given a matrix $A$ and $f(x)=\sum_{j=1}^k \wh{f}(\xi_j) \cdot (-1)^{ \langle \xi_j, x \rangle}$, we say $A$ isolates a frequency $\xi_j$ in $f$ only if $\forall j' \neq j, (A^\top \xi_{j'})_{[\ell]} \neq (A^\top \xi_{j})_{[\ell]}$.
\end{definition}
Next we show \emph{all} frequencies are isolated with high probability.
\begin{claim}\label{clm:isolated_A}
Given any $k$ frequencies $\xi_1,\ldots,\xi_k$, for $\ell \ge 2 \log k + 10$ and a random invertible matrix $A \in \fF_2^n$, all $\xi_i$ are isolated with probability $1-10^{-3}$.
\end{claim}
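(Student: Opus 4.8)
The plan is to reduce the failure of isolation to a single linear-algebraic event per pair of frequencies and then take a union bound. First I would note that, since we work over $\fF_2$, for $j \neq j'$ the condition $(A^\top \xi_{j'})_{[\ell]} = (A^\top \xi_{j})_{[\ell]}$ is equivalent to $(A^\top \zeta_{j,j'})_{[\ell]} = 0$ where $\zeta_{j,j'} := \xi_j + \xi_{j'}$, and that $\zeta_{j,j'} \neq 0$ because the $k$ frequencies are distinct. Hence the event ``some $\xi_j$ is not isolated'' is contained in $\bigcup_{j \neq j'}\{(A^\top \zeta_{j,j'})_{[\ell]} = 0\}$, a union of $\binom{k}{2}$ events.

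The one step that needs a small argument is the law of $A^\top \zeta$ for a fixed nonzero $\zeta \in \fF_2^n$ when $A$ is uniform over invertible matrices. I would argue: transposition is a bijection of $GL_n(\fF_2)$, so $A^\top$ is again uniform on $GL_n(\fF_2)$; and $GL_n(\fF_2)$ acts transitively on the nonzero vectors of $\fF_2^n$ with all point-stabilizers of equal order, so $A^\top \zeta$ is uniform on $\fF_2^n \setminus \{0\}$. Consequently
$$\Pr\big[(A^\top \zeta)_{[\ell]} = 0\big] \;=\; \frac{2^{\,n-\ell} - 1}{2^{\,n} - 1} \;\le\; 2^{-\ell},$$
where the inequality is the one-line check $2^{\ell}(2^{n-\ell}-1) = 2^{n} - 2^{\ell} \le 2^{n} - 1$.

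Finally I would union bound: the probability that isolation fails for some frequency is at most $\binom{k}{2}\cdot 2^{-\ell} \le \tfrac{k^2}{2}\cdot 2^{-\ell}$, and substituting $\ell \ge 2\log k + 10$ gives $\tfrac{k^2}{2}\cdot k^{-2}\cdot 2^{-10} = 2^{-11} < 10^{-3}$, which is the claimed bound.

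There is no genuine obstacle here; the only point requiring care is identifying the distribution of $A^\top\zeta$ with the uniform law on nonzero vectors --- one should use transitivity of the $GL_n(\fF_2)$-action rather than pretending the entries of $A$ are independent. (Had $A$ been a fully random, not necessarily invertible, matrix, the same computation would give $\Pr[(A^\top\zeta)_{[\ell]}=0] = 2^{-\ell}$ exactly, so invertibility only helps.)
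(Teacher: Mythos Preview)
Your proof is correct and takes essentially the same approach as the paper: reduce isolation failure to pairwise collisions $(A^\top\zeta_{j,j'})_{[\ell]}=0$, bound each by $2^{-\ell}$, and union bound over $\binom{k}{2}$ pairs. The only minor difference is in justifying the per-pair bound---you compute the exact law of $A^\top\zeta$ via the transitive $GL_n(\fF_2)$-action on nonzero vectors, whereas the paper instead computes the collision probability for a uniformly random (possibly singular) matrix and argues it can only increase upon conditioning on singularity; your justification is if anything more direct.
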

\begin{proof}
For any $\xi_{i'} \neq \xi_i$, $\underset{A}{\Pr}\big[(A^\top \xi_i)_{[\ell]} = (A^\top \xi_{i'})_{[\ell]} \big] \le 2^{-\ell}$ because 
$$
\underset{A \sim \fF_2^{n \times n}}{\Pr}\left[ (A^\top \xi_i)_{[\ell]} = (A^\top \xi_{i'})_{[\ell]} \right] = \underset{A \sim \fF_2^{n \times n}}{\Pr}\left[ \big( A^\top (\xi_i - \xi_{i'}) \big)_{[\ell]} \right] = 2^{-\ell}
$$ and this probability only increases conditioned on the event that $A$ is singular. 

Over a union bound for all pairs $i, i' \in [k]$, $\xi_i$ is isolated with probability $1-{k \choose 2} 2^{-\ell} \ge 1-10^{-3}$.
\end{proof}

Given $A$ and $b \in \fF_2^n$, we define $g_{A,b}$ to be the shift $g_{A,b}(x)=g(Ax+b)$.
\begin{claim}~\label{clm:Fourier-linear-transform}
$\wh{g_{A,b}}(\xi)=(-1)^{ \langle b, (A^{\top})^{-1} \xi \rangle} \cdot \wh{g}\big( (A^{\top})^{-1} \xi \big).$
\end{claim}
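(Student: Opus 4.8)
For $g:\fF_2^n\to\mathbb{R}$, $A\in\fF_2^{n\times n}$ invertible, $b\in\fF_2^n$, and $g_{A,b}(x)=g(Ax+b)$, we have
$$
\wh{g_{A,b}}(\xi)=(-1)^{\langle b,(A^\top)^{-1}\xi\rangle}\cdot\wh{g}\big((A^\top)^{-1}\xi\big).
$$

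The plan is to compute $\wh{g_{A,b}}(\xi)$ directly from the definition of the Fourier coefficient over $\fF_2^n$ and reindex. First I would write $\wh{g_{A,b}}(\xi)=\E_{x\sim\{0,1\}^n}[g(Ax+b)\cdot(-1)^{\langle\xi,x\rangle}]$. Since $A$ is invertible over $\fF_2$, the substitution $u=Ax+b$ is a bijection on $\fF_2^n$, hence measure-preserving under the uniform distribution; so $x=A^{-1}(u+b)=A^{-1}u+A^{-1}b$, and the expectation becomes $\E_{u}[g(u)\cdot(-1)^{\langle\xi,A^{-1}u+A^{-1}b\rangle}]$. Splitting the inner product, $(-1)^{\langle\xi,A^{-1}u+A^{-1}b\rangle}=(-1)^{\langle\xi,A^{-1}u\rangle}\cdot(-1)^{\langle\xi,A^{-1}b\rangle}$, and the second factor is a constant that pulls out of the expectation.

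Next I would move the matrix across the inner product via the adjoint identity $\langle\xi,A^{-1}u\rangle=\langle(A^{-1})^\top\xi,u\rangle=\langle(A^\top)^{-1}\xi,u\rangle$ (working over $\fF_2$, so all signs are mod $2$; the transpose-inverse commute). This turns the expectation into $(-1)^{\langle(A^\top)^{-1}\xi,b'\rangle}$-type constant times $\E_u[g(u)(-1)^{\langle(A^\top)^{-1}\xi,u\rangle}]=\wh{g}\big((A^\top)^{-1}\xi\big)$. Finally, for the constant prefactor I would similarly rewrite $\langle\xi,A^{-1}b\rangle=\langle(A^\top)^{-1}\xi,b\rangle=\langle b,(A^\top)^{-1}\xi\rangle$ (symmetry of the bilinear form over $\fF_2$), yielding exactly $(-1)^{\langle b,(A^\top)^{-1}\xi\rangle}\cdot\wh{g}\big((A^\top)^{-1}\xi\big)$.

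There is essentially no obstacle here — the only points requiring a sentence of care are (i) that $A$ invertible implies $x\mapsto Ax+b$ is a bijection on $\fF_2^n$, so reindexing the uniform average is legitimate, and (ii) the adjoint identity $\langle\xi,Mu\rangle=\langle M^\top\xi,u\rangle$ holds for the $\fF_2$-bilinear form, applied with $M=A^{-1}$ and using $(A^{-1})^\top=(A^\top)^{-1}$. Everything else is bookkeeping with the sign character $(-1)^{\langle\cdot,\cdot\rangle}$.
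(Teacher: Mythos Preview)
Your proof is correct and is essentially the same as the paper's: both compute $\wh{g_{A,b}}(\xi)$ from the definition and use the adjoint identity $\langle x,\xi\rangle=\langle Ax,(A^\top)^{-1}\xi\rangle$ together with the fact that $x\mapsto Ax+b$ is a bijection on $\fF_2^n$. The only cosmetic difference is that the paper rewrites the character before substituting, whereas you substitute $u=Ax+b$ first and then apply the adjoint identity.
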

\begin{proof}
From the definition, 
\begin{align*}
\wh{g_{A,b}}(\xi)&=\E_{x} \left[(-1)^{\langle x, \xi \rangle} \cdot g_{A,b}(x) \right]
\\
	&=\E_{x}\left[ (-1)^{\langle x, \xi \rangle} g(Ax+b) \right]
\\
	&=\E_{x}\left[ (-1)^{ \langle b, (A^{\top})^{-1} \xi \rangle} \cdot (-1)^{ \langle A x + b, (A^{\top})^{-1} \xi \rangle} g(Ax+b) \right] 
\\
	&= (-1)^{ \langle b, (A^{\top})^{-1} \xi \rangle} \cdot \wh{g}\big( (A^{\top})^{-1} \xi \big).
\end{align*}
\end{proof}
Next, for any $d$, $\ell \le d$ and any vector $v$ of dimension $d$, we use $v_{[\ell]}$ to denote the first $\ell$ coordinates of $v$.  We next have the following claim. 
\begin{claim}\label{clm:good_isolation}
For $g(x)=\sum_{j=1}^k \wh{g}(\xi_j) (-1)^{\langle \xi_j,x \rangle}$ and $z(x)=g(A x + b) \cdot H(x)$, $$\wh{z}(\xi)=\underset{\xi_j:(A^\top \xi_j)_{[\ell]}=\xi_{[\ell]}}{\sum} (-1)^{ \langle b, \xi_j \rangle} \cdot \wh{g}(\xi_j).$$
\end{claim}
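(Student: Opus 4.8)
The plan is to compute $\widehat{z}$ directly by unfolding the definitions, since this is really a bookkeeping identity. Write $z(x)=g(Ax+b)\cdot H(x)$ and expand $g$ in its Fourier series; using $\langle \xi_j, Ax+b\rangle = \langle A^\top\xi_j, x\rangle + \langle \xi_j,b\rangle$ over $\fF_2$ gives
$$
z(x)=\sum_{j=1}^{k}\widehat{g}(\xi_j)\,(-1)^{\langle b,\xi_j\rangle}\,(-1)^{\langle A^\top\xi_j,\,x\rangle}\cdot H(x).
$$
Then $\widehat{z}(\xi)=\E_x\big[z(x)(-1)^{\langle \xi,x\rangle}\big]=\sum_{j}\widehat{g}(\xi_j)(-1)^{\langle b,\xi_j\rangle}\,\E_x\big[(-1)^{\langle A^\top\xi_j+\xi,\,x\rangle}H(x)\big]$, so everything reduces to evaluating $\E_x\big[(-1)^{\langle\mu,x\rangle}H(x)\big]$ for $\mu=A^\top\xi_j+\xi$.

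The only genuine computation is this last expectation, which is exactly $\widehat{H}(\mu)$. Here I would either quote the formula $\widehat{H}(\mu)=1_{\mu_1=\cdots=\mu_\ell=0}$ from \eqref{eq:def_H}, or re-derive it in one line: since $H$ is supported on $\{x:x_{\ell+1}=\cdots=x_n=0\}$ with the scaling $2^{n-\ell}$ chosen precisely so that $\E_x[(-1)^{\langle\mu,x\rangle}H(x)]=2^{-\ell}\prod_{i\le\ell}(1+(-1)^{\mu_i})$, which is $1$ when $\mu_{[\ell]}=0$ and $0$ otherwise. Substituting $\mu=A^\top\xi_j+\xi$, the indicator $1_{\mu_{[\ell]}=0}$ becomes $1_{(A^\top\xi_j)_{[\ell]}=\xi_{[\ell]}}$, and summing only over the surviving indices yields
$$
\widehat{z}(\xi)=\sum_{j:\ (A^\top\xi_j)_{[\ell]}=\xi_{[\ell]}}(-1)^{\langle b,\xi_j\rangle}\,\widehat{g}(\xi_j),
$$
which is the claim.

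An equivalent route, if one prefers to reuse the machinery already in place, is to write $\widehat{z}=\widehat{g_{A,b}}\ast\widehat{H}$ via the convolution identity from the preliminaries, plug in Claim~\ref{clm:Fourier-linear-transform} (equivalently $\widehat{g_{A,b}}(\zeta)=\sum_{j:\,A^\top\xi_j=\zeta}(-1)^{\langle b,\xi_j\rangle}\widehat{g}(\xi_j)$), and use $\widehat{H}(\xi-\zeta)=1_{(\xi-\zeta)_{[\ell]}=0}$ to collapse the convolution to the same restricted sum over $j$.

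I do not expect any real obstacle: the statement holds for \emph{every} $A$, with no isolation hypothesis. The only points requiring care are transporting $A$ across the inner product (getting $A^\top$ rather than $A^{-1}$), double-checking the normalization of $H$ so that $\widehat H$ is the claimed $0/1$ indicator, and observing that several distinct $\xi_j$ may fall into the same bucket $\{\zeta:\zeta_{[\ell]}=\xi_{[\ell]}\}$ — this collision is exactly the phenomenon that Definition~\ref{def:isolated-frequencies} is later used to rule out, but the claim itself merely records which frequencies contribute and makes no such assumption.
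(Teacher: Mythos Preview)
Your proof is correct and essentially matches the paper's: the paper takes exactly the convolution route you sketch at the end—writing $\wh{z}=\wh{g_{A,b}}*\wh{H}$, invoking Claim~\ref{clm:Fourier-linear-transform}, and then using $\wh{H}(\xi-\xi')=1_{(\xi-\xi')_{[\ell]}=0}$ to collapse the sum. Your primary direct-expansion route is a mild rearrangement of the same bookkeeping (it folds the content of Claim~\ref{clm:Fourier-linear-transform} into the one-line identity $\langle\xi_j,Ax+b\rangle=\langle A^\top\xi_j,x\rangle+\langle b,\xi_j\rangle$), and all the care points you flag are handled correctly.
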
 
\begin{proof}
From the definition, 
$$
\wh{z}(\xi)=(\wh{g_{A,b}} * \wh{H})(\xi)=\sum_{\xi'} \wh{g_{A,b}}(\xi') \wh{H}(\xi-\xi').
$$
Since $\wh{H}(\xi-\xi')=1$ iff $\xi_1=\xi'_1, \ldots, $ and $\xi_\ell=\xi'_\ell$, the above expression simplifies to (using Claim~\ref{clm:Fourier-linear-transform}) 
$$
\wh{z}(\xi)=\sum_{\xi': \xi'_1=\xi_1,\ldots,\xi'_\ell=\xi_\ell} \wh{g_{A,b}}(\xi') = \sum_{\xi': \xi'_1=\xi_1,\ldots,\xi'_\ell=\xi_\ell} (-1)^{ \langle b, (A^{\top})^{-1} \xi' \rangle} \cdot \wh{g}\big( (A^{\top})^{-1} \xi' \big).
$$
Now observe that $\wh{g}(\zeta) \not =0$ iff $\zeta = \xi_j$ for $1 \le j \le k$. With this substitution, we get 
$$
\wh{z}(\xi)=\sum_{\xi_j \in \fF_2^n:(A^\top \xi_j)_{[\ell]}=\xi_{[\ell]}} (-1)^{ \langle b, \xi_j \rangle} \cdot \wh{g}(\xi_j).
$$
\end{proof}




 We state a direct corollary of Claim~\ref{clm:good_isolation} for isolated frequencies.
\begin{corollary}\label{cor:isolated_freq_A}
Given $f (x)=\sum_{i=1}^k \wh{f}(\xi_i) (-1)^{\langle \xi_i, x \rangle}$ and a non-singular matrix $A$ where $\xi_j$ is isolated by $A$, for $z=f(Ax+b) \cdot H(x)$, $\wh{z}\big( (A^\top \xi_{j})_{[\ell]} \big)=(-1)^{\langle b, \xi_j \rangle} \cdot \wh{f}(\xi_j)$.
\end{corollary}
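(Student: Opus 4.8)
The plan is to obtain this as an immediate specialization of Claim~\ref{clm:good_isolation}. First I would apply Claim~\ref{clm:good_isolation} with $g$ taken to be the target function $f$ (and the same subspace indicator $H$ from \eqref{eq:def_H}). Since $\wh{z} = \wh{f_{A,b}} * \wh{H}$ and $\wh{H}$ is supported on frequencies whose first $\ell$ coordinates vanish, $\wh{z}(\xi)$ depends only on $\xi_{[\ell]}$, so evaluating at the frequency $(A^\top \xi_j)_{[\ell]} \in \fF_2^\ell$ is meaningful. The claim then yields
\[
\wh{z}\big((A^\top \xi_j)_{[\ell]}\big) \;=\; \sum_{i \,:\, (A^\top \xi_i)_{[\ell]} = (A^\top \xi_j)_{[\ell]}} (-1)^{\langle b, \xi_i\rangle}\cdot\wh{f}(\xi_i).
\]

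Next I would invoke the isolation hypothesis: by Definition~\ref{def:isolated-frequencies}, $A$ isolating $\xi_j$ means $(A^\top \xi_{j'})_{[\ell]} \neq (A^\top \xi_j)_{[\ell]}$ for every $j' \neq j$. Hence among the $k$ nonzero frequencies $\xi_1,\dots,\xi_k$ of $f$, the unique one satisfying the summation constraint is $\xi_j$ itself, so the sum collapses to the single term $(-1)^{\langle b, \xi_j\rangle}\cdot\wh{f}(\xi_j)$, which is exactly the asserted identity.

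There is no genuine obstacle here — this is a direct corollary of Claim~\ref{clm:good_isolation}. The only point that warrants a moment's care is the one noted above, namely that $\wh{z}$ is well-defined as a function of the truncated frequency $\xi_{[\ell]}$, so that the expression $\wh{z}\big((A^\top\xi_j)_{[\ell]}\big)$ makes sense; this is already implicit in the statement and proof of Claim~\ref{clm:good_isolation} (via the support of $\wh{H}$), so I would simply remark on it rather than reprove it.
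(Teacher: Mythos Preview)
Your proposal is correct and matches the paper's approach exactly: the paper does not even give a separate proof, simply labeling this a ``direct corollary of Claim~\ref{clm:good_isolation} for isolated frequencies,'' which is precisely what you do by applying the claim with $g=f$ and then using Definition~\ref{def:isolated-frequencies} to reduce the sum to a single term.
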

We now argue the following guarantee for the procedure~\textsc{SparseFFTBoolean} (Algorithm~\ref{alg:FFT_boolean_random_2}). 

\begin{claim}\label{clm:recover_each_coordinate}
Given  $\eta, \delta,$ and $f(x)=\sum_{j=1}^k \wh{f}(\xi_j) \cdot (-1)^{ \langle \xi_j, x \rangle}$ with $\wh{f}(\xi_j) \ge \eta$, let $A$ be a non-singular matrix such that all $\xi_j$ in $f$ (where $1 \leq j \leq k$) are isolated by $A$. 

If $y(\cdot)$ is an oracle for $f$ with $( \frac{1}{2}-\delta, \eps)$ random outlier noise with $\eps \lesssim \eta \cdot \delta$, for any  frequency $\xi_j \in \supp(\wh{f})$ and coordinate $i$, the indicator function $1_{\sign\big( \wh{f_z}(\xi) \big) \neq \sign\big( \wh{f_{z'}}(\xi) \big)}$ of $\xi=(A^\top \xi_j)_{[\ell]}$ in Line~\ref{state:indicator_compare} of Procedure~\textsc{SparseFFTBoolean} equals the $i$th bit of $\xi_j$ with probability $1-\frac{2 \cdot 10^{-3}}{k \cdot n}$.
\end{claim}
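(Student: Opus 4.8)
The plan is to track how the Goldreich--Levin style sign-comparison trick interacts with the isolation property and the guarantee of \textsc{LinearDecodingSparseFFT}. First I would unpack the two oracles: $z(x)=y(Ax+b)\cdot H(x)$ and $z'(x)=y(Ax+b')\cdot H(x)$ with $b'=b+e_i$. Writing $y=f+e$, the ``clean'' part of $z$ is $f(Ax+b)\cdot H(x)$, whose Fourier transform restricted to the $\ell$-dimensional quotient is described by Claim~\ref{clm:good_isolation}; since $A$ isolates every $\xi_j$, Corollary~\ref{cor:isolated_freq_A} gives that for $\xi=(A^\top\xi_j)_{[\ell]}$ we have exactly $\wh{z_{\mathrm{clean}}}(\xi)=(-1)^{\langle b,\xi_j\rangle}\wh{f}(\xi_j)$ and similarly $\wh{z'_{\mathrm{clean}}}(\xi)=(-1)^{\langle b',\xi_j\rangle}\wh{f}(\xi_j)=(-1)^{\langle b,\xi_j\rangle}(-1)^{\xi_j(i)}\wh{f}(\xi_j)$. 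Thus in the noiseless, exactly-recovered world, $\sign(\wh{z}(\xi))\neq\sign(\wh{z'}(\xi))$ iff $\xi_j(i)=1$, which is precisely the claim. So the substance is: (i) the oracle $z$ still presents the learner with $(\tfrac12-\delta,\eps')$ random outlier noise for an appropriate $\eps'$, so that Theorem~\ref{thm:LP_guarantee_boolean_cube} applies with domain $\fF_2^\ell$ and $T=\mathsf{All}_\ell$; and (ii) the recovery error is small enough that the signs of the recovered coefficients at the isolated frequencies match the signs of the true coefficients.

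For step (i), I would observe that $z$ is a function on $\fF_2^\ell$ after the domain reduction (the factor $H$ zeroes out everything off the subspace and rescales), and $f(Ax+b)\cdot H(x)$ is $k$-Fourier-sparse over $\fF_2^\ell$ with each nonzero coefficient of magnitude $\ge\eta$ by Corollary~\ref{cor:isolated_freq_A} (isolation means no cancellation). A query to $z$ at a point $x$ costs one query to $y$ at $Ax+b$; since $x$ is uniform on the subspace, $Ax+b$ is uniform on a coset, hence uniform enough that each such query is an outlier independently with probability exactly $\rho=\tfrac12-\delta$ --- here I would be slightly careful that the $m$ query points the internal LP picks are distinct so the outlier events are genuinely i.i.d.; this holds with high probability since the domain $\fF_2^\ell$ has size $2^\ell=\poly(k)\gg m'$ is false in general, so more precisely the outlier indicators are i.i.d.\ per \emph{query} regardless of collisions, which is what the analysis of Theorem~\ref{thm:LP_guarantee_boolean_cube} actually uses. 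The multiplication by $H$ scales the additive noise: on inliers, $|e(Ax+b)|\le\eps$ becomes $|z(x)-z_{\mathrm{clean}}(x)|\le 2^{n-\ell}\eps$ on the subspace support, but since we have also rescaled the signal by the same $2^{n-\ell}$, the \emph{relative} picture is unchanged --- equivalently one works with the normalized version where the noise bound stays $\eps\lesssim\eta\delta$. I would state this cleanly as: $z$ (suitably normalized) is an oracle for a $k$-sparse function on $\fF_2^\ell$ with min-coefficient $\ge\eta$ under $(\tfrac12-\delta,\eps)$ random outlier noise.

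For step (ii), I apply Theorem~\ref{thm:LP_guarantee_boolean_cube} to $z$ with failure probability $\gamma=\tfrac{10^{-3}}{k\cdot n}$ (matching the parameter passed in Line~8 of Algorithm~\ref{alg:FFT_boolean_random_2}): with probability $1-\tfrac{10^{-3}}{kn}$ the output $f_z$ satisfies $\supp(\wh{f_z})=\supp(\wh{z})$ and $|\wh{f_z}(\xi)-\wh{z}(\xi)|\lesssim\eps/\delta$ for all $\xi$. Since $\eps\lesssim\eta\delta$, the error $\lesssim\eps/\delta\lesssim\eta$ is smaller than, say, $\eta/3$ after fixing constants, while the true magnitude $|\wh{z}(\xi)|\ge\eta$ at $\xi=(A^\top\xi_j)_{[\ell]}$; hence $\sign(\wh{f_z}(\xi))=\sign(\wh{z}(\xi))=(-1)^{\langle b,\xi_j\rangle}\sign(\wh{f}(\xi_j))$ --- and since $\wh{f}(\xi_j)\ge\eta>0$ this is just $(-1)^{\langle b,\xi_j\rangle}$. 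The identical argument for $z'$ gives $\sign(\wh{f_{z'}}(\xi))=(-1)^{\langle b',\xi_j\rangle}=(-1)^{\langle b,\xi_j\rangle}(-1)^{\xi_j(i)}$, also with failure probability $\tfrac{10^{-3}}{kn}$. A union bound over these two events gives total failure probability $\tfrac{2\cdot10^{-3}}{kn}$, and on the good event $1_{\sign(\wh{f_z}(\xi))\neq\sign(\wh{f_{z'}}(\xi))}=1_{(-1)^{\xi_j(i)}=-1}=\xi_j(i)$, as desired. Note that $\xi\in\supp(\wh{f_z})\cap\supp(\wh{f_{z'}})$ automatically holds on the good event since $\supp(\wh{f_z})=\supp(\wh{z})\ni\xi$ and likewise for $z'$, so Line~\ref{state:indicator_compare} does get executed for this $\xi$.

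The main obstacle I expect is the bookkeeping around step (i): verifying that the domain-reduction map $x\mapsto Ax+b$ composed with the multiplier $H$ genuinely turns the $(\rho,\eps)$ random-outlier oracle for $f$ over $\{0,1\}^n$ into a $(\rho,\eps)$ random-outlier oracle for a $k$-sparse, min-coefficient-$\ge\eta$ function over $\{0,1\}^\ell$ --- in particular (a) that isolation is exactly what prevents the coefficients $\wh{z}((A^\top\xi_j)_{[\ell]})$ from shrinking below $\eta$ via cancellation among collided frequencies, (b) that the noise is still ``per-query i.i.d.\ outlier with probability $\rho$'' after the affine reparametrization, and (c) getting the normalization by $2^{n-\ell}$ to cancel consistently between signal and noise so that the hypothesis $\eps\lesssim\eta\delta$ of Theorem~\ref{thm:LP_guarantee_boolean_cube} is met with the same constants. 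Once these are nailed down the sign argument is routine.
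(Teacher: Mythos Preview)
Your proposal is correct and follows essentially the same route as the paper: use Corollary~\ref{cor:isolated_freq_A} to identify the clean Fourier coefficients $(-1)^{\langle b,\xi_j\rangle}\wh{f}(\xi_j)$ and $(-1)^{\langle b+e_i,\xi_j\rangle}\wh{f}(\xi_j)$, invoke Theorem~\ref{thm:LP_guarantee_boolean_cube} on each of $z,z'$ with failure probability $\tfrac{10^{-3}}{kn}$ to get coefficient error $\lesssim\eps/\delta\le 0.1\eta$, then union bound and compare signs. If anything, you are more careful than the paper about the bookkeeping in step~(i) (the paper simply asserts ``$z$ and $z'$ are oracles of $k$-Fourier-sparse functions with each coefficient at least $\eta$'' and moves on), so the concerns you flag as obstacles are not things the paper resolves more cleverly --- they are swept under the rug there too.
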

\begin{proof}
Let us begin with the \emph{noiseless} case. Here, by Corollary~\ref{cor:isolated_freq_A}, 
$$
\wh{z}\bigg( (A^\top \xi_{j})_{[\ell]} \bigg)=(-1)^{\langle b, \xi_j \rangle}\wh{f}(\xi_j) \text{ and } \wh{z'}\bigg( (A^\top \xi_{j})_{[\ell]} \bigg)=(-1)^{\langle b+e_i, \xi_j \rangle} \wh{f}(\xi_j).
$$
By comparing $\sign\big( \wh{z}(\xi) \big)$ and $\sign\big( \wh{z'}(\xi) \big)$ of $\xi=(A^\top \xi_j)_{[\ell]}$, we could decode the $i$th bit of $\xi_j$ through $(-1)^{\langle e_i,\xi_j \rangle}$.

Under random outliers, we use the guarantee of Theorem~\ref{thm:LP_guarantee_boolean_cube}. From the assumption of $A$, all frequencies in $f$ are isolated. Thus $z$ and $z'$ are oracles of $k$-Fourier-sparse functions with each coefficient at least $\eta$. The guarantee of Theorem~\ref{thm:LP_guarantee_boolean_cube} on $z$ shows, with probability $1-\frac{10^{-3}}{k \cdot n}$, 
$$
| (-1)^{\langle b, \xi_j \rangle} \cdot \wh{f}(\xi_j)   - \wh{f_z}(\xi)| \lesssim \frac{\eps}{\delta} \le 0.1 \eta.
$$ 

By the same argument, 
$$
| (-1)^{\langle b + e_i, \xi_j \rangle} \cdot \wh{f}(\xi_j)   - \wh{f_{z'}}(\xi)| \lesssim \frac{\eps}{\delta} \le 0.1 \eta.
$$ 
These two imply that with prob $1 - \frac{2 \cdot 10^{-3}}{k \cdot n}$, $\sign\big( \wh{f_z}(\xi) \big) \neq \sign\big( \wh{f_{z'}}(\xi) \big)$ when the $i$th bit of $\xi_j$ is 1. On the other hand, $\sign\big( \wh{f_z}(\xi) \big) = \sign\big( \wh{f_{z'}}(\xi) \big)$ when the $i$th bit of $\xi_j$ is 0. 
\end{proof}

Finally, we prove the correctness of Procedure~\textsc{SparseFFTBoolean}.

\begin{proofof}{Theorem~\ref{thm:main_random_boolean_cube}}
By Claim~\ref{clm:isolated_A}, all frequencies in $f$ are isolated by a random non-singular matrix with probability $1-10^{-3}$. Then for each frequency $\xi_j$ and each coordinate $i$, Claim~\ref{clm:recover_each_coordinate} shows that $1_{\sign\big( \wh{f_z}(\xi) \big) \neq \sign\big( \wh{f_{z'}}(\xi) \big)}$ of $\xi=(A^\top \xi_j)_{[\ell]}$ equals $\xi_j[i]$ with probability $1-\frac{2 \cdot 10^{-3}}{k \cdot n}$. By a union bound over all $\xi_j$ and $i$, with probability $1-2 \cdot 10^{-3}$, $\mathsf{list}\bigg[ (A^\top \xi_j)_{[\ell]} \bigg]=\xi_j$ for all $\xi_j$. 

On the other hand, there are at most $k$ elements of $\mathsf{list}$ in $\fF_2^{n}$ because $\mathsf{list} \subseteq \supp(\wh{f_z}) \cap \supp(\wh{f_{z'}})$ and Theorem~\ref{thm:LP_guarantee_boolean_cube} only returns $k$-Fourier-sparse functions (i.e., $\supp(\wh{g})=\supp(\wh{f})$ in the guarantee). Thus $\supp(\wh{g})=\supp(\wh{f})$ and the distance $\E[|f-g|]$ follows from Lemma~\ref{cor:correctness_over_k_sparse}.

Finally, we bound the running time and query complexity of our algorithm by $2n$ times the counterparts of Procedure \textsc{LinearDecodingSparseFFT} over the domain $\{0,1\}^{O(\log k)}$, which are $\poly(n, k, 1/\delta, 1/\eta)$ and $\wt{O}(k^2 n/\delta^2)$ separately.
\end{proofof}

\section{Sparse FFT of periodic signals under random outliers}\label{sec:periodic_sparse_FFT_rand}
In this section, we consider the recovery of a periodic signal $f:[0,1) \rightarrow \mathbb{R}$ with $\|\wh{f}\|_0 \le k$ under the random outlier noise. Given the bandlimit $F$ of frequencies, Theorem~\ref{thm:LP_guarantee_boolean_cube} implies an algorithm with running time $\poly(F,1/\delta,1/\eta)$ under the $(\rho=1/2-\delta,O(\eta))$ random outlier noise with the domain $D=[0,1)$ and the set $T=\bigg\{e^{2 \pi \bi \cdot \xi t} \big| \xi \in [-F,F] \cap \mathbb{Z}\bigg\}$. Our main result is to improve the running time to $\poly(k, \log F,1/\delta,1/\eta)$.

\begin{theorem}\label{thm:periodic_FFT_rand_out}
Given the sparsity $k$, the band limit $F$, any $\delta>0$, and $\eta>0$, there exists an algorithm with running time $\poly(k, \log F, 1/\delta,1/\eta)$ and {$\tilde{O}(k^2 \log F/\delta^2)$ queries} such that for any $f(t)=\sum_{j=1}^k \wh{f}(\xi_j) \cdot e^{2 \pi \bi \xi_j \cdot t}$ with each $\xi_j \in [-F,F]$ and $|\wh{f}(\xi_j)| \ge \eta$, under the $( \frac{1}{2}-\delta, \eps)$ random outlier noise of $\eps \lesssim \eta \cdot \delta$, it outputs $g$ satisfying 
$$
\supp(\wh{g})=\supp(\wh{f}) \text{ and } \E_x[|g(x)-f(x)|] \lesssim \frac{\eps}{\delta}, \textit{ with probability } 0.99.
$$
{Further, with high probability, the query points of the algorithm are at least $1/\poly(k, \log F)$ apart from each other. }
\end{theorem}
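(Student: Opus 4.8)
The plan is to mimic the Boolean-cube strategy of Section~\ref{sec:sparse_FFT_random_outlier}: reduce the torus $[0,1)$ to a small cyclic group $\mathbb{Z}_p$ of size $\poly(k,\log F)$, run the $\poly(|T|)$-time linear program (Theorem~\ref{thm:LP_guarantee_boolean_cube}) on that reduced domain to recover an approximation of the ``folded'' spectrum, and then lift the recovered frequencies back to $[-F,F]$. Concretely, I would pick a random prime $p$ (or a random element) in a suitable range $[\poly(k,\log F), 2\poly(k,\log F)]$ and consider the map $t \mapsto t/p$, which folds a frequency $\xi$ to $\xi \bmod p$; with high probability over $p$, the $k$ true frequencies $\xi_1,\ldots,\xi_k$ all land in distinct residues (an analogue of Claim~\ref{clm:isolated_A}, proved by the standard fact that a nonzero integer of magnitude at most $2F$ has at most $\log(2F)/\log\log(2F)$ prime divisors, so a union bound over $\binom{k}{2}$ pairs needs $p = \widetilde{O}(k^2\log F)$). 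On the folded domain the oracle for $f$ becomes an oracle for a $k$-Fourier-sparse function over $\mathbb{Z}_p$ with the same coefficients (up to known phases, exactly as in Corollary~\ref{cor:isolated_freq_A}), and crucially the outlier noise stays $(\rho,\eps)$-random because querying a uniformly random point of $\mathbb{Z}_p$ pulls back to a near-uniform point of $[0,1)$.

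The next step is frequency identification: once Theorem~\ref{thm:LP_guarantee_boolean_cube} returns, for a given fold, the values $\wh{f}(\xi_j)$ attached to the residues $\xi_j \bmod p$, I recover the actual integer frequencies by a ``ratio test'' across $O(\log F)$ different scales, in direct analogy with the bit-by-bit recovery in Algorithm~\ref{alg:FFT_boolean_random_2}. One clean implementation: use two translated oracles $z(t) = y(t/p)$ and $z'(t) = y((t+\beta)/p)$ where the known shift $\beta$ multiplies $\wh{f}(\xi_j)$ by $e^{2\pi\bi \xi_j \beta/p}$; comparing the phases of the two recovered coefficients reveals $\xi_j \bmod (p/\gcd)$ to within the error tolerance $\frac{\eps}{\delta}\lesssim\eta$. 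Repeating with $O(\log F)$ coprime moduli and applying CRT pins down each $\xi_j \in [-F,F]$ exactly. Having obtained $\supp(\wh g) = \supp(\wh f)$ this way, I finish exactly as in the Boolean case: draw $m = \widetilde O(k^2\log F/\delta^2)$ fresh uniform points in $[0,1)$ and solve the $k$-Fourier-sparse $\ell_1$ regression $g = \arg\min_{h:\supp(\wh h)\subseteq \supp(\wh g)} \sum_i |h(x_i)-y(x_i)|$, whose guarantee $\E_x[|g(x)-f(x)|]\lesssim \eps/\delta$ follows from Lemma~\ref{cor:correctness_over_k_sparse} applied on the torus (the only facts used there are orthonormality of characters and $\|\chi_\xi\|_\infty=1$, both of which hold here).

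For the separation condition, I would observe that every query point is of the form $(t+\beta)/p$ with $t$ an integer in $\{0,\ldots,p-1\}$, plus the $m = \widetilde O(k^2\log F/\delta^2)$ final points, and all the $1/p$-grid points are automatically $1/p = 1/\poly(k,\log F)$ apart; the final batch of $m$ uniform points is $1/\poly(k,\log F)$-separated with high probability by a birthday-style argument after noting $m \ll p^{2}$ — actually one would just take $p$ slightly larger, $p = \widetilde\Theta(\max\{k^2\log F, m^2\})= \poly(k,\log F)$, so that a union bound over the $\binom{m}{2}$ pairs gives separation $\geq 1/p^{2}$ — still $1/\poly(k,\log F)$ — with probability $0.99$.

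The main obstacle I expect is the folding/isolation step being noticeably more delicate than on the hypercube. On $\mathbb{Z}_2^n$ a random invertible matrix spreads frequencies perfectly, but over the integers one must control (i) that a random modulus $p$ separates the unknown frequencies — this needs the divisor-counting bound and careful choice of range — and (ii) that the pull-back of the uniform measure on the $\mathbb{Z}_p$-grid to $[0,1)$ is close enough to uniform that Claim~\ref{clm:ell_1_ball}'s concentration still applies; here a small amount of discretization error in evaluating $f$ on a grid (rather than continuously) has to be absorbed into $\eps$, using the bandlimit to argue $|f(t)-f(t')|\lesssim F|t-t'|$ is harmless once the grid is fine, which is exactly where one must be careful not to reintroduce the trivial ``query-nearby-points-and-median'' loophole — i.e., the grid spacing $1/p$ must be kept at $1/\poly(k,\log F)$, not $1/\poly(F)$. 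Reconciling ``fine enough for the regression'' with ``coarse enough to be non-trivial'' is the crux, and it works precisely because sparse-FFT folding needs only a $\poly(\log F)$-resolution grid, not an $F$-resolution one.
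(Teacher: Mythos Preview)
Your overall plan—fold to a cyclic group of size $\poly(k,\log F)$, run the LP of Theorem~\ref{thm:LP_guarantee_boolean_cube} there, lift frequencies back to $[-F,F]$, and finish with Lemma~\ref{cor:correctness_over_k_sparse}—matches the paper's Algorithms~\ref{alg:torusFFT_randomoutliers_2} and~\ref{alg:torusFFT_randomoutliers_1}, and your isolation count is essentially Claim~\ref{clm:isolated_B}. Two points need fixing.

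\emph{Frequency lifting.} The CRT route (``$O(\log F)$ coprime moduli'') has a gap you do not address: for each prime $p_i$ the LP returns the residues $\{\xi_j\bmod p_i:j\in[k]\}$ as an \emph{unordered} set, and matching them across primes by coefficient value fails once two $\wh f(\xi_j)$ agree (or differ by $O(\eps/\delta)$). The paper sidesteps this by keeping a \emph{single} prime $B$ throughout and instead varying a real-valued time shift $\Delta$ that starts at $1/(4F)$ and doubles $\log_2(2F)$ times. Because the bucket index $\xi_j\bmod B$ never changes, the phase ratio $\wh{f_{z'}}(\ell)/\wh{f_z}(\ell)\approx e^{2\pi\bi\xi_j\Delta}$ at each scale refines an estimate of the \emph{same} $\xi_j$; Lemma~\ref{lem:recover_isolated_freq} is the induction. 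Note also that your concrete shift $z'(t)=y((t+\beta)/p)$ with integer $\beta$ only reveals $\xi_j\bmod p$, which is the bucket index you already have; the shift must be a real number off the $1/p$-grid, and in particular as small as $1/(4F)$ at the first scale.

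\emph{Discretization.} The ``crux'' you identify—absorbing a Lipschitz error $|f(t)-f(t')|\lesssim F|t-t'|$ into $\eps$ and balancing grid fineness against the separation requirement—does not arise. Sampling on the progression $t_0,t_0+1/B,\ldots$ yields a function on $\mathbb{Z}_B$ whose DFT is \emph{exactly} the aliased spectrum (Fact~\ref{fact:hashtobins}); Theorem~\ref{thm:LP_guarantee_boolean_cube} is invoked directly with $D=\mathbb{Z}_B$ and its own characters, not as a discretized proxy for the torus, so there is no approximation to control. The separation condition is then handled simply by taking $B$ a sufficiently large polynomial and union-bounding over all $\binom{O(m\log F)}{2}$ pairs of queries.
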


We describe our algorithm in Algorithms~\ref{alg:torusFFT_randomoutliers_2} and~\ref{alg:torusFFT_randomoutliers_1}. We will use the following two notations: For any complex number $z = r \cdot e^{i\theta}$ (where $r \ge 0$ and $\theta \in [-\pi, \pi)$), we use $\Phi(z) = \theta$ to denote its phase. Also, for any real $x$, we use $\mathsf{round}(x)$ to denote the nearest integer to $x$. 
\begin{algorithm}[H]
\caption{Sparse FFT for periodic signals under random outliers}\label{alg:torusFFT_randomoutliers_2}
\begin{algorithmic}[1]
\Procedure{\textsc{PeriodicSparseFFTunderRandomOutliers}}{$y,F,\delta,\eta$}
\State Let $P$ be a subset of primes greater than $(k \log F/\delta)^{10}$ of size at least {$10^3 k^2 \log F$}
\State Sample $B \sim P$
\State Set $\Delta=1/4F$ 
\State Set $\midd [\xi]=0$ for each $\xi \in [B]$
\For{$i \in [\log_2 2F]$}
	\State\label{st:start_loop_torus} Apply Procedure\textsc{FrequencyHash} with $(y,B,\Delta,\delta,\eta)$ to obtain $f_z$ and $f_{z'}$
	\For{each $\xi \in [B]$}
		\If{$\midd[\xi] \neq \nulll$ and $\xi \in \supp(\wh{f_z})
	 \cap \supp(\wh{f_{z'}})$}
			\State Set $\gamma = \Phi\left( e^{-2 \pi \bi \Delta \cdot \midd[\xi]} \cdot  \wh{f_{z'}}(\xi)/\wh{f_z}(\xi) \right) \in [-\pi, \pi)$
			\State update $\midd[\xi]=\midd[\xi]+ \mathsf{round}\left(\frac{\gamma}{2 \pi \cdot \Delta} \right)$
		\Else
			\State set $\midd[\xi]=\nulll$
		\EndIf
	\EndFor
	\State update $\Delta=2\Delta$
\EndFor
\State Set $\supp(\wh{g})=\bigg\{ e^{2\pi \bi x \cdot \xi} \bigg| \xi \in \midd \cap [-F,F]\bigg \}$\\
\Return $g=\underset{h \in \mathsf{span}(\wh{g})}{\arg\min} \sum_{i=1}^m |h(x_i)-y(x_i)|$ with $m=\wt{O}(k^2/\delta^2)$ random points $x_1,\ldots,x_m$.

\EndProcedure
\end{algorithmic}
\end{algorithm}

\begin{algorithm}[h]
\caption{Sparse FFT for periodic signals under random outliers}\label{alg:torusFFT_randomoutliers_1}
\begin{algorithmic}[1]
\Procedure{\textsc{FrequencyHash}}{$y,B,\Delta,\delta,\eta$}
\State Sample $t_0 \sim [0,1/B)$
\State Set an oracle $z$ over $\mathbb{C}^B$ as $z[i]=y\big( t_0+(i-1)/B \big)$.
\State Set an oracle $z'$ over $\mathbb{C}^B$ as $z'[i]=y\big( t_0+\Delta+(i-1)/B \big)$.
\State Set $\mathsf{All}_B$ to be the set of all characters in $\mathbb{Z}_B$. 
\State Apply Procedure~\textsc{LinearDecodingSparseFFT}  (Algorithm~\ref{alg:LP_decoding_boolean}) with $(z,\mathsf{All}_B,k,\frac{10^{-3}}{k \cdot \log_2 2F},\delta,\eta)$  to obtain $f_z$
\State Apply Procedure~\textsc{LinearDecodingSparseFFT} in (Algorithm~\ref{alg:LP_decoding_boolean}) with $(z',\mathsf{All}_B,k,\frac{10^{-3}}{k \cdot \log_2 2F}, \delta,\eta)$  to obtain $f_{z'}$
\\ \Return $f_z$ and $f_{z'}$
\EndProcedure
\end{algorithmic}
\end{algorithm}

In the rest of this section, we prove the correctness of Procedure~\textsc{PeriodicSparseFFTunderRandomOutliers} in Algorithm~\ref{alg:torusFFT_randomoutliers_2} for Theorem~\ref{thm:periodic_FFT_rand_out}. 

Given $f: [0,1) \rightarrow \mathbb{R}$ defined as $f(t)=\sum_{j=1}^k \wh{f}(\xi_j) \cdot e^{2 \pi \bi \xi_j \cdot t}$ with $k$ frequencies $\xi_1,\ldots,\xi_k$, we say $B$ isolates a frequency $\xi_j$ iff $\forall j' \in [k]\setminus\{j\}, \xi_{j'} \not\equiv \xi_j \mod B$.  We first show that a random $B$ will isolate \emph{all} frequencies in any such $k$-Fourier sparse function $g$ with high probability.
\begin{claim}\label{clm:isolated_B}
Let $P$ be a subset of primes greater than $2 k \log F$ of size at least $10^3 (k \log F)^2$. For any $g(x)=\sum_{j=1}^k v_j e^{2 \pi \bi \xi_j x}$, for a random prime $B \sim P$, all $\xi_j$ will be isolated with probability at least $1-10^{-3}$.
\end{claim}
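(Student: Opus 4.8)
The plan is to union-bound over all $\binom{k}{2}$ pairs of frequencies $\xi_j, \xi_{j'}$ and show that for each such pair, only a small fraction of primes $B \in P$ can fail to separate them, i.e.\ satisfy $\xi_j \equiv \xi_{j'} \pmod B$. Fix a pair $j \ne j'$ and set $d = \xi_j - \xi_{j'}$, a nonzero integer with $|d| \le 2F$. The key observation is that if a prime $B$ fails to isolate this pair, then $B \mid d$. Since $|d| \le 2F$ and every prime in $P$ is larger than $2k\log F$ (in particular larger than $1$), the number of distinct prime divisors of $d$ is at most $\log_2 |d| \le \log_2(2F) \le 2\log F$ (using $F \ge 2$, say). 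Hence at most $2\log F$ elements of $P$ can collide on this particular pair.

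Next I would take the union bound over all pairs: the number of ``bad'' primes $B$ — those failing to isolate \emph{some} pair $(\xi_j,\xi_{j'})$ — is at most $\binom{k}{2} \cdot 2\log F \le k^2 \log F$. Since $|P| \ge 10^3 (k\log F)^2 \ge 10^3 \cdot k^2 \log F$ (using $\log F \ge 1$), the probability that a uniformly random $B \sim P$ is bad is at most
\[
\frac{k^2 \log F}{10^3 (k \log F)^2} = \frac{1}{10^3 \log F} \le 10^{-3}.
\]
Therefore, with probability at least $1 - 10^{-3}$, the chosen prime $B$ isolates \emph{all} frequencies $\xi_1,\dots,\xi_k$ simultaneously, which is exactly the claim.

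The only mildly delicate point is bounding the number of prime divisors of $d = \xi_j - \xi_{j'}$: one needs $|d| \ne 0$ (guaranteed since the $\xi_j$ are distinct) and the elementary fact that an integer of absolute value at most $M$ has at most $\log_2 M$ distinct prime factors, each factor being at least $2$. I expect no real obstacle here — the argument is a routine combination of the pigeonhole/counting bound on prime divisors with a union bound over pairs; the constant $10^3$ in the size of $P$ is chosen precisely to absorb the $\binom{k}{2}$ and $2\log F$ factors with room to spare. (One should double-check the regime of small $F$, e.g.\ $F$ a small constant, but the statement is only interesting for $F$ growing, and the hypothesis ``primes greater than $2k\log F$'' together with $|P|$ large forces $F$ large enough for the bounds above to go through.)
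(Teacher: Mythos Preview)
Your proposal is correct and follows essentially the same approach as the paper: both count, for each pair (or equivalently each frequency), the number of primes in $P$ that can divide a nonzero difference $\xi_j-\xi_{j'}$ of magnitude at most $2F$, and then union-bound. The only cosmetic difference is that the paper groups the $k-1$ differences involving a fixed $\xi_j$ into a single product $\prod_{j'\ne j}|\xi_{j'}-\xi_j|\le(2F)^{k-1}$ and uses the lower bound $B>2k\log F$ on the primes to count factors, whereas you handle each pair separately and use only $B\ge 2$; both give the same $k^2\log F$ bound on bad primes after discarding constants.
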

\begin{proof}
Observe that $\xi_j$ is not isolated only if $$\prod_{j' \neq j}|\xi_{j'}-\xi_j| \equiv 0 \mod B.$$

Since the product is at most $(2F)^{k-1}$, the number of its primes factors greater than $2 k \log F$ is at most $(k-1) \frac{\log 2F}{ \log (2 k \log F)}$. This shows each $\xi_j$ is isolated with probability at least $1-\frac{(k-1) \log F}{|P|}$. Plugging the value of $|P|$, we get that  all frequencies are isolated with probability at least $1-10^{-3}$.
\end{proof}
\begin{fact}\label{fact:hashtobins}
Given $B$, $\Delta$, and $f(t)=\sum_{j=1}^k \wh{f}(\xi_j) e^{2 \pi \bi \xi_j t}$ with $k$ integer frequencies $\xi_1,\ldots,\xi_k \in [-F,F]$, and $t_0 \in [0,1/B)$, let us define $z,z':\mathbb{Z}_B \rightarrow \mathbb{C}$ as follows:
 \[z[i]=f\big( t_0+(i-1)/B \big) \ \ \text{and} \ \ z'[i]=f\big( t_0+\Delta+(i-1)/B \big). 
 \] Then 
$$
\wh{z}(\ell)=\underset{j \in [k]: \xi_j \equiv \ell \mod B}{\sum} \wh{f}(\xi_j) \cdot e^{2 \pi \bi \xi_j \cdot t_0} \text{ and } \wh{z'}(\ell)=\underset{j \in [k]: \xi_j \equiv \ell \mod B}{\sum} \wh{f}(\xi_j) \cdot e^{2 \pi \bi \xi_j \cdot (t_0+\Delta)}.
$$
\end{fact}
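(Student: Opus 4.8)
The plan is to prove this ``hashing to bins'' identity by a direct computation: I would expand each sample $z[i]$ using the definition of $f$, interchange the order of summation, and invoke the orthogonality of additive characters on $\mathbb{Z}_B$. This is the standard aliasing calculation from the sparse Fourier transform literature, and I expect no genuine obstacle beyond bookkeeping.

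Concretely, after substituting $f(t)=\sum_{m=1}^k \wh f(\xi_m) e^{2\pi\bi\xi_m t}$ into $z[i]=f(t_0+(i-1)/B)$, each term picks up a phase $e^{2\pi\bi\xi_m t_0}$ (independent of $i$) times $e^{2\pi\bi\xi_m (i-1)/B}$. The point to isolate is that, because every $\xi_m$ is an \emph{integer}, the latter factor is a power of a primitive $B$-th root of unity and depends only on $\xi_m \bmod B$ — this is precisely the aliasing of the frequency $\xi_m$ onto its residue class. Applying the $\mathbb{Z}_B$-Fourier transform with the averaging convention of Section~\ref{sec:preli} and swapping the two finite sums, the coefficient $\wh z(\ell)$ becomes $\sum_m \wh f(\xi_m) e^{2\pi\bi\xi_m t_0}$ weighted by $\frac1B\sum_{j\in\mathbb{Z}_B} e^{2\pi\bi(\xi_m-\ell)j/B}$; this inner geometric sum is $1$ if $\xi_m\equiv\ell\pmod B$ and $0$ otherwise, which leaves exactly the claimed sum over $\{m:\xi_m\equiv\ell\pmod B\}$. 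The formula for $\wh{z'}(\ell)$ is then obtained verbatim by replacing the base point $t_0$ with $t_0+\Delta$ throughout, since $z'$ is just $z$ sampled from the shifted starting point.

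The only thing requiring (minor) care is bookkeeping: reconciling the index shift in the definition $z[i]=f(t_0+(i-1)/B)$ with the normalization and indexing of the discrete Fourier transform on $\mathbb{Z}_B$ used in the preliminaries. This amounts to the harmless relabeling $j=i-1$ and changes nothing substantive. I would state the character-orthogonality identity $\frac1B\sum_{j=0}^{B-1} e^{2\pi\bi m j/B}=\mathbf 1[m\equiv 0\pmod B]$ once and reuse it for both $z$ and $z'$.
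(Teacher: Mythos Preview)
Your proposal is correct and follows essentially the same approach as the paper: expand $z'[i]$ (and analogously $z[i]$) using the definition of $f$, factor out the $i$-independent phase $e^{2\pi\bi\xi_j(t_0+\Delta)}$, and observe that the remaining factor $e^{2\pi\bi\xi_j(i-1)/B}$ is a character of $\mathbb{Z}_B$ depending only on $\xi_j\bmod B$, so the discrete Fourier transform collapses by orthogonality to the sum over $\{j:\xi_j\equiv\ell\pmod B\}$. The paper's proof is terser---it simply writes the expansion and says ``this immediately gives'' the result---while you spell out the character-orthogonality step explicitly, but the argument is the same.
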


\begin{proof}
It is enough to consider 
$$
z'[i]=f\big( t_0 + \Delta+(i-1)/B \big)=\sum_{j} \wh{f}(\xi_j) \cdot e^{2 \pi \bi \xi_j \cdot (t_0+\Delta)} \cdot e^{2 \pi \bi \xi_j \cdot (i-1)/B}.
$$
This immediately gives that the Fourier transform of $z'$ is  $\wh{z'}(\ell)=\underset{j: \xi_j \equiv \ell \mod B}{\sum} \wh{f}(\xi_j) \cdot e^{2 \pi \bi \xi_j \cdot (t_0+\Delta)}$.
\end{proof}
A direct corollary of Fact~\ref{fact:hashtobins} is that when $\xi_j$ is isolated under $B$ and $z$ and $z'$ are defined as in Fact~\ref{fact:hashtobins}, then
\[
\frac{\wh{z'}(\xi_j \ \mathsf{mod}\ B)}{\wh{z}(\xi_j \ \mathsf{mod} \ B)}=e^{2 \pi \bi \xi_j \Delta}. 
\]
When $\Delta < 1/2F$, then applying the phase function, we have 
\[
\Phi\bigg(\frac{\wh{z'}(\xi_j \ \mathsf{mod}\ B)}{\wh{z}(\xi_j \ \mathsf{mod} \ B)}  \bigg) =  2 \pi \xi_j \Delta \in (-\pi,\pi). 
\]
The intuition behind why Claim~\ref{clm:isolated_B} and Fact~\ref{fact:hashtobins} are useful is as follows: Consider the case when the oracle to $f$ (call it $y(\cdot)$) is \emph{noiseless}. We can choose a prime $B$ as done in Step~2 and 3 of Algorithm~\ref{alg:torusFFT_randomoutliers_1} -- Claim~\ref{clm:isolated_B} says that with high probability, all the frequencies of $f$ are isolated. Given the oracle $y(\cdot)$, we can compute both $z$ and $z'$ and thus use Fact~\ref{fact:hashtobins} to get all the frequencies appearing in the spectrum of $f$. 

We now state another lemma 
(the main technical lemma concerning Procedure~\textsc{PeriodicSparseFFTunderRandomOutliers})
. We defer the proof of this lemma to Section~\ref{sec:proof_recover_isolated}. 


\begin{lemma}\label{lem:recover_isolated_freq}
Given $f(t)=\sum_{j=1}^k \wh{f}(\xi_j) \cdot e^{2 \pi \bi \xi_j \cdot t}$ with each $\xi_j \in [-F,F]$ and $|\wh{f}(\xi_j)| \ge \eta$, let $B$ be a prime number (selected in Step~3 of  Procedure \textsc{PeriodicSparseFFTunderRandomOutliers}) which isolates  all frequencies $\xi_j$ in $f$. If $y(\cdot)$ is an oracle for $f$ with  $( \frac{1}{2}-\delta, \eps)$ random outlier noise  (where $\eps \lesssim \eta \cdot \delta$), then, with probability $1-5 \cdot 10^{-3}$, after the for loop of $i \in [\log_2 (2F)]$ in Procedure \textsc{PeriodicSparseFFTunderRandomOutliers}, $\midd \cap [-F,F]$ is the support set of $f$, i.e., $\{\xi_1, \ldots, \xi_k\}$.
\end{lemma}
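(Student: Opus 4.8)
The plan is to condition on every invocation of \textsc{LinearDecodingSparseFFT} inside the $\log_2(2F)$ rounds succeeding, and then argue \emph{deterministically} that the phase‑unwrapping loop pins down each $\xi_j$ exactly. So I would first fix a round $i$, let $\Delta_i=2^{i-1}/(4F)$ be the value of $\Delta$ used in it, and let $t_0$ be the shift drawn by \textsc{FrequencyHash}. By Fact~\ref{fact:hashtobins} together with the hypothesis that $B$ isolates all the $\xi_j$, the restrictions to $\mathbb{Z}_B$ of $t\mapsto f(t_0+t)$ and $t\mapsto f(t_0+\Delta_i+t)$ (sampled at $t=(i-1)/B$) are \emph{exactly} $k$‑Fourier‑sparse, supported on $\{\xi_j\bmod B:j\in[k]\}$, with single‑term coefficients $\wh f(\xi_j)e^{2\pi\bi\xi_j t_0}$ and $\wh f(\xi_j)e^{2\pi\bi\xi_j(t_0+\Delta_i)}$ at residue $\xi_j\bmod B$; in particular the ratio of these two coefficients is $e^{2\pi\bi\xi_j\Delta_i}$ and each has modulus $|\wh f(\xi_j)|\ge\eta$. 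Since $y$ has $(\tfrac12-\delta,\eps)$ random outlier noise and the $B$ sample points are distinct, the oracles $z,z'$ built by \textsc{FrequencyHash} are $(\tfrac12-\delta,\eps)$ random‑outlier oracles for these $k$‑sparse, $\eta$‑lower‑bounded functions. Hence Theorem~\ref{thm:LP_guarantee_boolean_cube}, applied over $D=\mathbb{Z}_B$ with sparsity $k$, noise level $\eps\lesssim\eta\delta$, and failure probability $\tfrac{10^{-3}}{k\log_2 2F}$, returns $f_z$ (resp.\ $f_{z'}$) with $\supp(\wh{f_z})=\{\xi_j\bmod B\}$ and $\wh{f_z}(\xi)$ within $O(\eps/\delta)$ of the corresponding coefficient for all $\xi$. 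A union bound over the $2\log_2(2F)$ calls makes them all succeed with probability at least $1-2\cdot 10^{-3}$, within the claimed $1-5\cdot 10^{-3}$; the rest of the argument runs on this event. (The LP and its analysis in Section~\ref{sec:linear_program_dec} use only orthonormality and unit $\ell_\infty$ norm of the characters, so they apply verbatim to the complex characters of $\mathbb{Z}_B$, the Gaussian process bound picking up only a constant.)

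On this event, any residue $\xi\notin\{\xi_j\bmod B\}$ fails the test $\xi\in\supp(\wh{f_z})\cap\supp(\wh{f_{z'}})$ already in round $1$ and is set to $\nulll$ forever, so the surviving residues are exactly $\{\xi_j\bmod B\}$, in bijection with $[k]$ by isolation. I would then fix $j$, write $\xi=\xi_j\bmod B$, and track $m_i:=\midd[\xi]$ after round $i$ (an integer, $m_0=0$). Because $|\wh{f_z}(\xi)|$ is within $O(\eps/\delta)$ of something of modulus $\ge\eta$, one can write $\wh{f_{z'}}(\xi)/\wh{f_z}(\xi)=e^{2\pi\bi\xi_j\Delta_i}(1+w_i)$ with $|w_i|\lesssim\eps/(\delta\eta)=:\tau$, so $e^{-2\pi\bi\Delta_i m_{i-1}}\cdot\wh{f_{z'}}(\xi)/\wh{f_z}(\xi)=e^{2\pi\bi\Delta_i(\xi_j-m_{i-1})}(1+w_i)$, whose phase is $2\pi\Delta_i(\xi_j-m_{i-1})+\beta_i$ with $|\beta_i|=O(\tau)$ \emph{provided} $2\pi\Delta_i|\xi_j-m_{i-1}|+O(\tau)<\pi$ (no wraparound). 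Granting that, $\gamma=\Phi(\cdot)=2\pi\Delta_i(\xi_j-m_{i-1})+\beta_i$, and since $\xi_j-m_{i-1}\in\mathbb{Z}$ and $\mathsf{round}(a+x)=a+\mathsf{round}(x)$ for $a\in\mathbb{Z}$, the update gives $m_i=\xi_j+\mathsf{round}\!\big(\beta_i/(2\pi\Delta_i)\big)$, hence $|m_i-\xi_j|\le|\beta_i|/(2\pi\Delta_i)+\tfrac12$.

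Next I would verify the no‑wraparound condition by induction on $i$. For $i=1$, $m_0=0$ gives $2\pi\Delta_1|\xi_j|\le 2\pi\cdot\tfrac{1}{4F}\cdot F=\tfrac\pi2$. For $i\ge 2$, using the bound on $|m_{i-1}-\xi_j|$ from round $i-1$ and $\Delta_i=2\Delta_{i-1}\le\tfrac14$,
\[
2\pi\Delta_i|\xi_j-m_{i-1}|\le 2|\beta_{i-1}|+\pi\Delta_i\le O(\tau)+\tfrac\pi4,
\]
which is $<\pi-O(\tau)$ once the absolute constant hidden in ``$\eps\lesssim\eta\delta$'' (hence in $\tau$) is taken small enough. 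In the last round $\Delta=1/4$, so $|\beta|/(2\pi\Delta)=O(\tau)<\tfrac12$, forcing $\mathsf{round}(\beta/(2\pi\Delta))=0$ and thus $m_{\mathrm{last}}=\xi_j$. Running this for every $j$ shows that after the loop each surviving residue carries the integer label $\xi_j\in[-F,F]$ and all other entries of $\midd$ are $\nulll$, i.e.\ $\midd\cap[-F,F]=\{\xi_1,\dots,\xi_k\}$, which is the claim.

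I expect the main obstacle to be exactly the bookkeeping of this phase‑unwrapping induction: showing that the relative phase $2\pi\Delta_i(\xi_j-m_{i-1})$ stays inside $(-\pi,\pi)$ through all $\log_2(2F)$ doublings of $\Delta$ while the multiplicative LP error $\tau\sim\eps/(\eta\delta)$ accumulates, and calibrating the constant in ``$\eps\lesssim\eta\delta$'' so that the no‑wraparound invariant and the final $\mathsf{round}(\cdot)=0$ step go through simultaneously. A secondary, lower‑stakes point is to confirm that Theorem~\ref{thm:LP_guarantee_boolean_cube} is applied legitimately here --- that $z,z'$ are genuine $(\tfrac12-\delta,\eps)$ random‑outlier oracles for $k$‑sparse, $\eta$‑lower‑bounded functions over $\mathbb{Z}_B$, and that passing to complex‑valued functions costs nothing essential.
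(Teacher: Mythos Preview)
Your proposal is correct and follows essentially the same approach as the paper: condition on all $2\log_2(2F)$ invocations of \textsc{LinearDecodingSparseFFT} succeeding via a union bound (using Theorem~\ref{thm:LP_guarantee_boolean_cube} over $\mathbb{Z}_B$), then run an induction over rounds to show the phase stays in $(-\pi,\pi)$ and $\midd[\xi_j\bmod B]$ converges to $\xi_j$. The only stylistic difference is that the paper maintains the cleaner invariant $|\xi_j-\midd[\xi_j\bmod B]|\le 1/(4\Delta)$ at the start of each round (yielding $|\gamma-2\pi\Delta(\xi_j-\midd)|\le\pi/4$ and an update landing within $1/(8\Delta)$ of $\xi_j$), whereas you track the multiplicative LP error $\tau$ and the phase perturbation $\beta_i$ explicitly; both bookkeeping schemes lead to the same conclusion.
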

We are now ready to  prove the correctness of ~\textsc{PeriodicSparseFFTunderRandomOutliers}.

\begin{proofof}{Theorem~\ref{thm:periodic_FFT_rand_out}}
First of all, observe that time (resp. sample) complexity of Steps 1 to 18 is $\log (2F)$ times the time (resp. sample) complexity of a single iteration (defined in Step~6).  The sample complexity of each iteration is $\wt{O}(k^2 \log B/\delta^2)$ from Theorem~\ref{thm:LP_guarantee_boolean_cube} and the time complexity is $\poly(B)$. 
Finally, the sample  complexity of Step 19 is $\tilde{O}(k^2/\delta^2)$ and time complexity is $\poly(k,1/\delta)$. This gives the claimed bounds on the time and sample complexity.

Next, 
Claim~\ref{clm:isolated_B} shows a random $B$ will isolate all frequencies with probability $1-10^{-3}$. Lemma~\ref{lem:recover_isolated_freq} then shows that with probability $1-5 \cdot 10^{-3}$, $\supp(\wh{g})=\supp(\wh{f})$. The guarantee of $g$ follows from the correctness of Lemma~\ref{cor:correctness_over_k_sparse}. 

Finally we notice that Algorithm~\ref{alg:LP_decoding_boolean} only queries $m=\wt{O}(k^2 \log B /\delta^2)$ points in each invocation of Procedure \textsc{FrequencyHash}. For any $\Delta$ and $t_0$, the probability that two query points of our algorithm are closer than $1/B$ is at most $2/B$. At the same time, because the total number of invocations of Algorithm~\ref{alg:LP_decoding_boolean} is $2 \log 2F$, the total number of query points is $m \cdot (2 \log 2F)$. So the probability that any two query points are $1/B$ close is at most ${m \cdot (2 \log 2F) \choose 2} \cdot 2/B \le 1/k$. Overall, this implies that all the query points, with high probability are at least $1/\poly(k, \log F)$ far from each other.
\end{proofof}

\subsection{Proof of Lemma~\ref{lem:recover_isolated_freq}}\label{sec:proof_recover_isolated}
First of all, by Claim~\ref{clm:isolated_B}, all frequencies $\{\xi_\ell\}_{\ell=1}^k$ are isolated under 
$B$ with probability at least $1-10^{-3}$. We now condition on this event -- i.e., all frequencies $\{\xi_\ell\}_{\ell=1}^k$ are isolated under 
$B$. 

Given all frequencies $\xi_\ell$ isolated under $B$, we use induction to prove that with high probability, all frequencies $\xi_\ell$ satisfy the following 
\begin{equation}~\label{eq:induction}
\xi_\ell \in \left[\midd[\xi_\ell \mod B] - \frac{1}{4\Delta},\midd[\xi_\ell \mod B] + \frac{1}{4\Delta} \right],
\end{equation} 
for every $i$ in the loop (defined in Step~6) 
of procedure \textsc{PeriodicSparseFFTunderRandomOutliers} 
\ignore{of Procedure \textsc{FreqEstimation} in Algorithm~\ref{alg:torusFFT_randomoutliers_1}.} 
Observe that $\midd[\xi_\ell \ \mathsf{mod} \ B]$ (unless, it is $\nulll$), is always an integer. Now, at the end of the loop, $\Delta= 1/2$.  Thus, applying (\ref{eq:accuracy}), at the end of the loop, for all $\xi_\ell$ such that $\midd[\xi_\ell \mod B]$ is not null, we have 
\[
\xi_\ell \in \left[\midd[\xi_\ell \mod B] - \frac{1}{2},\midd[\xi_\ell \mod B] + \frac{1}{2} \right]. 
\]
But this immediately implies that at the end of the loop that $\xi_\ell = \midd[\xi_\ell \ \mathsf{mod} \ B]$. Thus, it just remains to prove \eqref{eq:induction}.

\textbf{Base case:} At the beginning of the iteration, $\midd[\chi_\ell \ \mathsf{mod} \ B] =0$ and $\Delta = 1/(4F)$. Since all $\xi_\ell \in [-F, F]$ (by assumption), we have 
\[
\xi_\ell \in \left[\midd[\xi_\ell \mod B] -  \frac{1}{4\Delta},\midd[\xi_\ell \mod B] + \frac{1}{4\Delta} \right],  
\]
at the beginning of the loop. 

\textbf{Induction step:} Suppose for $i =i_0 \in [\log 2F]$, it holds that 
\[
\xi_\ell \in \left[ \midd[\xi_\ell \ \mathsf{mod} \  B] -  \frac{1}{4\Delta}, \midd[\xi_\ell \ \mathsf{mod} \  B] +  \frac{1}{4\Delta} \right]. 
\]
We will show that this relation holds for $i = i_0 +1$ as well. By assumption, all frequencies of $f$ are isolated by $B$. Since $|\widehat{f}(\xi_\ell) | \ge \eta$, by Theorem~\ref{thm:LP_guarantee_boolean_cube} implies that with probability $1-\frac{10^{-3}}{k \log 2|F|}$, the output of Procedure~\textsc{LinearDecodingSparseFFT} (invoked in Line 6 of \textsc{FrequencyHash}), the function $f_z$ satisfies 
\[
\bigg| \wh{f}(\xi_\ell) \cdot e^{2 \pi \bi \xi_j \cdot t_0} - \wh{f_z}(\xi_\ell \mod B) \bigg| \lesssim \frac{\eps}{\delta} \le 0.1 \eta \text{ for every } \xi_\ell.
\]
By the same argument, 
\[
\bigg| \wh{f}(\xi_\ell) \cdot e^{2 \pi \bi \xi_j \cdot (t_0+\Delta)} - \wh{f_{z'}}(\xi_\ell \mod B) \bigg| \lesssim \frac{\eps}{\delta} \le 0.1 \eta \text{ for every } \xi_\ell.
\]
Together, the above equations (using $|\wh{f}(\xi_\ell)| \ge \eta$) imply 
\begin{equation}\label{eq:accuracy}
\bigg| \wh{f_{z'}}(\xi_\ell \mod B)/\wh{f_z}(\xi_\ell \mod B) - e^{2 \pi \bi \xi_\ell \cdot \Delta} \bigg| \le 0.35 
\end{equation}
Multiplying by $e^{-2 \pi \bi \Delta \cdot \midd[\xi_\ell \ \mathsf{mod} \ B]}$, 
\begin{equation}~\label{eq:closeness-angle}
\bigg|  e^{-2 \pi \bi \Delta \cdot \midd[\xi_\ell \ \mathsf{mod} \ B]} \cdot \wh{f_{z'}}(\xi_\ell \mod B)/\wh{f_z}(\xi_\ell \mod B) - e^{-2 \pi \bi \Delta \cdot \midd[\xi_\ell \ \mathsf{mod} \ B]} \cdot e^{2 \pi \bi \xi_\ell \cdot \Delta} \bigg| \le 0.35 
\end{equation}
Now, observe that  by induction hypothesis, for every $\xi_\ell$, 
$$
\Delta \cdot (\xi_\ell - \midd[\xi_\ell \ \mathsf{mod} B])  \in \bigg[ -\frac14, \frac14\bigg]. 
$$
Combining with \eqref{eq:closeness-angle}, 
\begin{equation}~\label{eq:closeness-angle2}
\bigg|  \Phi\bigg(e^{-2 \pi \bi \Delta \cdot \midd[\xi_\ell \ \mathsf{mod} \ B]} \cdot \wh{f_{z'}}(\xi_\ell \mod B)/\wh{f_z}(\xi_\ell \mod B) \bigg)  
- 2 \pi \Delta \bigg( \xi_\ell - \midd[\xi_\ell \ \mathsf{mod} \ B] \bigg) 
\bigg| \le \frac{\pi}{4}. 
\end{equation}
Note that $\gamma$, defined in  Step~10 of the algorithm is precisely 
\[
\gamma=
\Phi\bigg(e^{-2 \pi \bi \Delta \cdot \midd[\xi_\ell \ \mathsf{mod} \ B]} \cdot \wh{f_{z'}}(\xi_\ell \mod B)/\wh{f_z}(\xi_\ell \mod B) \bigg) . 
\]
Then, it follows  from \eqref{eq:closeness-angle2}, 
\[
\mathsf{round}\bigg( \frac{\gamma}{2\pi \cdot \Delta}\bigg) \in \left[ \xi_\ell-  \midd[\xi_\ell \ \mathsf{mod} \ B] - \frac{1}{8\Delta}, \xi_\ell-  \midd[\xi_\ell \ \mathsf{mod} \ B] +  \frac{1}{8\Delta} \right]
\]
However, at the end of the $i^{th}$ round, $\Delta$ gets updated to $2\Delta$. Thus, the induction hypothesis continues to hold for $(i+1)^{th}$ iteration. 

\ignore{
Next we use \eqref{eq:accuracy} to approximate $e^{2 \pi \bi \xi_l \cdot \Delta}$ in the equation $\phi\left( e^{-2 \pi \bi \Delta \cdot mid[\xi_l \mod B]} \cdot  e^{2 \pi \bi \xi_l \cdot \Delta} \right) = 2 \pi \Delta \cdot (\xi_l - mid[\xi_l \mod B])$ such that under the induction hypothesis $\Delta \cdot (\xi_l - mid[\xi_l \mod B]) \in [-1/4,1/4]$,
$$
\phi\left( e^{-2 \pi \bi \Delta \cdot mid[\xi_l \mod B]} \cdot  \wh{z'}(\xi_l \mod B)/\wh{z}(\xi_l \mod B) \right) = 2 \pi \Delta \cdot (\xi_l - mid[\xi_l \mod B]) \pm \pi/4.
$$
Thus the algorithm updates $mid[\xi_l \mod B]$ by $(\xi_l - mid[\xi_l \mod B]) \pm \frac{1}{8\Delta}$, which satisfies the induction hypothesis on $\Delta'$. By a union bound, this is true for every frequency $\xi_l$ and $i$ with probability $1 - 2 \cdot 10^{-3}$.}

\section{Low Degree functions under adversarial outliers}\label{sec:adv_outlier_boolean}
In this section, we give an efficient algorithm 
for Fourier sparse functions which can tolerate $\rho = 
\frac{1}{4 \cdot 3^{2d}}$ of adversarial outliers when the target function has degree-at most $d$. Recall that arbitrary $k$-Fourier-sparse signals can tolerate up to $\Theta(1/k)$ fraction of adversarial outliers. Since degree-$d$ polynomials are $\binom{n}{\le d}$-Fourier sparse over $\{0,1\}^n$ (in the worst case), this translates to an error tolerance of  $(
{n \choose \le d})^{-1}$ fraction of adversarial outliers. By exploiting the low-degree structure of the target function, we are able to improve this error tolerance to $(4 \cdot 3^{2d})^{-1}$.



\begin{theorem}\label{thm:degree_d_adv_outliers}
There is an efficient algorithm (Algorithm~\ref{alg:LP_decoding_linear_GZ}) which when given as input parameters $\delta>0$, $\eps>0$, and $\rho \le \frac{1}{4 \cdot 3^{2d}}-\delta$ and an oracle $y (\cdot)$ to a degree-$d$ polynomial $f$ over $\{0,1\}^n$ with $(\rho,\eps)$ adversarial outlier noise, makes $m$ random queries and outputs a function $g$ which with probability $0.99$ satisfies 
$$ |\wh{g}(\xi)-\wh{f}(\xi)| \lesssim \frac{\eps}{\delta} \text{ for each $\xi$ and } \E_x[|g(x)-f(x)|^2]^{1/2} \lesssim \frac{3^{d} \cdot \eps}{\delta}.$$
Here the query complexity $m = O\bigg( \frac{{n \choose \le d} \log {n \choose \le d}}{\delta^2} \bigg)$. 
\end{theorem}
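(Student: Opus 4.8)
The plan is to run the natural $\ell_1$ regression over the \emph{known} basis of low-degree monomials, and to argue correctness by combining hypercontractivity-based anti-concentration for degree-$d$ polynomials with an empirical $\ell_1/\ell_2$ concentration inequality over the fixed $\binom{n}{\le d}$-dimensional space of such functions. Write $k=\binom{n}{\le d}$ and let $\chi_1,\ldots,\chi_k$ enumerate the characters of $\{0,1\}^n$ of Hamming weight at most $d$, so that $f$ lies in the known subspace $V:=\mathrm{span}\{\chi_1,\ldots,\chi_k\}$ (with unknown coefficients). First I would sample $m=O(k\log k/\delta^2)$ uniformly random points $x_1,\ldots,x_m$, query the noisy oracle to get $y(x_1),\ldots,y(x_m)$, and output
\[
g\;=\;\underset{h\in V}{\arg\min}\ \sum_{i=1}^m |y(x_i)-h(x_i)|.
\]
This is a linear program in the $k$ unknown coefficients with $m$ constraints, so it is solvable in time $\poly(k,m)=\poly(n^d,1/\delta)$ and the algorithm is efficient (this is the content of Algorithm~\ref{alg:LP_decoding_linear_GZ}).

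The analytic heart is the \emph{Euclidean-section} statement of Theorem~\ref{thm:euclidean_section}: with probability $\ge 0.995$ over the sample, for \emph{every} $h\in V$ and every $S\subseteq[m]$ with $|S|\le(\tfrac{1}{4\cdot 3^{2d}}-\tfrac{\delta}{2})m$ one has $\sum_{i\in S}|h(x_i)|<\tfrac12\sum_{i=1}^m|h(x_i)|$. Its proof has two ingredients. The first is anti-concentration: since each $h\in V$ has degree $\le d$, hypercontractivity of low-degree functions (\cite{ODBook}) gives $\|h\|_4\le 3^{d/2}\|h\|_2$, and Hölder interpolation $\|h\|_2^2\le\|h\|_1^{2/3}\|h\|_4^{4/3}$ then yields $\E_x[|h(x)|]\ge 3^{-d}\|h\|_2$, i.e. $\E_x[|h(x)|^2]\le 3^{2d}\big(\E_x[|h(x)|]\big)^2$. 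The second is empirical concentration: because $V$ is a \emph{fixed} $k$-dimensional subspace, Talagrand's theorem on sections of $\ell_1^N$ (\cite{Talagrand90,CohenPeng15}) for the $\ell_1$ norm and the Rudelson--Vershynin restricted-isometry bound (\cite{RV08}) for the $\ell_2$ norm show that $m=\wt{O}(k/\delta^2)$ samples make $\sum_i|h(x_i)|=(1\pm\delta)m\,\E_x[|h(x)|]$ and $\sum_i|h(x_i)|^2=(1\pm\delta)m\,\E_x[|h(x)|^2]$ hold simultaneously over all $h\in V$ with probability $\ge 0.995$. Combining these via Cauchy--Schwarz,
\begin{align*}
\sum_{i\in S}|h(x_i)| &\le \sqrt{|S|\sum_{i\in S}|h(x_i)|^2}\ \le\ 3^{d}\sqrt{(1+\delta)|S|m}\cdot\E_x[|h(x)|]\\
&\le \frac{3^d\sqrt{(1+\delta)|S|}}{(1-\delta)\sqrt m}\sum_{i=1}^m|h(x_i)|,
\end{align*}
and substituting $|S|/m\le\tfrac{1}{4\cdot 3^{2d}}-\tfrac\delta2$ together with $3^{2d}\ge 9$ makes the prefactor at most $\tfrac12-3\delta$, which proves the claim with slack $\Omega(\delta)$ to spare.

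With this in hand the decoding is standard. Let $h:=f-g\in V$ and let $S\subseteq[m]$ be the sampled points that are outliers; since $\rho\le\tfrac{1}{4\cdot 3^{2d}}-\delta$, a Chernoff bound (using $m\gtrsim 1/\delta^2$) gives $|S|\le(\tfrac{1}{4\cdot 3^{2d}}-\tfrac\delta2)m$ with high probability. Optimality of $g$ gives $\sum_i|y(x_i)-g(x_i)|\le\sum_i|e(x_i)|$ where $e(x_i):=y(x_i)-f(x_i)$; writing $y(x_i)-g(x_i)=h(x_i)+e(x_i)$, applying the triangle inequality on both sides, and using $|e(x_i)|\le\eps$ for $i\notin S$ gives $\sum_{i\notin S}|h(x_i)|\le\sum_{i\in S}|h(x_i)|+2\eps m$, hence $\sum_i|h(x_i)|\le 2\sum_{i\in S}|h(x_i)|+2\eps m$. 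Theorem~\ref{thm:euclidean_section} (in the sharper $\tfrac12-3\delta$ form above) then yields $\sum_i|h(x_i)|\le(1-6\delta)\sum_i|h(x_i)|+2\eps m$, so $\sum_i|h(x_i)|\lesssim\eps m/\delta$ and therefore $\E_x[|h(x)|]\lesssim\eps/\delta$. Since $|\wh{h}(\xi)|=|\E_x[h(x)\chi_\xi(x)]|\le\E_x[|h(x)|]$, this is precisely $|\wh{g}(\xi)-\wh{f}(\xi)|\lesssim\eps/\delta$ for every $\xi$, and anti-concentration upgrades it to $\E_x[|g(x)-f(x)|^2]^{1/2}=\|h\|_2\le 3^d\,\E_x[|h(x)|]\lesssim 3^d\eps/\delta$. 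A union bound over the constantly many failure events gives success probability $0.99$.

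\textbf{Main obstacle.} The only genuinely nontrivial ingredient is the $\ell_1$ half of the empirical concentration --- showing $\sum_i|h(x_i)|\approx m\,\E_x[|h(x)|]$ uniformly over the entire $k$-dimensional space $V$ using only $\wt{O}(k/\delta^2)$ samples. This needs the chaining / Dudley-entropy machinery (as in the proof of Claim~\ref{clm:ell_1_ball}), but now specialized to a \emph{fixed} subspace, which is what avoids the extra factor of $k$ that a naive union bound over all $\binom{N}{\le 2k}$ possible supports would incur. The anti-concentration via hypercontractivity and the LP-decoding step are routine by comparison, and the remaining pieces (Chernoff, triangle inequality, Cauchy--Schwarz) are elementary.
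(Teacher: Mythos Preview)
Your proposal is correct and follows essentially the same route as the paper: the algorithm is exactly the $\ell_1$ regression over the known degree-$d$ basis, and the analysis combines hypercontractivity (Lemma~\ref{lem:degree_d_ell1_bounds}), uniform $\ell_1$/$\ell_2$ empirical concentration over the fixed $\binom{n}{\le d}$-dimensional subspace (Theorems~\ref{thm:concentration_deg_d_poly_ell2} and~\ref{thm:concentration_deg_d_poly_ell1}, the latter via \cite{CohenPeng15}), and the Cauchy--Schwarz Euclidean-section bound (Theorem~\ref{thm:euclidean_section}) to drive the LP-decoding argument. The only cosmetic differences are that the paper proves the Euclidean-section inequality by contradiction rather than directly, and uses matrix Chernoff (\cite{Tropp}) instead of \cite{RV08} for the $\ell_2$ concentration; your identification of the $\ell_1$ concentration over the fixed subspace as the main nontrivial ingredient is exactly right.
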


\begin{algorithm}[H]
\caption{Recover degree $d$ functions}\label{alg:LP_decoding_linear_GZ}
\begin{algorithmic}[1]
\Procedure{\textsc{RecoverLowDegree}}{$y,\rho,\epsilon,n,d$}
\State Set $m=O\bigg( \frac{{n \choose \le d }\log {n \choose \le d} }{\delta^2} \bigg)$;
\State Sample $x_1,\ldots,x_m$ randomly;
\State Query $y(x_1),\ldots,y(x_m)$;
\State Find the degree $d$ function $g(x)$ minimizing $\sum_{i=1}^m |g(x_i)-y(x_i)|$;
\EndProcedure
\end{algorithmic}
\end{algorithm} 

In the rest of this section, we finish the proof of Theorem~\ref{thm:degree_d_adv_outliers}. We first observe that algorithm can be implemented efficiently  i.~e.~, in time $\poly(m,n)$. The key fact is that Step~5 can be implemented as a linear program (with the unknowns being the $\binom{n}{ \le d}$ coefficients of the polynomial $g(x)$). Thus, what remains to be done is to establish the performance of this algorithm. 
 The key here  is to obtain  concentration bounds  on $\sum_{i=1}^m |p(x_i)|^2$ and $\sum_{i=1}^m |p(x_i)|$ for all degree $d$ functions $p$. 
We first state our concentration bounds of $\sum_{i=1}^m |p(x_i)|^2$.
\begin{theorem}\label{thm:concentration_deg_d_poly_ell2}
There exists a constant $C>0$ such that for 
 any $\epsilon$ and $\delta$ and  $m = C \frac{{n \choose \le d} \log \frac{{n \choose \le d}}{\delta}}{\epsilon^2}$, if  $x_1,\ldots,x_m \sim \{0, 1\}^n$, then  with probability $1-\delta$, for any degree $d$ function $p$,
\begin{equation}\label{eq:ell2norm_deg_d_poly}
\sum_{i=1}^m p(x_i)^2 \in [1-\epsilon,1+\epsilon] \cdot m \E[|p|^2].
\end{equation}
\end{theorem}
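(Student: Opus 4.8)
The plan is to recognize Theorem~\ref{thm:concentration_deg_d_poly_ell2} as an operator-norm concentration statement for the empirical Gram matrix of the degree-$\le d$ character basis, and then invoke a matrix Chernoff bound. Write $N := \binom{n}{\le d}$ and let $\chi_{\xi_1}, \ldots, \chi_{\xi_N}$ enumerate the characters of $\{0,1\}^n$ of Hamming weight at most $d$; recall $|\chi_{\xi}(x)| = 1$ for all $x$ and that these characters are orthonormal under the uniform measure on $\{0,1\}^n$. For $x \in \{0,1\}^n$ put $v(x) := (\chi_{\xi_1}(x), \ldots, \chi_{\xi_N}(x)) \in \{-1,1\}^N$. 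A degree-$\le d$ function $p$ satisfies $p = \sum_j \wh{p}(\xi_j)\chi_{\xi_j}$, so with $a := \wh{p} \in \R^N$ we have $p(x) = \langle a, v(x)\rangle$, and hence
\[
\sum_{i=1}^m p(x_i)^2 = a^\top \Big( \sum_{i=1}^m v(x_i) v(x_i)^\top \Big) a,
\qquad
m\,\E[|p|^2] = m\,\|a\|_2^2,
\]
the second equality being Parseval's identity. Therefore \eqref{eq:ell2norm_deg_d_poly} holds simultaneously for all degree-$\le d$ functions $p$ if and only if $\big\| \tfrac1m \sum_{i=1}^m v(x_i) v(x_i)^\top - I_N \big\| \le \epsilon$ in spectral norm, and it suffices to prove this matrix inequality with probability $1-\delta$.

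To do this I would apply the matrix Chernoff inequality (Ahlswede--Winter / Tropp). Set $X_i := \tfrac1m v(x_i) v(x_i)^\top$; these are independent positive semidefinite matrices. Orthonormality of the $\chi_{\xi_j}$ gives $\E\big[v(x) v(x)^\top\big] = I_N$, hence $\sum_i \E[X_i] = I_N$; and since each coordinate of $v(x_i)$ is $\pm 1$ we have $\|X_i\| = \tfrac1m \|v(x_i)\|_2^2 = N/m =: R$ deterministically. The matrix Chernoff bound then yields, for a universal constant $c>0$,
\[
\Pr\Big[ \lambda_{\max}\big(\textstyle\sum_i X_i\big) \ge 1+\epsilon \ \text{ or }\ \lambda_{\min}\big(\textstyle\sum_i X_i\big) \le 1-\epsilon \Big] \le 2N \exp\!\big( - c\,\epsilon^2 / R \big) = 2N \exp\!\big( - c\,\epsilon^2 m / N\big).
\]
Choosing $m = C \cdot N \log(N/\delta)/\epsilon^2$ with $C$ a large enough absolute constant makes the right-hand side at most $\delta$, which is exactly the claimed sample complexity. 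Since $\sum_i X_i - I_N$ is symmetric, the event above failing is precisely $\|\sum_i X_i - I_N\| \le \epsilon$, and the theorem follows.

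The point worth emphasizing — and the only place any care is needed — is the choice of concentration tool. Unlike Claim~\ref{clm:ell_1_ball}, where one must range over \emph{all} $k$-sparse supports and a chaining argument is essential, here the degree-$\le d$ polynomials form a \emph{single fixed} subspace of dimension $N = \binom{n}{\le d}$, so no chaining is needed. However, a naive $\epsilon$-net argument over the unit sphere of this $N$-dimensional subspace would use $e^{\Theta(N)}$ net points, and because a unit-norm degree-$\le d$ polynomial can take pointwise values as large as $\sqrt{N}$ (so $|p(x)|^2$ up to $N$), a scalar Bernstein bound at each net point union-bounded over the net would force $m = \Omega(N^2/\epsilon^2)$, losing a factor of $N$. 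Routing through matrix Chernoff on the Gram matrix sidesteps this loss and delivers the tight $N\log N$ dependence, so the ``hard part'' is essentially just this observation rather than any delicate estimate.
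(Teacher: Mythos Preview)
Your proof is correct and essentially identical to the paper's: both reduce \eqref{eq:ell2norm_deg_d_poly} to a spectral-norm bound on the empirical Gram matrix $\sum_i v(x_i)v(x_i)^\top$ (the paper calls $v(x)$ by $\Mon_d(x)$), note that $\E[v(x)v(x)^\top]=I_N$ and $\|v(x)v(x)^\top\|=N$, and then apply Tropp's matrix Chernoff bound (Theorem~\ref{thm:matrix_chernoff}) with $R=N=\binom{n}{\le d}$ to obtain the stated sample complexity. Your closing paragraph about why matrix Chernoff avoids the $N^2$ loss of a naive net argument is a nice piece of commentary that the paper does not spell out.
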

Proof of Theorem~\ref{thm:concentration_deg_d_poly_ell2} is deferred to Section~\ref{sec:concentration_low_deg}.
We next show a lower bound on  $\E[|p(x)|]$ and a concentration bound for $\sum_{i=1}^m |p(x_i)|$.
\begin{lemma}\label{lem:degree_d_ell1_bounds}
For any function $p$ of degree $d$, 
$$
\E_{x \sim \{0, 1\}^n}[|p(x)|] \in [ 3^{-d} ,1] \cdot \|\wh{p}\|_2.
$$
\end{lemma}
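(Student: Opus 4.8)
The plan is to establish the two inequalities separately. The upper bound $\E_{x \sim \{0,1\}^n}[|p(x)|] \le \|\wh{p}\|_2$ is essentially immediate: by Jensen's inequality (equivalently Cauchy--Schwarz against the constant function $1$) we have $\E_x[|p(x)|] \le \big(\E_x[|p(x)|^2]\big)^{1/2}$, and Parseval's identity gives $\E_x[|p(x)|^2] = \sum_\xi |\wh{p}(\xi)|^2 = \|\wh{p}\|_2^2$. This step uses nothing about the degree and needs none of the concentration results in the section.

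For the lower bound, the idea is to interpolate the $L_2$ norm of $p$ between its $L_1$ and $L_4$ norms via H\"older, and then to bound the $L_4$ norm by the $L_2$ norm using hypercontractivity of low-degree polynomials. Concretely, I would write $|p|^2 = |p|^{2/3}\cdot |p|^{4/3}$ and apply H\"older with conjugate exponents $3/2$ and $3$, which gives
\[
\E_x[|p(x)|^2] \;\le\; \big(\E_x[|p(x)|]\big)^{2/3}\cdot\big(\E_x[|p(x)|^4]\big)^{1/3}.
\]
Next, by the $(2,4)$-hypercontractive inequality for functions of degree at most $d$ on the cube (Bonami--Beckner; see \cite{ODBook}), namely $\|p\|_4 \le 3^{d/2}\|p\|_2$, we get $\E_x[|p(x)|^4] \le 3^{2d}\big(\E_x[|p(x)|^2]\big)^2$. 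Substituting this into the previous display and rearranging — divide both sides by $\big(\E_x[|p(x)|^2]\big)^{2/3}$ and then cube — yields
\[
\E_x[|p(x)|^2] \;\le\; 3^{2d}\,\big(\E_x[|p(x)|]\big)^2 ,
\]
and taking square roots together with Parseval gives $\E_x[|p(x)|] \ge 3^{-d}\|\wh{p}\|_2$, as claimed.

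I do not expect a real obstacle here; the only points requiring care are invoking the correct form of hypercontractivity (the degree-$d$ bound $\|p\|_4 \le 3^{d/2}\|p\|_2$, as opposed to the general $\|p\|_q \le (q-1)^{d/2}\|p\|_2$ specialized to $q=4$), and tracking the H\"older exponents so that the $L_2$ factors cancel cleanly. One could instead interpolate $L_2$ between $L_1$ and $L_q$ for a general even $q$ and optimize over $q$, but $q=4$ already produces the constant $3^{-d}$ that is needed for the outlier fraction $\rho \le \tfrac{1}{4\cdot 3^{2d}}$ in Theorem~\ref{thm:degree_d_adv_outliers}, so there is no reason to complicate the argument.
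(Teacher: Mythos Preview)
Your proposal is correct and follows essentially the same argument as the paper: the upper bound via Jensen plus Parseval, and the lower bound via the H\"older interpolation $\E[|p|^2]\le (\E[|p|])^{2/3}(\E[|p|^4])^{1/3}$ combined with the $(2,4)$-hypercontractive inequality $\E[|p|^4]\le 9^d\,(\E[|p|^2])^2$ for degree-$d$ polynomials. The exponents and constants match the paper's proof exactly.
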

\begin{proof}
The upper bound comes by combining Jensen's inequality and Parseval's identity as follows: 
$$
\E_{x \sim \{0, 1\}^n}[|p(x)|] \le \E_{x \sim \{0, 1\}^n}[|p(x)|^2]^{1/2} = \|\wh{p}\|_2.
$$
To get the lower bound, we first recall Holder's inequality -- namely, if $r,s \ge 0$ and $1/r + 1/s = 1$, then 
\[
\E[|f(x)g(x)|] \le \E [|f(x)|^r]^{\frac{1}{r}} \E [|g(x)|^s]^{\frac{1}{s}}. 
\]
We apply Holder's inequality with $f(x)=|p(x)|^{2/3}, 1/r=2/3$ and $g(x)=|p(x)|^{4/3},1/s=1/3$ (with Parseval's identity) 
\begin{equation}~\label{eq:Holder}
\|\wh{p}\|_2^2=\E_{x \sim \{0, 1\}^n}[|p(x)|^2] \le \E_{x \sim \{0, 1\}^n}[|p(x)|]^{2/3}  \cdot \E_{x \sim \{0, 1\}^n}[|p(x)|^4]^{1/3}.
\end{equation}
Finally, we also recall the hypercontractivity theorem for low-degree polynomials~(see \cite{ODBook}). Namely, if $p: \mathbb{R}^n \rightarrow \mathbb{R}$ be a degree-$d$ polynomial. Then, 
\[
\E_{x \sim \{0, 1\}^n}[|p(x)|^4] \le 9^d \cdot \E_{x \sim \{0, 1\}^n}[|p(x)|^2]^2=9^d \|\wh{p}\|_2^4. 
\]
Combining this with (\ref{eq:Holder}), 
\[
\E_{x \sim \{0, 1\}^n}[|p(x)|] \ge \frac{\|\wh{p}\|_2^3}{\E_{x \sim \{0, 1\}^n}[|p(x)|^4]^{1/2}} \geq  3^{-d} \|\wh{p}\|_2. 
\]
\end{proof}

\begin{theorem}\label{thm:concentration_deg_d_poly_ell1}
There exists a constant $C$ such that for any $\epsilon, \delta>0$ and  $m = \frac{C {n \choose \le d} \cdot \log \frac{{n \choose \le d}}{\delta}}{\epsilon^2}$ random variables $x_1,\ldots,x_m \sim \{0, 1\}^n$, with probability $1-\delta$, for any degree $d$ function $p$,
\begin{equation}\label{eq:ell1norm_deg_d_poly}
\sum_{i=1}^m |p(x_i)| \in [1-\epsilon,1+\epsilon] \cdot m \cdot \E[|p(x)|].
\end{equation}
\end{theorem}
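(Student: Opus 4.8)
I would mirror the proof of Claim~\ref{clm:ell_1_ball}, but now the underlying class of degree-$d$ polynomials is a genuinely $D$-dimensional space ($D:=\binom{n}{\le d}$), so Maurey's empirical method is replaced by an elementary volumetric (or contraction) estimate. Write $\mathcal{P}=\{p:\deg p\le d,\ \|\wh{p}\|_2\le 1\}$. By homogeneity it suffices to control the fluctuation over $\mathcal{P}$, and there the natural target is an \emph{additive} bound: I would show that for $m=O\!\big(9^d D\log(D/\delta)/\epsilon^2\big)$ samples, with probability $1-\delta$, $\big|\tfrac1m\sum_{i=1}^m|p(x_i)|-\E[|p(x)|]\big|\le \tfrac{\epsilon}{2\cdot 3^d}$ for all $p\in\mathcal{P}$ simultaneously. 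Given this, for an arbitrary degree-$d$ polynomial $q$ one applies the estimate to $q/\|\wh{q}\|_2\in\mathcal{P}$ and multiplies through by $\|\wh{q}\|_2$; since Lemma~\ref{lem:degree_d_ell1_bounds} gives $\|\wh{q}\|_2\le 3^d\,\E[|q(x)|]$, the additive error $\tfrac{\epsilon}{2\cdot 3^d}\|\wh{q}\|_2$ turns into a relative error $\le\tfrac{\epsilon}{2}\,\E[|q(x)|]$, which is \eqref{eq:ell1norm_deg_d_poly} after renaming $\epsilon$.

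For the additive bound I would first estimate the expected supremum. Applying Theorem~\ref{thm:sym_gauss} with index set $S=\mathcal{P}$ and $f(p,x)=|p(x)|$, and then dividing by $m$,
\[
\E_{x_1,\dots,x_m}\Big[\sup_{p\in\mathcal{P}}\Big|\tfrac1m\textstyle\sum_i|p(x_i)|-\E|p|\Big|\Big]\ \le\ \tfrac{\sqrt{2\pi}}{m}\,\E_{x}\E_{g}\Big[\sup_{p\in\mathcal{P}}\Big|\textstyle\sum_i g_i\,|p(x_i)|\Big|\Big].
\]
Fix $x_1,\dots,x_m$ and let $M\in\R^{m\times D}$ be the evaluation matrix $M_{i,\xi}=\chi_\xi(x_i)$. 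Since $\mathcal{P}$ is symmetric and $t\mapsto|t|$ is $1$-Lipschitz with $|0|=0$, the contraction principle for Gaussian processes (\cite{LTbook}) removes the inner absolute values at the cost of a constant, leaving $\E_g\big[\sup_{p\in\mathcal{P}}|\sum_i g_i p(x_i)|\big]=\E_g\|M^\top g\|_2\le(\E_g\|M^\top g\|_2^2)^{1/2}=\sqrt{\operatorname{tr}(MM^\top)}=\sqrt{mD}$, the last step because every entry of $M$ is $\pm1$. Hence the expected supremum is $\lesssim\sqrt{D/m}\le\tfrac{\epsilon}{4\cdot 3^d}$ once $m\gtrsim 9^d D/\epsilon^2$. (Alternatively, exactly as in the proof of Claim~\ref{clm:ell_1_ball}, one can bound this Gaussian process by Dudley's entropy integral: Theorem~\ref{thm:concentration_deg_d_poly_ell2} controls the $\ell_2$-diameter of $\{(|p(x_i)|)_i:p\in\mathcal{P}\}$ by $\sqrt{2m}$ on a $1-\delta$ event, a volumetric estimate covers this $D$-dimensional set by $(C\sqrt{m}/u)^{O(D)}$ balls, and the integral evaluates to $O(\sqrt{Dm})$.)

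Finally I would upgrade this to high probability by bounded differences: $F(x_1,\dots,x_m)=\sup_{p\in\mathcal{P}}\big|\tfrac1m\sum_i|p(x_i)|-\E|p|\big|$ changes by at most $\tfrac1m\sup_{p\in\mathcal{P}}\|p\|_\infty\le\tfrac1m\sup_{p\in\mathcal{P}}\|\wh p\|_1\le\tfrac{\sqrt D}{m}$ when one $x_i$ is modified, so McDiarmid's inequality gives $\Pr[F\ge\E F+t]\le 2\exp(-2t^2m/D)$; taking $t\asymp\sqrt{D\log(1/\delta)/m}$ and folding it into $m$ closes the argument. The only genuinely delicate point — and the place where the $9^d$ enters — is this normalization mismatch: the fluctuation of $\tfrac1m\sum|p(x_i)|$ naturally lives at scale $\|\wh p\|_2$, whereas the theorem measures error relative to $\E|p|$, and for degree-$d$ polynomials these can differ by a full factor $3^d$, so one must propagate that factor through Lemma~\ref{lem:degree_d_ell1_bounds} (it is absorbed into the constant $C$ when $d$ is fixed). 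Everything else — symmetrization, the trace/covering bound, and McDiarmid — is routine, and closely parallels the proofs already given for Claim~\ref{clm:ell_1_ball} and Theorem~\ref{thm:concentration_deg_d_poly_ell2}.
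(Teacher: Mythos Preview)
Your argument is correct and self-contained, but it proves a weaker bound than the one stated and takes a very different route from the paper. The paper's proof is a one-line citation: it invokes the Lewis-weight $\ell_1$-sampling theorem of Cohen and Peng (Theorem~\ref{thm:ell_1_linear_family}), observes that for the span of characters $\{\chi_\xi : |\xi|\le d\}$ over the uniform cube the Lewis weights are uniform, and reads off $m=O\big(\binom{n}{\le d}\log(\binom{n}{\le d}/\delta)/\epsilon^2\big)$ with a \emph{universal} constant. No symmetrization, chaining, or hypercontractivity is used.

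Your approach (symmetrization $+$ Gaussian contraction $+$ trace bound $+$ McDiarmid) is sound, and the individual steps are fine: Sudakov--Fernique handles the removal of the inner absolute values since $\sum_i(|p(x_i)|-|q(x_i)|)^2\le\sum_i(p(x_i)-q(x_i))^2$, and the trace computation gives $\E_g\|M^\top g\|_2\le\sqrt{mD}$ exactly as you say. The issue is the final normalization step. You control the fluctuation at the $\|\wh p\|_2$ scale and then invoke Lemma~\ref{lem:degree_d_ell1_bounds} to pass to the $\E|p|$ scale; this costs a factor $3^d$ in the target accuracy, hence $9^d$ in $m$. Your remark that this is ``absorbed into the constant $C$ when $d$ is fixed'' does not match the theorem: $C$ here is universal, and the stated sample complexity $m=\frac{C\binom{n}{\le d}\log(\binom{n}{\le d}/\delta)}{\epsilon^2}$ is used verbatim downstream (Theorem~\ref{thm:euclidean_section} and Theorem~\ref{thm:degree_d_adv_outliers}) with no $9^d$ factor. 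So what you have written proves the theorem with $C$ replaced by $C\cdot 9^d$, which is a genuine quantitative loss. The Cohen--Peng route avoids this because Lewis-weight sampling gives multiplicative $\ell_1$-concentration directly, without ever comparing $\E|p|$ to $\|\wh p\|_2$.
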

Proof of Theorem~\ref{thm:concentration_deg_d_poly_ell1} is deferred to Section~\ref{sec:concentration_low_deg}. We next use Theorems~\ref{thm:concentration_deg_d_poly_ell1} and ~\ref{thm:concentration_deg_d_poly_ell2} to argue that map defined by evaluation of low-degree polynomials at random points is a so-called \emph{Euclidean section}. More precisely, we show the following:
\begin{theorem}\label{thm:euclidean_section}
For any $d$ and $\delta$, there exists $m = O\big(\frac{{n \choose d} \cdot \log {n \choose d}}{\delta^2}\big)$ and $\rho=\frac{1}{4 \cdot 3^{2d}} - 8\delta$ such that for $m$ i.i.d. random points $x_1,\ldots,x_m \sim \{0,1\}^n$, 
with probability $0.99$, 
we have the following  guarantee: for any degree $d$ function $p$ and any subset $S \subseteq [m]$ of size $\rho m$, 
\begin{equation}~\label{eq:euclidean-1}
\sum_{i \in S} |p(x_i)| \le (1/2-\delta) \cdot \sum_{i \in [m]} |p(x_i)|.
\end{equation}
Further, 
\begin{equation}~\label{eq:euclidean-2}
\sum_{i \in S} |p(x_i)| + \delta \cdot m \E[|p(x) |] \le \sum_{i \in [m] \setminus S} |p(x_i)| \text{ and } \sum_{i \in S} |p(x_i)| + \delta \cdot 3^{-d} \cdot m \|\wh{p}\|_2 \le \sum_{i \in [m] \setminus S} |p(x_i)|.
\end{equation}
\end{theorem}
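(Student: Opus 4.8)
The plan is to combine the two concentration statements, namely Theorem~\ref{thm:concentration_deg_d_poly_ell1} for the $\ell_1$ side and Theorem~\ref{thm:concentration_deg_d_poly_ell2} for the $\ell_2$ side, with the anti-concentration bound of Lemma~\ref{lem:degree_d_ell1_bounds}, and to let Cauchy--Schwarz handle the uniformity over all subsets $S$ of size $\rho m$. First I would fix $\epsilon$ to a small constant multiple of $\delta$ (say $\epsilon = \delta$) and invoke Theorem~\ref{thm:concentration_deg_d_poly_ell1} and Theorem~\ref{thm:concentration_deg_d_poly_ell2}, each with failure probability $0.005$; a union bound then gives that with probability $0.99$, simultaneously for every degree-$d$ function $p$ we have $\sum_{i=1}^m |p(x_i)| \in [1-\delta,1+\delta]\, m\,\E[|p|]$ and $\sum_{i=1}^m |p(x_i)|^2 \le (1+\delta)\, m\,\E[|p|^2] = (1+\delta)\, m\, \|\wh{p}\|_2^2$, the last step being Parseval's identity. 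This is the source of the stated bound $m = O(\binom{n}{d}\log\binom{n}{d}/\delta^2)$. Condition on this event for the rest of the argument.

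Next, fix a degree-$d$ function $p$ and a subset $S \subseteq [m]$ with $|S| = \rho m$. By Cauchy--Schwarz, $\sum_{i \in S}|p(x_i)| \le \sqrt{|S|}\cdot\sqrt{\sum_{i \in S}|p(x_i)|^2} \le \sqrt{\rho m}\cdot\sqrt{(1+\delta)m}\,\|\wh{p}\|_2 = \sqrt{\rho(1+\delta)}\, m\,\|\wh{p}\|_2$. Lemma~\ref{lem:degree_d_ell1_bounds} lets me replace $\|\wh{p}\|_2$ by $3^d\,\E[|p|]$, giving $\sum_{i \in S}|p(x_i)| \le 3^d\sqrt{\rho(1+\delta)}\,m\,\E[|p|]$. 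Now substitute $\rho = \tfrac{1}{4\cdot 3^{2d}} - 8\delta$ (we implicitly assume $\delta$ is small enough that $\rho>0$) and use the elementary inequality $\sqrt{1/4 - t}\le 1/2 - t$ for $t \in [0,1/4]$ (which is just $(1/2-t)^2 = 1/4-t+t^2 \ge 1/4-t$) together with $3^{2d}\ge 1$, to conclude $3^d\sqrt{\rho}\le 1/2 - 8\delta$. Hence $\sum_{i \in S}|p(x_i)| \le (1/2-8\delta)\sqrt{1+\delta}\, m\,\E[|p|] \le (1/2-7\delta)\,m\,\E[|p|]$ for $\delta$ small. Comparing with $\sum_{i=1}^m|p(x_i)| \ge (1-\delta)\,m\,\E[|p|]$ yields $\sum_{i \in S}|p(x_i)| \le \tfrac{1/2-7\delta}{1-\delta}\sum_{i=1}^m|p(x_i)| \le (1/2-\delta)\sum_{i=1}^m|p(x_i)|$, which is \eqref{eq:euclidean-1}. (If $p\equiv 0$ all quantities vanish and the claim is trivial; otherwise $\E[|p|]>0$ by Lemma~\ref{lem:degree_d_ell1_bounds}, so the divisions are legitimate.)

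Finally, \eqref{eq:euclidean-2} follows by subtraction: $\sum_{i\in[m]\setminus S}|p(x_i)| = \sum_{i=1}^m|p(x_i)| - \sum_{i\in S}|p(x_i)| \ge (1-\delta)\,m\,\E[|p|] - (1/2-7\delta)\,m\,\E[|p|] = (1/2+6\delta)\,m\,\E[|p|]$, while $\sum_{i\in S}|p(x_i)| + \delta\, m\,\E[|p|] \le (1/2-6\delta)\,m\,\E[|p|]$, so the first inequality of \eqref{eq:euclidean-2} is immediate; the $\|\wh{p}\|_2$ version then follows because $\delta\cdot 3^{-d}\, m\,\|\wh{p}\|_2 \le \delta\, m\,\E[|p|]$, again by Lemma~\ref{lem:degree_d_ell1_bounds}. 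The only real subtlety in this theorem is the bookkeeping: one must pick $\epsilon$ as a small enough constant multiple of $\delta$ so that all the $(1\pm\epsilon)$ slacks get absorbed into the $\delta$-losses while the clean constant $\tfrac{1}{4\cdot 3^{2d}}-8\delta$ survives. The genuinely hard analytic work --- uniform concentration of $\sum_i|p(x_i)|$ and $\sum_i|p(x_i)|^2$ over the entire space of degree-$d$ polynomials --- is precisely Theorems~\ref{thm:concentration_deg_d_poly_ell1} and~\ref{thm:concentration_deg_d_poly_ell2}, which are taken as given here.
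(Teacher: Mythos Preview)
Your proof is correct and takes essentially the same approach as the paper: both combine the $\ell_1$ and $\ell_2$ concentration theorems with Lemma~\ref{lem:degree_d_ell1_bounds} and Cauchy--Schwarz, then deduce \eqref{eq:euclidean-2} by subtraction. The only cosmetic difference is that the paper proves \eqref{eq:euclidean-1} by contradiction (assuming the inequality fails and deriving a lower bound on $|S|$), whereas you bound $\sum_{i\in S}|p(x_i)|$ directly; the underlying calculation is identical.
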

\begin{proof}
Sample $m$ independent random points $x_1, \ldots, x_m$ from $\{0,1\}^n$. Then, from Theorems~\ref{thm:concentration_deg_d_poly_ell2} and \ref{thm:concentration_deg_d_poly_ell1}, we get that with probability $0.99$, for any degree-$d$ function $p$, 
\begin{equation}\label{eq:property_deg_d}
\sum_{i=1}^m |p(x_i)| \in [1-\delta,1+\delta] \cdot m \E[|p(x)|] \text{ and } \sum_{i=1}^m |p(x_i)|^2 \in [1-\delta,1+\delta] \cdot m \|\wh{p}\|_2^2.
\end{equation}
 We are now ready to prove (\ref{eq:euclidean-1}). To do this, for contradiction, assume that there exists $p$ and $S$ of size $\rho m$ such that 
$$
\sum_{i \in S} |p(x_i)| \ge (1/2-\delta) \sum_{i=1}^m |p(x_i)|.
$$
Applying (\ref{eq:property_deg_d}) and Lemma~\ref{lem:degree_d_ell1_bounds} to the above, 
\begin{equation}\label{eq:adv_out_lower}
\sum_{i \in S} |p(x_i)| \ge (1/2-\delta) \cdot (1-\delta) m \E[|p(x)|] \ge (1/2-\delta) \cdot (1-\delta) m \cdot 3^{-d} \cdot \|\wh{p}\|_2.
\end{equation}
Next, by Cauchy-Schwartz inequality and (\ref{eq:property_deg_d}), we have 
\begin{equation}\label{eq:adv_out_upper}
(\sum_{i \in S} |p(x_i)|)^2 \le |S| \cdot \sum_{i \in S} |p(x_i)|^2 \le |S| \cdot \sum_{i \in [m]} |p(x_i)|^2 \le |S| \cdot (1+\delta) m \|\wh{p}\|_2^2.
\end{equation}
Combining (\ref{eq:adv_out_lower}) and (\ref{eq:adv_out_upper}), we have 
$$
|S| \ge  \frac{(1/2-\delta)^2 \cdot 3^{-2d} \cdot (1-\delta)^2 }{1+\delta} \cdot m > \frac{3^{-2d}}{4} - 8 \delta, 
$$
which contradicts the upper bound on the size of the set $S$. This finishes the proof of the first item. To prove the next two items, we have that 
$$
\sum_{i \in [m]\setminus S} |p(x_i)| - \sum_{i \in S}|p(x_i)|=\sum_{i \in [m]} |p(x_i)| - 2 \sum_{i \in S} |p(x_i)| \ge \delta \cdot m \cdot \E_x[|p(x)|]. $$
The inequality uses (\ref{eq:euclidean-1}). Using the lower bound of $\E_x[|p(x)|]$ from Lemma~\ref{lem:degree_d_ell1_bounds}, we get (\ref{eq:euclidean-2}). 

\end{proof}
We can now finish the proof of Theorem~\ref{thm:degree_d_adv_outliers}.

\begin{proofof}{Theorem~\ref{thm:degree_d_adv_outliers}}
Let $S$ denote the subset of adversarial outliers and $\vec{e}$ denote the noise vector on the observations $y(x_1),\ldots,y(x_m)$ such that $|\vec{e}(i)| \le \epsilon$ for any $i \notin S$. As $\rho=\frac{1}{4 \cdot 3^{2d}}-\delta$, Chernoff bound implies that  $|S| \le \frac{1}{4 \cdot 3^{-2d}}-\delta/2$ with probability $1-\exp(-\delta^2 m)$.
From Theorem~\ref{thm:euclidean_section} (instantiated with error $\delta/20$), with probability $0.99$, we have
\begin{equation}\label{eq:adv_gap_between_half_S}
\sum_{i \in S} |p(x_i)| + \frac{\delta}{20} \cdot m \E[|p(x)|] \le \sum_{i \in [m] \setminus S} |p(x_i)| \text{ and } \sum_{i \in S} |p(x_i)| + \frac{\delta}{20} \cdot 3^{-d} \cdot m \|\wh{p}\|_2 \le \sum_{i \in [m] \setminus S} |p(x_i)|.
\end{equation}
From the definition of $g$, we have 
\begin{equation}\label{eq:tilde_alpha_minimizer_GZ}
\sum_{i=1}^m |g(x_i)-y(x_i)| \le \sum_{i=1}^m |f(x_i)-y(x_i)|
\end{equation}
We set $h=f-g$ and lower bound the L.H.S. of \eqref{eq:tilde_alpha_minimizer_GZ} as
\begin{align*}
\sum_{i=1}^m \big|y(x_i)-g(x_i)\big| \ge \sum_{i \in S} \big|y(x_i) - f(x_i)\big| - \sum_{i  \in S} \big|h(x_i)\big| + \sum_{i \notin S} \big|h(x_i)\big| - \epsilon (m - |S|).
\end{align*}
Then we upper bound the R.H.S. of \eqref{eq:tilde_alpha_minimizer_GZ} as $ \sum_{i=1}^m |f(x_i)-y(x_i)| \le \sum_{i \in S} \big|y(x_i) - f(x_i)\big| + \epsilon (m - |S|)$. After plugging in these two bounds in \eqref{eq:tilde_alpha_minimizer_GZ}, we get
$$
\sum_{i \notin S} \big|h(x_i)\big| - \sum_{i  \in S} \big|h(x_i)\big| \le 2 \epsilon (m - |S|).
$$
Since~\eqref{eq:adv_gap_between_half_S} holds for any degree $d$ polynomial $p$, we plug in $p =h$ and get $\E[|h|] \lesssim \frac{\eps}{\delta}$ and $\|\wh{h}\|_2 \lesssim 3^{d} \cdot \frac{\eps}{\delta}$. The claim is now immediate. 
\end{proofof}

\subsection{Concentration of $\ell_1$ and $\ell_2$ Estimation for low-degree functions}\label{sec:concentration_low_deg}
We state the following version of the matrix Chernoff bound to prove Theorem~\ref{thm:concentration_deg_d_poly_ell2}.
\begin{theorem}[Theorem 1.1 of \cite{Tropp}]\label{thm:matrix_chernoff}
Consider a finite sequence $\{X_k\}$ of independent, random, self-adjoint matrices of dimension $d$. Assume that each random matrix $X_k$ satisfies (with probability $1$)
$$ X_k \succeq 0 \quad \text{ and } \quad \lambda(X_k) \le R.$$
Define $\mu_{\min}=\lambda_{\min}(\sum_k \E[X_k])$ and $\mu_{\max}=\lambda_{\max}(\sum_k \E[X_k])$. Then
\begin{align}
\Pr\left\{\lambda_{\min}(\sum_k X_k) \le (1-\eta)\mu_{\min}\right\} & \le d \left( \frac{e^{-\eta}}{(1-\delta)^{1-\eta}}\right)^{\mu_{\min}/R} \text{ for } \eta \in [0,1], and \\
\Pr\left\{\lambda_{\max}(\sum_k  X_k) \ge (1+\eta)\mu_{\max}\right\} & \le d \left( \frac{e^{-\eta}}{(1+\eta)^{1+\eta}}\right)^{\mu_{\max}/R} \text{ for } \eta \ge 0 
\end{align}
\end{theorem}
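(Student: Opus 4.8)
The plan is to prove this via the matrix Laplace transform method (Ahlswede--Winter, in the form sharpened by Tropp), whose structure parallels the scalar Chernoff argument with one genuinely operator-theoretic input. First I would reduce both tail bounds to controlling a single quantity. For the upper tail, fix $\theta>0$; since $\lambda_{\max}(\exp(\theta Y))=\exp(\theta\lambda_{\max}(Y))$ for self-adjoint $Y$ and $\lambda_{\max}(M)\le\Tr M$ for positive semidefinite $M$, Markov's inequality applied to $\exp(\theta\sum_k X_k)$ gives
\[
\Pr\big\{\lambda_{\max}(\textstyle\sum_k X_k)\ge t\big\}\;\le\;e^{-\theta t}\,\E\big[\Tr\exp\big(\theta\textstyle\sum_k X_k\big)\big].
\]
So the whole problem becomes: bound $\E\,\Tr\exp(\theta\sum_k X_k)$ by something of the form $d\cdot e^{(\cdot)\mu_{\max}}$, and then optimize over $\theta$.

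The key step — and the one I expect to be the main obstacle, since it is the only place one cannot simply transcribe the scalar proof — is \emph{subadditivity of the matrix cumulant generating function}: for independent self-adjoint $X_k$,
\[
\E\big[\Tr\exp\big(\textstyle\sum_k\theta X_k\big)\big]\;\le\;\Tr\exp\big(\textstyle\sum_k\log\E[e^{\theta X_k}]\big).
\]
This is obtained by writing the left side as a nested expectation, conditioning on $X_1,\dots,X_{k-1}$, and applying Jensen's inequality to the map $A\mapsto\Tr\exp(H+\log A)$, which is \emph{concave} on positive-definite matrices by Lieb's concavity theorem; iterating over $k$ yields the claim. Importing (or reproving) Lieb's theorem is the real content here; everything around it is routine once it is in hand.

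The remaining steps lift the scalar Chernoff bound coordinate-wise. Fixing $\theta>0$ and writing $g(\theta)=(e^{\theta R}-1)/R$, convexity of $a\mapsto e^{\theta a}$ gives $e^{\theta a}\le 1+g(\theta)a$ for $a\in[0,R]$; applying this to the eigenvalues of $X_k$ (which lie in $[0,R]$ by the hypothesis $0\preceq X_k\preceq R\,I$) yields $e^{\theta X_k}\preceq I+g(\theta)X_k$, hence $\E[e^{\theta X_k}]\preceq I+g(\theta)\E[X_k]$. Using $I+A\preceq e^{A}$ with $A=g(\theta)\E[X_k]$ and then operator monotonicity of $\log$, $\log\E[e^{\theta X_k}]\preceq g(\theta)\E[X_k]$, so $\sum_k\log\E[e^{\theta X_k}]\preceq g(\theta)\sum_k\E[X_k]$. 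Since $\Tr\exp(\cdot)$ is monotone under the semidefinite order and $\Tr\exp(M)\le d\,e^{\lambda_{\max}(M)}$, the subadditivity bound gives $\E[\Tr\exp(\theta\sum_k X_k)]\le d\cdot e^{g(\theta)\mu_{\max}}$.

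Finally I would assemble and optimize. Plugging into the Markov bound,
\[
\Pr\big\{\lambda_{\max}(\textstyle\sum_k X_k)\ge(1+\eta)\mu_{\max}\big\}\;\le\;d\cdot\inf_{\theta>0}\exp\!\Big(-\theta(1+\eta)\mu_{\max}+\tfrac{e^{\theta R}-1}{R}\mu_{\max}\Big),
\]
and the choice $e^{\theta R}=1+\eta$ makes the exponent $(\mu_{\max}/R)\big(\eta-(1+\eta)\ln(1+\eta)\big)$, i.e.\ the bound $d\,\big[e^{\eta}/(1+\eta)^{1+\eta}\big]^{\mu_{\max}/R}$. The lower-tail bound follows by running the same three steps on $-X_k$ with $\theta>0$: one uses $e^{-\theta a}\le 1+\tfrac{e^{-\theta R}-1}{R}a$ on $[0,R]$, notes that a negative multiple of a positive semidefinite matrix has $\Tr\exp$ at most $d$ times $e$ raised to its \emph{smallest} eigenvalue, and optimizes with $e^{-\theta R}=1-\eta$ (valid for $\eta\in[0,1)$), producing $d\,\big[e^{-\eta}/(1-\eta)^{1-\eta}\big]^{\mu_{\min}/R}$, matching the first displayed inequality of the statement.
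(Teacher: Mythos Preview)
The paper does not give its own proof of this theorem; it is quoted verbatim as Theorem~1.1 of \cite{Tropp} and used as a black box in the proof of Theorem~\ref{thm:concentration_deg_d_poly_ell2}. Your outline is correct and is exactly Tropp's proof: matrix Laplace transform plus Markov, Lieb's concavity to get the cumulant-subadditivity inequality, the scalar Chernoff linearization $e^{\theta a}\le 1+\tfrac{e^{\theta R}-1}{R}a$ transferred through the spectral calculus, and the standard optimization in $\theta$. There is nothing to compare against in the paper itself.

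One remark: your derivation actually yields the correct bounds $d\,\big[e^{\eta}/(1+\eta)^{1+\eta}\big]^{\mu_{\max}/R}$ and $d\,\big[e^{-\eta}/(1-\eta)^{1-\eta}\big]^{\mu_{\min}/R}$, whereas the statement as printed in the paper has what appear to be typos ($e^{-\eta}$ instead of $e^{\eta}$ in the upper tail, and $(1-\delta)^{1-\eta}$ instead of $(1-\eta)^{1-\eta}$ in the lower tail). Your version matches Tropp's original.
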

We will use the following notation -- given $x \in \{0,1\}^n$, let $\Mon_d(x) \in \mathbb{R}^{n \choose \le d}$ denote the  vector of all characters of degree at most $d$, i.e., $\big( (-1)^{\langle \xi , x \rangle} \big)_{|\xi| \le d}$. Observe that with this notation, for any degree $d$ function $p$ and any point $x$, $p(x)=\langle \wh{p}, \Mon_d(x) \rangle$.  We now give the proof of Theorem~\ref{thm:concentration_deg_d_poly_ell2}. 

\begin{proofof}{Theorem~\ref{thm:concentration_deg_d_poly_ell2}}
Begin by observing 
that $\underset{\E}{x}[|p(x)|^2]=\|\wh{p}\|_2^2$. Thus,  establishing \eqref{eq:ell2norm_deg_d_poly} for every degree-$d$ polynomial $p$ is equivalent to establishing that with probability $1-\delta$,  for every $\vec{\alpha} \in \mathbb{R}^{{n\choose \le d}}$, 
 $$
 \left\| \big( \Mon_d(x_1),\ldots,\Mon_d(x_m) \big)^{\top} \cdot \vec{\alpha} \right\|_2^2 \in [1-\epsilon,1+\epsilon] \cdot m \|\vec{\alpha} \|_2^2
$$ 
This in turn is equivalent to establishing tight upper and lower bounds on the eigenvalues of the matrix 
$\big( \Mon_d(x_1),\ldots,\Mon_d(x_m) \big) \cdot \big( \Mon_d(x_1),\ldots,\Mon_d(x_m) \big)^{\top}=\sum_{i=1}^m \Mon_d(x_i) \cdot \Mon_d(x_i)^{\top}$. Towards this, note that for all $i$,  $\|\Mon_d(x_i)\|_2^2={n \choose \le d}$ and further $\E_{x \in \{0,1\}^n}[ \Mon_d(x) \cdot \Mon_d(x)^\top ]=I_{{n \choose \le d} \times {n \choose \le d}}$. We can now apply the matrix Chernoff bound 
(Theorem~\ref{thm:matrix_chernoff}) to the matrix random variable  $\Mon_d(x)$  (where $x \sim \{0,1\}^n$)
 with error parameter $\eta = \epsilon$, confidence $\delta$ and $R = \binom{n}{ \le d}$. It suffices to set 
$m = C \frac{{n \choose \le d} \log \frac{{n \choose \le d}}{\delta}}{\epsilon^2}$ (for a sufficiently large constant $C$) to get the final result. 
\end{proofof}
To prove Theorem~\ref{thm:concentration_deg_d_poly_ell1}, we need the following result from \cite{CohenPeng15} (Theorem~1.1 in \cite{CohenPeng15}). 
\begin{theorem}\label{thm:ell_1_linear_family}
Given any $k$ characters $\chi_1,\ldots,\chi_k$, $\delta>0$, and $\eps>0$, for $m=O(\frac{k \log k/\delta}{\eps^2})$ randomly chosen points $x_1,\ldots,x_m$ in $\{0,1\}^n$, with probability $1-\delta$, 
$$ \forall f \in \textsf{span}\{\chi_1,\ldots,\chi_k\}, \sum_{i=1}^m |f(x_i)|\in [1- \eps, 1+\eps] m \cdot \E_x[|f(x)|].$$
\end{theorem}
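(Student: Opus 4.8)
The plan is to read the statement as saying that uniform sampling of $m$ points from $\{0,1\}^n$ yields, with probability $1-\delta$, a $(1\pm\eps)$-distortion $\ell_1$ subspace embedding for the $k$-dimensional subspace $V=\mathsf{span}\{\chi_1,\ldots,\chi_k\}\subseteq L_1(\{0,1\}^n)$, and to prove this by identifying uniform sampling with sampling proportional to the $\ell_1$ Lewis weights of the relevant matrix and then invoking the Cohen--Peng row-sampling theorem. By homogeneity it suffices to handle all $f\in V$ with $\E_x[|f(x)|^2]=1$; for such $f$, Parseval gives $\|\wh f\|_2=1$, and since $\wh f$ is supported on only $k$ characters we get $\|f\|_\infty\le\|\wh f\|_1\le\sqrt k$ and, by Cauchy--Schwarz, $\E_x[|f(x)|]\ge \E_x[|f(x)|^2]/\|f\|_\infty\ge 1/\sqrt k$ — both facts are used in the high-probability step below.

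The key structural observation is that the $\ell_1$ Lewis weights of the $2^n\times k$ evaluation matrix $A=\big(\chi_j(x)\big)_{x\in\{0,1\}^n,\,j\in[k]}$ are all equal, namely $w_x=k/2^n$ for every $x$. Writing $a_x$ for the $x$-th row, we have $\|a_x\|_2^2=k$ for every $x$ because each character is $\pm1$-valued, and the characters are orthogonal, so $A^\top A=2^n\cdot I_k$. Plugging the ansatz $w_x\equiv c$ into the $\ell_1$ Lewis fixed-point equation $w_x=\big(a_x^\top(A^\top \mathrm{diag}(w)^{-1}A)^{-1}a_x\big)^{1/2}$ gives $A^\top \mathrm{diag}(w)^{-1}A=(2^n/c)I_k$, hence $w_x=\sqrt{ck/2^n}$, which forces $c=k/2^n$. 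Consequently sampling a point uniformly from $\{0,1\}^n$ is exactly sampling proportional to the $\ell_1$ Lewis weights, and $\sum_x w_x=k$.

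Given this, the Cohen--Peng theorem (Theorem~1.1 of \cite{CohenPeng15}) says that taking $m=O\!\big(k\log(k/\delta)/\eps^2\big)$ i.i.d.\ rows sampled proportional to the $\ell_1$ Lewis weights produces, with probability $1-\delta$, a $(1\pm\eps)$ $\ell_1$ subspace embedding of $V$; dividing through by $m$ and using $\E_x[|f(x)|]=\|f\|_1$ turns this into precisely the claimed inequality for all $f\in\mathsf{span}\{\chi_1,\ldots,\chi_k\}$. I expect the genuinely hard part to be the one internal to \cite{CohenPeng15}: a naive $\eps$-net/union-bound over the $k$-dimensional family $V$, or equivalently running the symmetrization-and-chaining argument of Claim~\ref{clm:covering_ell_1_ball} with $|T|=k$, loses an extra factor of $k$ and only yields $m=\wt O(k^2/\eps^2)$. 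Obtaining the near-linear-in-$k$ bound requires carrying out the chaining in the Lewis-reweighted coordinates, where the uniformity of the Lewis weights established above is exactly what balances the row norms and removes that lossy step. For completeness I would also note the self-contained but weaker route: replace the call to \cite{CohenPeng15} by the crude $\wt O(k^2/\eps^2)$ chaining bound together with a McDiarmid bounded-differences step (with constant $\le 2\sqrt k$, since $\|f\|_\infty\le\sqrt k$ on the normalized family) to boost the expectation estimate to probability $1-\delta$, exactly mirroring the proof of Claim~\ref{clm:ell_1_ball}; this suffices for the qualitative algorithmic conclusions, but gives a quadratically worse query complexity in Theorem~\ref{thm:degree_d_adv_outliers}.
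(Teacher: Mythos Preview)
Your proposal is correct and matches the paper's approach exactly: the paper does not prove this theorem from scratch but instead observes that the $\ell_1$ Lewis weights of the evaluation matrix for $k$ characters are uniform over $\{0,1\}^n$ and then instantiates Theorem~1.1 of \cite{CohenPeng15}. Your explicit verification of the Lewis fixed-point equation (yielding $w_x=k/2^n$) is more detailed than the paper's one-line remark, and your discussion of the weaker $\wt O(k^2/\eps^2)$ fallback is extra context the paper does not include, but the core argument is identical.
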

We remark that in fact, \cite{CohenPeng15}  state a more general concentration result in terms of so-called ``Lewis weights". We get the above theorem by observing that the Lewis weight of $k$ characters $\chi_1,\ldots,\chi_k$ for the subspace $\textsf{span}\{\chi_1,\ldots,\chi_k\}$ is uniform over $\{0,1\}^n$. 
Instantiating Theorem~1.1 in \cite{CohenPeng15} in this setting gives us the above theorem. 

\begin{proofof}{Theorem~\ref{thm:concentration_deg_d_poly_ell1}} 
Let us define the set $\{\chi_1, \ldots, \chi_k\}$ as the set of all characters of Hamming weight at most $d$. Note that the size $k = {n \choose \le d}$. Observe that by definition, $p$ lies in the span of $\{\chi_1, \ldots, \chi
_k\}$. By now applying Theorem~\ref{thm:ell_1_linear_family} to this set of characters, Theorem~\ref{thm:concentration_deg_d_poly_ell1} follows. 
\end{proofof}

\section{Sparse FFT of periodic signal under adversarial outliers}\label{sec:periodic_adv}
In this section, we consider recovery of Fourier sparse signals under adversarial outlier noise over the torus assuming an extra structural assumption: Namely, the Fourier coefficients are \emph{granular}, i.e., there is some number $\eta$ such that all the Fourier coefficients are integral multiples of this quantity. {Using complex analytic methods, we show that under such an assumption (Theorem~\ref{thm:periodic_adversarial_outlier}), we can tolerate a constant fraction of adversarial outliers. Further, Claim~\ref{clm:counter-ex}  shows that without this assumption, such an error tolerance cannot be achieved. In fact, note that even the  \emph{uncertainty principle} predicts that only about $1/k$ fraction of outliers can be tolerated and thus our theorem lets us \emph{beat the uncertainty principle} under the granularity assumption.  }

Let us consider $f: [0,1) \rightarrow \mathbb{C}$ where $f(t)=\sum_{j=1}^k \wh{f}(\xi_j) \cdot e^{2 \pi \bi \cdot \xi_j t}$. We assume that $f(t)$ satisfies the following two properties:
\begin{enumerate}
\item $\xi_1<\cdots<\xi_k$ are integer frequencies in the bandlimit $[-F,F]$.
\item Amplitudes $\wh{f}(\xi_1),\ldots,\wh{f}(\xi_k)$ are multiples of $\eta$ (in real and imaginary part separately) and $\sum_j |\wh{f}(\xi_j)|^2~\le~1$.
\end{enumerate}
\begin{theorem}\label{thm:periodic_adversarial_outlier}
There is an algorithm which given as input, sparsity parameter $k$, and additional parameters $\delta>0$, $\eta>0$, and $\rho < \frac{1}{2}-\delta$ has the following guarantee: There exist a positive constant $C_{\delta,\rho}$ and $\epsilon = \big(C_{\delta,\rho} \cdot \eta \big)^{O(1/\delta)}$ such that the algorithm can recover $f$ under any $(\rho,\epsilon)$-adversarial outlier noise if $f$ satisfies conditions (1) and (2) above.  The query complexity of the algorithm is 
$O(\frac{k \log F/\eta}{\delta^2})$. In fact, the queries to the oracle are just distributed uniformly in $[0,1)$.  
\ignore{
Given  and an algorithm with  samples such that for any $f$ whose amplitudes are multiples of $\eta$ and any $(\rho,\epsilon)$ adversarial outlier noise $e$, it outputs $f$ with probability $0.99$.}
\end{theorem}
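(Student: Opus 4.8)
The plan is to recover $f$ by a brute-force search over a finite candidate set, with correctness certified by a complex-analytic anti-concentration bound. First I would note that the hypotheses make the candidate set finite: a $k$-sparse function $g$ on integer frequencies in $[-F,F]$ whose coefficients have real and imaginary parts that are integer multiples of $\eta$ and which satisfies $\|\widehat g\|_2\le 1$ ranges over a set $\mathcal G$ of size $\exp\big(O(k\log(F/\eta))\big)$ (there are at most $\binom{2F+1}{\le k}$ frequency supports, and a volume/packing argument bounds the number of granular coefficient vectors in the unit $\ell_2$-ball of $\mathbb C^k$ by $(O(1)/\eta)^{2k}$). The algorithm draws $m=\widetilde O\big(k\log(F/\eta)/\delta^2\big)$ points $x_1,\dots,x_m$ uniformly from $[0,1)$, queries $y$, and returns the $g\in\mathcal G$ minimizing the number of \emph{violations} $\#\{i:|g(x_i)-y(x_i)|>\eps\}$. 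Since $f\in\mathcal G$ and each $x_i$ is an outlier independently with probability at most $\rho$ (adversarial-outlier assumption), a Chernoff bound gives that $f$ has at most $(\rho+\delta/4)m$ violations with high probability.

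\textbf{The heart of the argument} is an anti-concentration estimate for $h:=f-g$ where $g\in\mathcal G\setminus\{f\}$. Such an $h$ is nonzero, bandlimited to $[-F,F]$, satisfies $\|\widehat h\|_2\le 2$ and hence $\|h\|_\infty\le\|\widehat h\|_1\le 2\sqrt{2k}$, and every nonzero Fourier coefficient of $h$ has modulus at least $\eta$ (a nonzero difference of granular complex numbers has a real or imaginary part equal to a nonzero integer multiple of $\eta$). I claim $\E_{t\sim[0,1)}\big[\log|h(t)|\big]\ge\log\eta$. Writing $z=e^{2\pi\bi t}$ and letting $\xi_{\min}$ be the least frequency occurring in $h$, we have $h(t)=z^{\xi_{\min}}Q(z)$ where $Q(w)=\sum_j\widehat h(\xi_j)\,w^{\,\xi_j-\xi_{\min}}$ is an honest polynomial of degree $\le 2F$ with $Q(0)=\widehat h(\xi_{\min})\neq 0$, and $|Q(z)|=|h(t)|$ on $|z|=1$. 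Jensen's formula (equivalently, subharmonicity of $\log|Q|$ together with its sub-mean-value property) gives
\[
\E_t\big[\log|h(t)|\big]=\frac1{2\pi}\int_0^{2\pi}\log\big|Q(e^{\bi\theta})\big|\,d\theta=\log|Q(0)|+\sum_{|z_j|<1}\log\frac1{|z_j|}\ \ge\ \log|\widehat h(\xi_{\min})|\ \ge\ \log\eta .
\]
Now for any $r<1$, splitting the expectation according to whether $|h(t)|\le r$ and using $\log|h|\le\log r$ on the first part and $\log|h|\le\log(2\sqrt{2k})$ everywhere,
\[
\log\eta\ \le\ \Pr_t[|h(t)|\le r]\cdot\log r+\big(1-\Pr_t[|h(t)|\le r]\big)\log(2\sqrt{2k}),
\]
which rearranges to $\Pr_t[|h(t)|\le r]\le \log(2\sqrt{2k}/\eta)\,/\,\log(1/r)$. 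Choosing $\eps=(C_{\delta,\rho}\,\eta)^{O(1/\delta)}$ (the $\poly(k)$ factor absorbed into the constant/exponent) therefore makes $\Pr_t[|h(t)|\le 2\eps]\le\delta/4$ for every such $h$.

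\textbf{Finishing.} For each fixed $g\in\mathcal G\setminus\{f\}$ the quantity $\#\{i:|h(x_i)|\le 2\eps\}$ is a sum of i.i.d.\ Bernoullis of mean $\le\delta/4$, so by Chernoff it is at most $(\delta/2)m$ except with probability $\exp(-\Omega(\delta m))$; a union bound over the $\exp(O(k\log(F/\eta)))$ candidates, together with the earlier $\exp(-\Omega(\delta^2 m))$ failure events, is paid for once $m=\widetilde O(k\log(F/\eta)/\delta^2)$. On the good event, every inlier $i$ with $|h(x_i)|>2\eps$ has $|g(x_i)-y(x_i)|\ge|h(x_i)|-|f(x_i)-y(x_i)|>\eps$, so $g$'s violation count is at least $\#\{\text{inliers}\}-(\delta/2)m\ge(1-\rho-3\delta/4)m$, which strictly exceeds $(\rho+\delta/4)m$ since $\rho<1/2-\delta$. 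Hence $f$ is the strict minimizer and the algorithm outputs $f$; the overall success probability is $0.99$ after fixing constants, the query complexity is $O(k\log(F/\eta)/\delta^2)$, and all queries are uniform on $[0,1)$ by construction.

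\textbf{Main obstacle.} The crux is the anti-concentration lemma, i.e.\ converting the \emph{granularity} of the coefficients into a quantitative lower bound on the typical magnitude of a bandlimited Fourier-sparse function. The right handle is to pass to the associated polynomial $Q$ and use Jensen's formula / subharmonicity of $\log|Q|$ so that the \emph{constant term} of $Q$ — which may be chosen to be a nonzero coefficient, hence of modulus at least $\eta$ — lower-bounds the logarithmic average of $|h|$ on the circle. Everything else (finiteness of $\mathcal G$, the Chernoff estimates, and the union bound) is routine, and the lemma is clearly vacuous without the granularity hypothesis, consistent with Claim~\ref{clm:counter-ex}.
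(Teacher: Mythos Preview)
Your approach is essentially the paper's: enumerate the finite candidate set $\mathcal G$, count violations, and separate $f$ from every other candidate via an anti-concentration lemma proved through Jensen's formula for the polynomial $Q$ associated to $h=f-g$. The finishing Chernoff/union-bound argument is correct and matches the paper's.

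The one real discrepancy is quantitative, in the anti-concentration step. You upper-bound the ``large'' side by $\log|h|\le\log\|h\|_\infty\le\log(2\sqrt{2k})$, which yields
\[
\Pr_t\big[|h(t)|\le r\big]\ \le\ \frac{\log(2\sqrt{2k}/\eta)}{\log(1/r)},
\]
and hence forces $\eps=\big(\Theta(\eta)/\sqrt{k}\big)^{O(1/\delta)}$. The stray $k^{-O(1/\delta)}$ factor cannot be ``absorbed into the constant/exponent'': $C_{\delta,\rho}$ is supposed to depend only on $\delta,\rho$. The paper (Lemma~\ref{lem:anticoncentration_period_signals}) avoids this $k$-dependence by first normalizing so that $\E_t|h(t)|^2=1$, and then, on the set $\{|h|>r\}$ (which has measure $1-\alpha$), using concavity of $\log$ together with $\E\big[|h|^2\,\big|\,|h|>r\big]\le\frac{1}{1-\alpha}$ to get
\[
\E\big[\log|h|\ \big|\ |h|>r\big]\ \le\ \tfrac12\log\tfrac{1}{1-\alpha},
\]
a bound that is independent of $k$. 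Plugging this into Jensen's formula gives $r=(\eta/C_\alpha)^{1/\alpha}$ with $C_\alpha=(1-\alpha)^{-(1-\alpha)/2}$, i.e.\ exactly the $\eps=(C_{\delta,\rho}\,\eta)^{O(1/\delta)}$ form in the theorem. Replacing your $\ell_\infty$ bound by this $\ell_2$/Jensen step closes the gap; everything else in your proof goes through unchanged.
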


\begin{algorithm}[H]
\caption{Sparse granular FFT  over $[0,1)$}\label{alg:BSS}
\begin{algorithmic}[1]
\Procedure{\textsc{SparseGranularFFT}}{$y,\delta,\eta$}
\State Sample $m=O(k \log F/\eta)$ random points $x_1,\ldots,x_m$ from $[0,1)$;
\State  Let $y(\cdot)$ be the oracle and $y_1=y(x_1), \ldots, y_m=y(x_m)$. 
\For {$g$ with $\xi_1,\ldots,\xi_k \in [-F,F]$ and amplitudes of multiples of $\eta$ satisfying $\underset{j}{\sum} |\wh{g}(\xi_j)|^2 \le 1$}
\State $s = |\{i : |g(x_i) - y_i| \ge \eps\}$;
\If {$s<(0.5-\delta/2)m$}
\State Output $g$;
\EndIf
\EndFor
\EndProcedure
\end{algorithmic}
\end{algorithm}
Note that the algorithm above is computationally inefficient as it iterates over $\approx (|F|/\eta)^k$ functions in order to find the best $g$. 
The main technical ingredient used to prove Theorem~\ref{thm:periodic_adversarial_outlier} is the following anti-concentration lemma for periodic signals with integral amplitudes.
\begin{lemma}\label{lem:anticoncentration_period_signals}
Let $f(t)=\sum_{j=1}^k v_j e^{2 \pi \bi \cdot \xi_j t}$ with integer frequencies $\xi_1<\cdots<\xi_k$ and amplitudes $\sum_j |v_j|^2=1$. For any constant $\alpha \in (0,1)$, there exists $C_\alpha$ such that for any $\eta$, if $|v_1| \ge \eta$, 
then for $\delta=(\eta/C_\alpha)^{1/\alpha}$, 
$$
\underset{t \sim [0,1]}{\Pr} \big[ |f(t)| \le \delta \big] \le \alpha.
$$
\end{lemma}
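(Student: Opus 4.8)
The plan is to turn the anti-concentration statement into a statement about a polynomial on the unit circle and then feed it into Jensen's formula. Write $z=e^{2\pi\bi t}$; since $\xi_1=\min_j\xi_j$, every $\xi_j-\xi_1$ is a nonnegative integer, so $f(t)=z^{\xi_1}\,g(z)$ where $g(z)=\sum_{j=1}^k v_j z^{\xi_j-\xi_1}$ is an honest polynomial of degree $\le 2F$ with $g(0)=v_1$. In particular $|f(t)|=|g(e^{2\pi\bi t})|$, so it suffices to understand $|g|$ on the unit circle, and $\int_0^1\log|f(t)|\,dt$ is finite (a nonzero polynomial has finitely many zeros on the circle, each contributing only an integrable logarithmic singularity). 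Two facts drive the argument. First, Parseval's identity gives $\int_0^1|f(t)|^2\,dt=\sum_j|v_j|^2=1$. Second — and this is the only place complex analysis enters — $\log|g|$ is subharmonic, so by Jensen's formula (equivalently, the sub-mean-value inequality together with the fact that each zero $\zeta$ of $g$ inside the unit disk contributes a nonnegative term $\log(1/|\zeta|)$),
\[
\int_0^1\log|f(t)|\,dt\;=\;\frac{1}{2\pi}\int_0^{2\pi}\log|g(e^{\bi\theta})|\,d\theta\;\ge\;\log|g(0)|\;=\;\log|v_1|\;\ge\;\log\eta .
\]

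\textbf{Combining the two estimates.} Let $E=\{t\in[0,1]:|f(t)|\le\delta\}$ and $a=|E|$; the goal is to show $a\le\alpha$. Split the integral over $[0,1]$ into $E$ and $E^c$. On $E$ we trivially have $\int_E\log|f|\le a\log\delta$. On $E^c$ we use concavity of $\log$ (Jensen's inequality) together with Parseval:
\[
\int_{E^c}\log|f(t)|\,dt\;=\;\tfrac12\int_{E^c}\log|f(t)|^2\,dt\;\le\;\tfrac{1-a}{2}\log\!\Big(\tfrac{1}{1-a}\!\int_{E^c}|f(t)|^2\,dt\Big)\;\le\;\tfrac{1-a}{2}\log\tfrac{1}{1-a}.
\]
Adding these and using the lower bound from Jensen's formula yields the master inequality
\[
\log\eta\;\le\;a\log\delta+\tfrac{1-a}{2}\log\tfrac{1}{1-a}.
\]

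\textbf{Finishing.} Since $\sum_j|v_j|^2=1$ forces $\eta\le|v_1|\le 1$, and $C_\alpha=\big(\tfrac{1}{1-\alpha}\big)^{(1-\alpha)/2}\ge 1$, we have $\delta=(\eta/C_\alpha)^{1/\alpha}<1$, so $\log\delta<0$. A short one-variable computation (using $\delta\le(1/C_\alpha)^{1/\alpha}$ and the elementary inequality $\log(1-\alpha)<-\alpha$) shows the right-hand side of the master inequality is strictly decreasing in $a$ on $[\alpha,1)$. Hence if $a>\alpha$ we would get $\log\eta\le\alpha\log\delta+\tfrac{1-\alpha}{2}\log\tfrac{1}{1-\alpha}$; but substituting $\delta=(\eta/C_\alpha)^{1/\alpha}$ with this choice of $C_\alpha$ makes the right-hand side exactly $\log\eta$, a contradiction. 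Therefore $a\le\alpha$, i.e.\ $\Pr_t[|f(t)|\le\delta]\le\alpha$, as claimed. (If one is content with a worse absolute constant, the monotonicity check can be skipped: bounding $\tfrac{1-a}{2}\log\tfrac{1}{1-a}\le\tfrac1{2e}$ uniformly already gives $a\le\frac{\log(1/\eta)+1/(2e)}{\log(1/\delta)}\le\alpha$ once $\delta\le(\eta e^{-1/(2e)})^{1/\alpha}$.)

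\textbf{Main obstacle.} The genuinely non-routine ingredient is the lower bound $\int_0^1\log|f|\ge\log|v_1|$: this is exactly where the reduction to a polynomial and the subharmonicity of $\log|g|$ (Jensen's formula) are used, and it is what makes the final bound completely independent of the sparsity $k$ and bandlimit $F$. A naive substitute — e.g.\ $|f(t)|\le\|\wh f\|_1\le\sqrt k$ on $E^c$ — would reintroduce a spurious $\log k$ factor into $C_\alpha$. Everything else (Parseval, the concavity bound on $E^c$, the one-variable monotonicity) is elementary.
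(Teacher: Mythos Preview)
Your proof is correct and follows essentially the same approach as the paper: reduce to a polynomial $P(z)=\sum_j v_j z^{\xi_j-\xi_1}$, apply Jensen's formula to get $\int_0^1\log|f|\ge\log|v_1|\ge\log\eta$, split into the small set $E$ and its complement, use Parseval plus concavity of $\log$ on $E^c$, and combine to obtain the same master inequality $\log\eta\le a\log\delta+\tfrac{1-a}{2}\log\tfrac{1}{1-a}$ with the same constant $C_\alpha=(1-\alpha)^{-(1-\alpha)/2}$. Your write-up is slightly more careful than the paper's in two places---you note the integrability of $\log|f|$ on the circle, and you supply the monotonicity argument needed to pass from the master inequality to $a\le\alpha$ (the paper simply asserts this last step)---but the method is the same.
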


\begin{proof}
Let $P(z)=\sum_{j=1}^k v_j z^{\xi_j-\xi_1}$ and note  that $P(e^{2 \pi \bi t})=e^{2 \pi \bi \xi_1 t} f(t)$. Let $\alpha$ denote the fraction of $z$ on the unit disc with $|P(z)| \le \delta$.  Observe that $P(z)$ is a polynomial in $z$ and thus, we can apply Jensen's formula~\cite{ahlfors1979complex} to get
\begin{equation}\label{eq:subharmonic}
\log |P(0)| \le \frac{1}{2 \pi} \int_{0}^{2\pi} \log |P(e^{\bi \theta})| \mathrm{d} \theta.
\end{equation}
Now, observe that the left hand side is at least $\log \eta$. We will now upper bound the right hand side in terms of $\alpha$. To do this, observe by Parseval's theorem,  we have $\underset{z}{\E}[|P(z)|^2 ]=1$. This implies that 
\[
\underset{z}{\E}\big[|P(z)|^2 \big | |P(z)| \ge \delta \big] \le \frac{1}{1-\alpha}. 
\]
Using the fact that $\log (\cdot)$ is a concave function, we have by Jensen's inequality that 
\begin{eqnarray*}
\underset{z}{\E}\big[\log |P(z)| \ \big  | \ |P(z)| \ge \delta \big]  &=& \frac12 \cdot \underset{z}{\E}\big[\log |P(z)|^2\ \big | \ |P(z)| \ge \delta \big] \\ &\le&  \frac12 \cdot \log \big( \underset{z}{\E}\big[ |P(z)|^2 \ \big |\ |P(z)| \ge \delta \big]\big)  \\ 
&=& \frac12 \cdot \log \frac{1}{1-\alpha}. 
\end{eqnarray*} 
Since $\log |P(z)| \le \log \delta$ whenever $|P(z) | \le \delta$, we get that 
\[
\underset{z}{\E}\big[\log |P(z)| \big]  \leq (1-\alpha) \cdot  \frac12 \cdot \log \frac{1}{1-\alpha} + \alpha \log \delta. 
\]
Applying (\ref{eq:subharmonic}), we have 
$$
\log \eta \le \alpha \cdot \log \delta + (1-\alpha) \cdot \frac{1}{2} \log \frac{1}{1-\alpha}.
$$
Set $C_\alpha = (\frac{1}{1-\alpha})^{(1-\alpha)/2}$ and we get that $\delta \ge (\eta/C_{\alpha})^{1/\alpha}$ which finishes the proof. 
\ignore{

 Now observe that 
for $z$ on the unit disc with $|P(z)| \le \delta$, $\log |P(z)| \le \log \delta$.

Next we consider $z$ on the unit disc with $|P(z)| \ge \delta$. At first, we have $\E_z \big[ |P(z)|^2 \big] = 1$ from , which indicates $\underset{z}{\E}\bigg[|P(z)|^2 \bigg | |P(z)| \ge \delta \bigg] \le \frac{1}{1-\alpha}$ given $\alpha$ is the  fraction of $z$ with $|P(z)| \le \delta$. Since $\log |P(z)|=\frac{1}{2} \cdot \log |P(z)|^2$ and $\log$ is a concave function, $\underset{z}{\E}\bigg[\log_2 |P(z)| \bigg | |P(z)| \ge \delta \bigg] \le \frac{1}{2} \log \frac{1}{1-\alpha}$.

From all discussion above, we rewrite \eqref{eq:subharmonic} using $|P(0)| \ge \eta$:

We simplify it to $\log \eta/C_\alpha \le \alpha \cdot \log \delta$ for $C_{\alpha}=(\frac{1}{1-\alpha})^{(1-\alpha)/2}$, which indicates }
\end{proof}
We now use Lemma~\ref{lem:anticoncentration_period_signals} to finish the proof of Theorem~\ref{thm:periodic_adversarial_outlier}. 

\begin{proofof}{Theorem~\ref{thm:periodic_adversarial_outlier}}
As the algorithm iterates over $g$ in Step~4 of the algortihm, it is clear that iterates over $f$.  Now, consider any other $g \not =f$ (in this enumeration). Note that $g-f$ satisfies two properties: (i) 
$\|g-f\|_2 \le \|f\|_2+\|g\|_2 \le 2$; (ii) The first non-zero coefficient of $g-f$ is at least $\eta$ in magnitude. Applying Lemma~\ref{lem:anticoncentration_period_signals} to $\frac{g-f}{\|g-f\|_2}$ (whose first non-zero coefficient is at least $\eta/2$) with $\alpha=\delta/4$, we conclude that 
$
\Pr_{t \in [0,1)} [|g(t)-f(t)| \le 2\eps] \le  \delta/4, 
$  for our choice of $\eps$. 

Now, consider the case when $g = f$. Then, note that the set $s$ defined in Step~5 of the algorithm contains only points corrupted by the adversarial outliers, which has size at most $(0.5 - \delta/2)m$ with probability $1- \exp(-\delta^2m)$. On the other hand, consider any $g \not =f $. Then, with probability $1- \exp(-\delta^2m)$, for at least $(1-\delta/2)m$  of the points $x_1, \ldots, x_m$, $|f(x_i) - g(x_i)| > 2\eps$. Also, with probability $1- \exp(-\delta^2m)$, for at most $(1/2-\delta/2)m$ of the points corrupted by adversarial outliers, $|f(x_i) - y(x_i)| > \eps$. Thus, for at least $m/2$ of the points  $x_1, \ldots, x_m$, $|y(x_i) - g(x_i)| > \eps$.  This means that such a $g$ will not be output by the algorithm, except with probability $1-\exp(\delta^2 m)$.  

 Finally, we observe that the set of all functions over  which the algorithm enumerates 
is at most $\binom{2F}{k} \cdot (1/\eta)^k$. Taking a  union bound over all functions in this set, we get that the algorithm outputs the correct $f$ with probability at least $0.99$. 
\ignore{
As the algorithm iterates over $g$ in Step~4 of the algortihm, it is clear that iterates over $f$.  Now, consider any other $g \not =f$ (in this enumeration). Note that $g-f$ satisfies two properties: (i) 
$\|g-f\|_2 \le \|f\|_2+\|g\|_2 \le 2$; (ii) The first non-zero coefficient of $g-f$ is at least $\eta$ in magnitude. Applying Lemma~\ref{lem:anticoncentration_period_signals} to $\frac{g-f}{\|g-f\|_2}$ (whose first non-zero coefficient is at least $\eta/2$) with $\alpha=1/2-\rho-\delta$, we conclude that 
$
\Pr_{t \in [0,1)} [|g(t)-f(t)| \le 2\epsilon] \le 1/2 -\rho -\delta, 
$  for our choice of $\eps$. 

This means that with probability $1-\exp(-\delta^2 \cdot m)$, at least $(1/2- \delta/2)m$ points 


With probability $\exp(-\delta^2 \cdot m)$, at least $(1/2-\delta/2)m$ points have observations $\epsilon$-far away from $g(t)$. From a union bound over all possible signals $g$, we prove the correctness of our algorithm.}
\end{proofof}

\subsection{Necessity of a lower bound on the amplitudes of non-zero frequencies}
We now provide an example to show the lower bound on the amplitude of the first non-zero frequency is necessary in order to tolerate a constant fraction of outliers. Note that our algorithmic upper bound requires that non-zero amplitudes be integral multiple of some fixed $\eta$ (as opposed to just being larger than $\eta$). This is because we apply Lemma~\ref{lem:anticoncentration_period_signals} not just to the target function $f$ but rather $f-g$ where $g$ is some other potential function. If all non-zero amplitudes are integral multiples of $\eta$, then for any $f \not = g$, the first non-zero amplitude of $f-g$ is necessarily larger than $\eta$. 
\begin{claim}~\label{clm:counter-ex}
For any $\alpha<1$, there exist a constant $C'_{\alpha}$ and a $k$-Fourier-sparse function $f(t)=\sum_{j=1}^k v_j e^{2 \pi \bi \cdot \xi_j t}$ with integer frequencies $\xi_1<\cdots<\xi_k$ and amplitudes $\sum_j |v_j|^2=1$ such that 
$$
\Pr_{t \sim [0,1)} \big[ |f(t)| \le 2^{-C'_{\alpha} \cdot k} \big] \ge \alpha.
$$
\end{claim}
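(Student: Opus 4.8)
The plan is to exhibit the classical witness for the tightness of Jensen-type anti-concentration, namely a high power of $\tfrac{1+z}{2}$. Concretely, set
$P(z)=\bigl(\tfrac{1+z}{2}\bigr)^{k-1}=\sum_{j=0}^{k-1}\binom{k-1}{j}2^{-(k-1)}z^{j}$, which has exactly $k$ nonzero coefficients, supported on the integer frequencies $0,1,\dots,k-1$. On the unit circle, writing $z=e^{2\pi\bi t}$ one has $\tfrac{1+z}{2}=e^{\pi\bi t}\cos(\pi t)$, so $P(e^{2\pi\bi t})=e^{\pi\bi(k-1)t}\cos^{k-1}(\pi t)$, and in particular $|P(e^{2\pi\bi t})|=|\cos(\pi t)|^{k-1}$. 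Then I would define $f(t)=P(e^{2\pi\bi t})/\|P\|_{2}$, where $\|P\|_2^2=\int_0^1|P(e^{2\pi\bi t})|^2\,\md t$; by Parseval this equals $\sum_j|\wh{P}(j)|^2$, so after dividing by $\|P\|_2$ the coefficients $v_j=\binom{k-1}{j}2^{-(k-1)}/\|P\|_2$ satisfy $\sum_j|v_j|^2=1$, and $f$ is $k$-Fourier-sparse with integer frequencies $\xi_j=j-1$, exactly as the statement requires. Its modulus is $|f(t)|=|\cos(\pi t)|^{k-1}/\|P\|_{2}$.

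Next I would pin down the normalization. By the Wallis integral, $\|P\|_2^2=\int_0^1\cos^{2(k-1)}(\pi t)\,\md t=\binom{2(k-1)}{k-1}4^{-(k-1)}=\Theta(k^{-1/2})$, hence $\|P\|_2^{-1}=O(k^{1/4})$. (If one wants to avoid the exact value, the crude lower bound $\cos^{k-1}(\pi t)\gtrsim e^{-O(kt^2)}$ near $t=0$, integrated, already gives $\|P\|_2\gtrsim k^{-1/4}$, which is all that is used.)

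The core estimate is then elementary. Fix $\alpha\in(0,1)$ and put $\beta_\alpha:=\cos\!\bigl(\tfrac{(1-\alpha)\pi}{2}\bigr)=\sin\!\bigl(\tfrac{\alpha\pi}{2}\bigr)\in(0,1)$. For every $t$ in the central interval $I_\alpha=\bigl[\tfrac{1-\alpha}{2},\,1-\tfrac{1-\alpha}{2}\bigr]$, which has length $\alpha$, we have $|\cos(\pi t)|\le\beta_\alpha$, and therefore
\[
|f(t)|\ \le\ \beta_\alpha^{\,k-1}/\|P\|_2\ =\ O\!\bigl(k^{1/4}\bigr)\cdot\beta_\alpha^{\,k-1}.
\]
Since $\beta_\alpha<1$, for any constant $C'_\alpha$ with $0<C'_\alpha<-\log_2\beta_\alpha$ the right-hand side is at most $2^{-C'_\alpha k}$ once $k$ is large enough, the polynomial factor $k^{1/4}$ being absorbed into the exponential gain; for concreteness $C'_\alpha=-\tfrac12\log_2\sin(\tfrac{\alpha\pi}{2})$ works. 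Consequently $\Pr_{t\sim[0,1)}\bigl[|f(t)|\le 2^{-C'_\alpha k}\bigr]\ge|I_\alpha|=\alpha$, which is the claim. (For the finitely many small values of $k$ the statement is vacuous or met by shrinking $C'_\alpha$; only the asymptotic regime matters, and it is precisely what shows that in Lemma~\ref{lem:anticoncentration_period_signals} the exponential loss in the anti-concentration radius is unavoidable if one only assumes a lower bound on the first amplitude rather than integrality of all amplitudes.)

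I do not expect a genuine obstacle: this is the standard ``power of $\tfrac{1+z}{2}$'' example, and the only mild bookkeeping is controlling the normalization factor $\|P\|_2^{-1}=\Theta(k^{1/4})$, which is polynomial in $k$ and hence harmless against the factor $\beta_\alpha^{k}$. One could equally use $\bigl(\tfrac{1-z}{2}\bigr)^{k-1}$, or apply a frequency shift to place the spectrum symmetrically inside $[-F,F]$; neither changes $|f|$, so neither affects the argument.
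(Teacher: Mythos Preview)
Your proposal is correct and follows essentially the same approach as the paper: both use (a normalization of) a power of $1+e^{2\pi\bi t}$ and exploit that its modulus equals a power of $|\cos(\pi t)|$, which is exponentially small on a central interval of length $\alpha$ around $t=1/2$. Your version is in fact slightly more careful---using exponent $k-1$ to get exactly $k$ nonzero coefficients, and correctly writing $|\cos(\pi t)|$ rather than $\sin(\pi t)$---but the idea and the estimates are the same.
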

\begin{proof}
Let us consider $g(t)=(1+e^{2\pi \bi t})^k=\sum_{j=0}^k {k \choose j} e^{2 \pi \bi j t}$. Clearly, $g(\cdot)$ is $k$-Fourier sparse. Next, for any $t \in [0,1]$, we have
\begin{align*}
|g(t)| & =|1+e^{2\pi \bi t}|^k\\
& = |1+\cos(2 \pi t) - i \sin(2\pi t)|^k\\
& = \left( 1 + 2 \cos(2 \pi t) + \cos^2(2 \pi t) + \sin^2(2 \pi t) \right)^{k/2}\\
& = \left(2 + 2 \cos(2 \pi t) \right)^{k/2}.
\end{align*}
Now,   observe that $\sum_{\xi} |\widehat{g}(\xi)|^2 = \sum_{j=0}^k {k \choose j}
^2 = \sum_{j=0}^k {k \choose j} {k \choose k-j}={2k \choose k} \ge\frac{1}{2\sqrt{k}} 2^{2k}$. Define $f(t) = \frac{g(t)}{\sqrt{\sum_{j=0}^k {k \choose j}
^2}}$. Clearly $\widehat{f}$ is $k$-sparse and $\sum_{\xi} |\widehat{f}(\xi)|^2=1$. Finally,  we have 
\[
|f(t)| \le \sqrt{2} \cdot  k^{\frac14}  \cdot  \frac{|g(t)|}{2^k} =  \sqrt{2} \cdot  k^{\frac14}  \cdot\left( \frac{1 + \cos(2 \pi t)}{2} \right)^{k/2} =\sqrt{2} \cdot  k^{\frac14}  \cdot \sin^{k}(\pi t).  
\]
Observe that the event $\mathcal{E} = t \in [\frac{1-\alpha}{2},\frac{1+\alpha}{2}]$ happens with probability $\alpha$. Further, $|f(t)|$, conditioned on $\mathcal{E}$ is at most $2^{- C'_\alpha \cdot k}$, where $C'_{\alpha}= (1-\Theta(\alpha)^2)/2$. Since $\Pr[\mathcal{E}] =\alpha$, this finishes the proof. 
\end{proof}


\bibliographystyle{alpha}
\bibliography{wildnoise}


\appendix
\newpage
\section{Symmetrization and Gaussianization}\label{appd:sym_gau}
In this section, we prove Theorem~\ref{thm:sym_gauss}. 
Let us begin by recalling the theorem statement. 
\restate{thm:sym_gauss}

\begin{proof}
We first use the convexity of the $|\cdot |$ function to move out $\underset{x'}{E}$:
\begin{align*}
\underset{x}{\E}\left[\max_{\Lambda} \left|\sum_{j=1}^n f(\Lambda,x_j) - \underset{x'}{\E}[\sum_{j=1}^n f(\Lambda,x'_j)] \right|\right] & \le \underset{x}{\E}\left[\max_{\Lambda} \underset{x'}{\E} \left|\sum_{j=1}^n f(\Lambda,x_j) - \sum_{j=1}^n f(\Lambda,x'_j) \right|\right] \ \ \textrm{using convexity of} \ |\cdot|\\
& \le \underset{x,x'}{\E}\left[\max_{\Lambda} \left|\sum_{j=1}^n f(\Lambda,x_j) - \sum_{j=1}^n f(\Lambda,x'_j) \right|\right] \ \ \textrm{using concavity of} \ \max(\cdot)\\
& \le \sqrt{\frac{\pi}{2}} \underset{x,x'}{\E}\left[\max_{\Lambda} \left|\sum_{j=1}^n \big(f(\Lambda,x_j)  - f(\Lambda,x'_j) \big) \cdot\underset{g_j}{\E}|g_j| \right|\right] \ \ \textrm{uses} \E[|g_j|]=\sqrt{\frac{2}{\pi}} \\
& \le \sqrt{\frac{\pi}{2}} \underset{x,x'}{\E}\left[\max_{\Lambda} \underset{g}{\E} \left|\sum_{j=1}^n \big(f(\Lambda,x_j)  - f(\Lambda,x'_j) \big)\cdot |g_j| \right|\right] \\ & \ \ \textrm{using convexity of} \ |\cdot|\\
& \le \sqrt{\frac{\pi}{2}} \underset{x,x'}{\E} \underset{g}{\E} \left[\max_{\Lambda} \left|\sum_{j=1}^n \big(f(\Lambda,x_j)  - f(\Lambda,x'_j) \big)\cdot |g_j| \right|\right]\\
& \ \ \textrm{using concavity of} \ \max(\cdot)\\
& = \sqrt{\frac{\pi}{2}} \underset{g}{\E} \underset{x,x'}{\E}  \left[\max_{\Lambda} \left|\sum_{j=1}^n \big(f(\Lambda,x_j) -  f(\Lambda,x'_j) \big) \cdot g_j\right|\right]\\ & \textrm{by symmetry of } f(\Lambda,x_j)  - f(\Lambda,x'_j) \\ 
& \le \sqrt{\frac{\pi}{2}}\underset{x,x'}{\E}  \underset{g}{\E} \left[\max_{\Lambda} \left|\sum_{j=1}^n f(\Lambda,x_j) g_j\right| + \max_{\Lambda} \left|- \sum_{j=1}^n f(\Lambda,x'_j) g_j \right|\right] \\ & \  \textrm{using triangle inequality}\\
& \le \sqrt{2 \pi} \underset{x}{\E}  \underset{g}{\E} \left[\max_{\Lambda} \left|\sum_{j=1}^n f(\Lambda,x_j) g_j\right| \right].
\end{align*}
\end{proof}

\end{document}